\documentclass[
	acmsmall, 
	screen,
	nonacm,
	natbib=false
	]{acmart}

\usepackage{alphabeta}
\usepackage{csquotes}
\usepackage{environ}
\usepackage{mathtools}
\usepackage{stmaryrd}
\usepackage{xspace}

\DeclareUnicodeCharacter{1EC5}{\~{ê}}

\usepackage{ebproof}
	\NewDocumentCommand \drule {m} {\ensuremath{\mathsf{(#1)}}}
	\ebproofset{
	center=false,
	right label template=\small\drule{\inserttext},
	}
\NewEnviron{smallprooftree}[1][]{%
	\scalebox{.75}{\begin{prooftree}[#1]\BODY\end{prooftree}}%
}

\usepackage[inline]{enumitem}
\NewDocumentEnvironment{ienumerate}{}
	{\begin{enumerate*}[label=
		{\itshape (\roman*)} ]}
	{\end{enumerate*}}

\usepackage[
	datamodel=acmdatamodel,
	style=acmauthoryear,
	sortcites=false,
	useprefix=true,
	]{biblatex}
\AtEndPreamble{
	\newtheorem{observation}[theorem]{Observation}
	\newtheorem{fact}[theorem]{Fact}
}
\usepackage{thmtools}
\usepackage[capitalise]{cleveref}
\crefname{subsection}{§}{§}

\NewDocumentCommand \eg				{}		{\emph{e.g.}\@\xspace}
\NewDocumentCommand \ie				{}		{\emph{i.e.}\@\xspace}
\NewDocumentCommand \ifandonlyif	{}		{iff.\@\xspace}
\NewDocumentCommand \resp			{}		{resp.\@\xspace}
\NewDocumentCommand \wrt			{}		{wrt.\@\xspace}

\newlength{\widerelwidth}
\NewDocumentCommand \widerel		{om}	{ \IfValueTF{#1}
	{ \settowidth{\widerelwidth}{\ensuremath{#1}}
		\mathrel{\makebox[\widerelwidth][c]{\ensuremath{#2}}} }
	{ \quad #2 \quad } }

\NewDocumentCommand \eqdef			{}		{ \coloneqq }

\NewDocumentCommand \bnfni			{}		{ \widerel\ni }
\NewDocumentCommand \bnfeq			{}		{ \widerel\eqdef }
\NewDocumentCommand \bnfsep			{}		{ \mathrel\vert }

\NewDocumentCommand \enc			{m}		{ \ulcorner #1 \urcorner }

\NewDocumentCommand \set 			{smo}	{
	\IfBooleanT{#1}{\left} \{ #2
	\IfValueT{#3}{ \IfBooleanT{#1}{\middle} \mid #3 }
	\IfBooleanT{#1}{\right} \} }

\RenewDocumentCommand \setminus		{}		{ - }

\NewDocumentCommand \Semiring		{}		{ \mathbf{S} }
\NewDocumentCommand \srstyle		{m}		{ {\color{blue} #1} }
\NewDocumentCommand \+				{}		{ \mathbin{\srstyle{+}} }
\NewDocumentCommand \SUM			{}		{ \mathop{\srstyle{\sum}} }
\NewDocumentCommand \0				{}		{ \srstyle{0} }
\RenewDocumentCommand \*			{}		{ \mathbin{\srstyle{\times}} }
\NewDocumentCommand \1				{}		{ \srstyle{1} }
\NewDocumentCommand \LT				{}		{ \mathrel{\srstyle{\leq}} }
\NewDocumentCommand \GT				{}		{ \mathrel{\srstyle{\geq}} }
\NewDocumentCommand \LTstrict		{}		{ \mathrel{\srstyle{<}} }
\NewDocumentCommand \GTstrict		{}		{ \mathrel{\srstyle{>}} }

\NewDocumentCommand \Bool			{}		{ \mathbf{2} }
\NewDocumentCommand \Nat			{s}
	{ \mathbf{N} \IfBooleanT{#1}{^{\infty}} }
\NewDocumentCommand \Reals			{st+}	
	{ \mathbf{R} \IfBooleanT{#1}{^{\infty}} \IfBooleanT{#2}{_{+}} }
\NewDocumentCommand \Tropical		{}		{ \mathbf{T} }

\NewDocumentCommand \msets			{O{\Nat}m}	{ \mathcal M_{#1}(#2) }

\NewDocumentCommand \mset 			{smo}	{
	\IfBooleanT{#1}{\left} [ #2
	\IfValueT{#3}{ \IfBooleanT{#1}{\middle} \mid #3 }
	\IfBooleanT{#1}{\right} ] }

\NewDocumentCommand \ms				{m}		{ \mathsf{#1} }
\NewDocumentCommand	\gr				{mm}	{ #1:#2 }

\NewDocumentCommand \mscur			{m}		{ \ms{\tilde{#1}} }

\NewDocumentCommand \msstyle		{m}		{ {\color{purple} #1} }
\NewDocumentCommand \msplus			{}		{ \mathbin{\msstyle{+}} }
\NewDocumentCommand \mssum			{}		{ \mathop{\msstyle{\sum}} }
\NewDocumentCommand \mszero			{}		{ \msstyle{0} }
\NewDocumentCommand \mstimes		{}		{ \mathbin{\msstyle{\times}} }
\NewDocumentCommand \msone			{}		{ \msstyle{1} }
\NewDocumentCommand \mslt			{}		{ \mathrel{\msstyle{\leq}} }
\NewDocumentCommand \msltstrict		{}		{ \mathrel{\msstyle{<}} }

\NewDocumentCommand \Var			{}		{ \mathcal{V} }
\NewDocumentCommand \abs			{mo}
	{ \mathnormal{\lambda} #1 \IfValueT{#2}{^{#2}} }

\NewDocumentCommand \pureterms		{}		{ \mathbf{\Lambda} }

\NewDocumentCommand \typeabs		{}		{ \mathnormal{\Lambda} }
\DeclareMathOperator \fold					{fold}
\DeclareMathOperator \unfold				{unfold}
\NewDocumentCommand \fd				{m}		{ \underline{#1} }
\NewDocumentCommand \ufd			{m}		{ \overline{#1} }

\NewDocumentCommand \fmuterms 		{}	
	{ \mathbf{\Lambda}_{\fatype\rectype} }

\NewDocumentCommand \termstyle		{m}		{ \mathtt{#1} }
\NewDocumentCommand \trueterm		{}		{ \termstyle{t} }
\NewDocumentCommand \falseterm		{}		{ \termstyle{f} }

\NewDocumentCommand \Atoms			{}		{ \mathcal{A} }
\NewDocumentCommand	\lto			{}		{ \multimap }
\NewDocumentCommand \subtype		{}		{ \sqsubseteq }

\NewDocumentCommand \lm				{}		{ \triangleright }

\NewDocumentCommand	\bang			{O{a}}	{ \mathord{!}_{#1} }
\NewDocumentCommand \ito			{omm}	{ \bang[#1] #2 \lto #3 }
\NewDocumentCommand \fatype			{}		{ \forall }

\NewDocumentCommand \rectype		{}		{ \mu }
\NewDocumentCommand \bangindex		{m}		{ g_{#1} }

\NewDocumentCommand \fmutypes 		{O{\Tropical}}
	{ \mathsf{G}_{#1} }

\NewDocumentCommand \typefont		{m}		{ \mathrm{#1} }
\NewDocumentCommand \booltype		{}		{ \typefont{Bool} }
\NewDocumentCommand \nattype		{s}
	{ \typefont{Nat}^{\IfBooleanT{#1}{*}} }
\NewDocumentCommand \strtype		{sO{}}
	{ \typefont{Str}_{#2}^{\IfBooleanT{#1}{*}} }

\NewDocumentCommand \itypes 		{t!O{\Tropical}t+t-O{}}
	{ \IfBooleanT{#1}{\mathord{!}}\mathsf{I}_{\mathnormal{#2}}%
	^{ \IfBooleanT{#3}{+#5}\IfBooleanT{#4}{-#5} } }

\NewDocumentCommand \inlater		{m}		{ \in_{#1} }

\NewDocumentCommand \derivstyle		{m}		{ \mathcal{#1} }
\NewDocumentCommand \derivD			{}		{ \derivstyle{D} }
\NewDocumentCommand \derivE			{}		{ \derivstyle{E} }

\NewDocumentCommand \derives		{t+O{}}
	{ \blacktriangleright^{\IfBooleanT{#1}{+#2}} }

\NewDocumentCommand \subjexp		{m}		{ \mathrm{SE}_{#1} }

\NewDocumentCommand \fv 			{}		{ \mathrm{fv} }

\NewDocumentCommand \subst 			{O{x}m}	{ [#2/#1] }
\NewDocumentCommand \tsubst 		{O{\alpha}m}	{ [#2/#1] }

\DeclarePairedDelimiter \forgetFmu {|} {|}

\NewDocumentCommand \dist			{}		{ \mathbf{d} }
\NewDocumentCommand \pairings		{mm}	{ \Pi(#1,#2) }
\NewDocumentCommand \deletions		{m}		{ \mathrm{del}(#1) }
\NewDocumentCommand \insertions		{m}		{ \mathrm{ins}(#1) }
\NewDocumentCommand \totpairing		{m}		{ \overline{#1} }
\NewDocumentCommand \cost			{}		{ \mathbf{c} }

\NewDocumentCommand \trunc 			{sO{d}m}
	{ \IfBooleanT{#1}{\left}\lfloor #3 \IfBooleanT{#1}{\right}\rfloor_{#2} }

\NewDocumentCommand \bred			{s}
	{ \rightarrow\IfBooleanT{#1}{^*} }

\NewDocumentCommand \hred			{sO{}}
	{ \rightarrow_{\mathrm{h}#2}\IfBooleanT{#1}{^*} }

\NewDocumentCommand \HNF			{}		{ \mathcal{HNF} }

\NewDocumentCommand \HN				{}		{ \mathcal{HN} }

\NewDocumentCommand \HHN			{}		{ \mathcal{HHN} }

\NewDocumentCommand \rc				{}		{\textsc{rc}\xspace}
\NewDocumentCommand \rcR			{s}
	{ \IfBooleanTF{#1}{ \vec{\mathcal{R}} }{ \mathcal{R} } }
\NewDocumentCommand \rcS			{s}
	{ \IfBooleanTF{#1}{ \vec{\mathcal{S}} }{ \mathcal{S} } }

\NewDocumentCommand \redinterp		{O{\rho}mo}
	{ \llbracket #2 \rrbracket_{#1} \IfValueT{#3}{^{(#3)}} }
\NewDocumentCommand \redsubst 		{O{\alpha}m}	{ [#1 \mapsto #2] }

\title{Staying Productive Under the Palm Trees}
\subtitle{On Graded Coeffect Typing in the Tropical Semiring}

\author{Rémy Cerda}
\orcid{0000-0003-0731-6211}
\affiliation{%
	\institution{Università di Bologna}
	\country{Italy}
}
\email{Remy.Cerda@math.cnrs.fr}

\author{Ugo Dal Lago}
\orcid{0000-0001-9200-070X}
\affiliation{%
	\institution{Università di Bologna}
	\country{Italy}
}
\affiliation{%
	\institution{Centre INRIA d'Université Côte d'Azur}
	\country{France}
}
\email{ugo.dallago@unibo.it}

\keywords{}

\setcopyright{acmlicensed}
\copyrightyear{2018}
\acmYear{2018}
\acmDOI{XXXXXXX.XXXXXXX}

\acmJournal{JACM}
\acmVolume{37}
\acmNumber{4}
\acmArticle{111}
\acmMonth{8}

\begin{document}

\begin{abstract}
  We show that the tropical semiring over the natural numbers, when used as the grading space in graded coeffect typing, faithfully models the passage of time while simultaneously guaranteeing productivity of well-typed programs. A grade $a$, when assigned to a function parameter, indicates that the parameter is not necessarily available \emph{immediately}, but will become available \emph{after $a$ time steps}.
  We investigate this idea through two formal systems. We first introduce a graded type system featuring recursive and polymorphic types, and show that, in this setting, a natural restriction on recursive types is sufficient to guarantee productivity, while still allowing the definition of streams and recursive programs on them. In particular, we prove that Nakano's \emph{later} modality can be embedded directly into our system. We then show that tropical grading naturally suggests a novel form of intersection typing, in which the role traditionally played by sets or multisets of types is instead taken by \enquote{timed} sets, \ie, functions assigning to each type $A$ the earliest time, represented as a grade, from which the underlying term is available with type $A$. For the resulting system, we prove not only that productivity is guaranteed, but that it is also \emph{characterized}: the typable terms are exactly those with hereditarily head normal forms. Remarkably, the system is recursion-theoretically optimal, \ie, typability can be directly proved to be a $\Pi^0_2$ property in the arithmetical hierarchy.
\end{abstract}

\maketitle

%************************************************************************
\section{Introduction}

\newcommand{\UDL}[1]{\textcolor{red}{\textbf{UDL:} #1}}
\newcommand{\RC}[1]{\textcolor{orange}{\textbf{RC:} #1}}

% !TeX root = ../main.tex
% !TeX spellcheck = en_US

Monads~\autocite{Moggi.91} and comonads~\autocite{Uustalu.Vene.08} are well-established constructions in the semantics of functional programming languages with intensional and impure features. While monads provide a principled account of computational \emph{effects} such as nondeterminism, mutable state, and exceptions, comonads exhibit a dual structure that naturally captures \emph{coeffects}, including notions such as linear resource usage, information flow, and contextual dependencies. 

At the type level, the corresponding abstractions are provided by modal type constructors, which annotate types with information about the presence of effects or coeffects—for instance, indicating that a computation may perform an effect or that a function argument is associated with a particular contextual requirement. In many settings, however, merely recording the \emph{presence} of an effect or coeffect is not enough. Instead, one would like types to somehow capture their precise \emph{nature}, on an abstract representation of it. This has led to the development of effect typing disciplines~\autocite{Wadler.Thiemann.03} and, more generally, graded type systems~\autocite{Katsumata.14,Brunel.Gab.Maz.Zda.14,Ghica.Smi.14}. In the latter, the nature of an effect or coeffect is represented by an element of a grading structure, typically a monoid or a semiring, whose algebraic operations describe how effects and coeffects compose, implicitly refining the structure of the underlying (co)monad.

Graded typing has been extensively studied and has proven to be both remarkably general and highly flexible~\autocite{Atkey.18,Orchard.Lie.Ead.19,Abel.Ber.20,Moon.Ead.Orc.21,}.
Indeed, it admits a wide variety of instantiations, enabling the capture of program properties such as computational complexity, information flow, and many others. Our claim, however, is that the investigation of the computational properties of specific algebraic structures underlying graded typing is still in its infancy, and that the landscape of possible grading structures has only been explored superficially. This paper contributes to this line of research by providing an in-depth study of graded coeffect typing based on a particular semiring structure, namely the \emph{tropical semiring}. Throughout the paper, we consider the tropical semiring $\Tropical$ over the natural numbers extended with infinity. Its additive operation is given by the minimum of its two arguments, while multiplication coincides with ordinary addition. The induced order is the dual of the usual numerical order: infinity is the least element, whereas $0$ is the greatest. To the best of the authors' knowledge, this semiring has not previously been investigated in detail as a domain for graded typing, although being explicitly mentioned as an example~\cite{Brunel.Gab.Maz.Zda.14,Orchard.Lie.Ead.19}. Consequently, there is no obvious a priori interpretation of its grades.

The main outcome of our investigation, and the central take-home message of this paper, is that the graded comonad induced by the tropical semiring, together with the resulting graded type system, exhibits a deep and remarkably natural connection with the concepts of \emph{time} and of \emph{productivity}~\autocite{Dijkstra.80,Sijtsma.89,Kennaway.Klo.Sle.Vri.95}. This observation stems from two complementary facts. First, as we discuss in the following section, the induced graded comonad satisfies precisely the laws one would expect when interpreting a grade as the amount of time that must elapse before the corresponding value becomes available. Second, this temporal interpretation provides exactly the leverage needed to construct recursive and infinite types that are guarded in the sense of guarded recursion \autocite{Nakano.00}. In this regard, our perspective is closely related to the extensive body of work on guarded recursion that followed Nakano's pioneering contribution~\autocite{Birkedal.Mog.Sch.Sto.11,Atkey.McB.13,Moegelberg.14,Guatto.18,Jaber.Rib.21}. At the same time, tropical grades can be viewed as providing a novel and even more primitive and algebraically lightweight account of time. These connections are explored in detail in \cref{sec:relatedwork,sec:furtherdevelopments}.

Technically, the contributions of this paper are twofold. We briefly summarize them below. 
\begin{itemize}
\item 
	The first contribution of this paper, which should be viewed as an exploration of the expressive power of tropical graded typing, is the definition of a type system featuring polymorphism and guarded recursive types built directly from tropical grading. We show that, despite its conceptual simplicity, the resulting system is expressive enough to capture Nakano's \emph{later} modality and, more generally, to define elementary causal functions over streams, encoded using Scott numerals. On the meta-theoretical side, we prove—by adapting Tait's reducibility method—that the type system guarantees productivity, formalized as hereditary head normalization. This is in \cref{sec:guardingrecursion}.
\item 
	The second contribution, arguably the technically most substantial one, is the introduction of a novel variation on intersection types~\autocite{Coppo.Dez.78,Bucciarelli.Kes.Ven.17}, which we call \emph{tropical intersection types}. The key idea is to replace the classical notion of a set or multiset of types with that of a \enquote{timed set} of types, where each type is annotated with temporal information inherited from the tropical grading. We prove that the resulting type assignment system precisely characterizes hereditary head normal forms, the canonical notion of productivity for the pure λ-calculus. Although the system relies on guarded infinitary types, it is otherwise entirely finitary, in sharp contrast with the only previously known characterization of productivity via intersection types~\autocite{Vial.17,Vial.21}. We substantiate this claim by proving that our intersection type system is recursion-theoretically optimal: the typability problem is naturally seen to be at the $\Pi^0_2$ level of the arithmetical hierarchy, exactly matching the recursion-theoretic complexity of the target property of $\lambda$-terms, namely the existence of a hereditary head normal form. Tropical intersection types are the subject of \cref{sec:tropicalintersection}.
\end{itemize}
The technical core of this paper is preceded, in the next section, by a high-level introduction to graded coeffect typing and to the tropical semiring, which is written to be accessible to readers who are not specialists in the area. 

We believe that the main conceptual contribution of this work is the observation that time and productivity admit a simple and natural graded comonadic account, closely paralleling the now well-established treatment of complexity~\cite{Girard.Sce.Sco.92,DalLago.Gab.12,DalLago.Pet.14}. From a technical perspective, the only essential difference lies in the underlying semiring, which is now the tropical one. The main goal of this work is not the one of designing type disciplines maximizing expressiveness; in this respect, many of the various forms of guarded recursion proposed in the literature undoubtedly have greater strength. Nevertheless, we believe that the perspective introduced here has the potential to lead to highly expressive type systems for productivity, and in \cref{sec:furtherdevelopments} we discuss several possible directions along these lines.

\section{A Temporal Reading of Graded Coeffects}
\label{sec:birdseye}
% !TeX root = ../main.tex
% !TeX spellcheck = en_US

The introduction of linear logic some forty years ago~\autocite{Girard.87}, followed shortly thereafter by the development of linear type systems~\autocite{Lincoln.Mit.92,Benton.Bie.Pai.Hyl.92,Wadler.93,Barber.Plo.96}, marked a decisive step toward the design of type disciplines capable of controlling how algorithms and subroutines use their arguments. In particular, it became possible to distinguish functions that may use their arguments an \emph{arbitrary} number of times from functions that merely consume their arguments exactly \emph{once}. This distinction is made possible by the decomposition of the ordinary function type $A \to B$ into the linear type $!A \multimap B$. Here, the modality $!$ serves to mark values that may be copied or erased, while the connective $\multimap$ forms the space of linear functions. Terms of type $!A$, which we shall informally refer to as \emph{boxes} (following the terminology of proof nets~\autocite{Girard.87}), are precisely those terms that may be copied or discarded. For obvious reasons, such boxes may also be \emph{opened}, thereby revealing a value of type $A$, in this way allowing their content to be used. In other words, the modality $!$ has a \emph{comonoidal} behaviour, captured by the following four axioms: 
\[
!A\multimap\mathbf{1},
\qquad
!A\multimap A,
\qquad
!A\multimap !A\otimes !A,
\qquad
!A\multimap !!A,
\]

One may naturally ask whether an even finer level of control over copying is possible. For instance, can one distinguish between functions that require \emph{three} copies of their argument and functions that require \emph{eight} copies of it? As has been known since the pioneering work of \textcite{Girard.Sce.Sco.92} on bounded linear logic, this can indeed be achieved by indexing each occurrence of the modality $!$ with a natural number $n \in \mathbb{N}$ specifying the number of times the enclosed value may be copied. The corresponding comonadic laws then take the form
\begin{equation}\label{eq:natsemiring}
	!_0A\multimap\mathbf{1},
	\qquad
	!_1A\multimap A,
	\qquad
	!_{n+m}A\multimap !_nA\otimes !_mA,
	\qquad
	!_{nm}A\multimap !_n !_m A,
\end{equation}
and admit a straightforward operational interpretation.
A box of type $!_1A$ may be opened, yielding a value of type $A$, whereas a box of type $!_0A$ carries no usable resource and can only be discarded. A box of type $!_{n+m}A$ can be split into two boxes, one of type $!_nA$ and the other of type $!_mA$, thereby distributing the available copying budget between them. Finally, a box of type $!_{nm}A$ may be transformed into a box containing another box: the outer box can be copied $n$ times, while the inner box can in turn be copied $m$ times, and its content (this goes without saying) is the same as the one of the original box.

The resulting typing discipline has been extensively studied over the years and has been shown to be robust enough to be applicable to a variety of settings~\autocite{Hofmann.Sco.04,Mitchell.Ram.Sce.Tea.06,DalLago.Hof.10,DalLago.Giusti.22}. In particular, the indices attached to occurrences of the modality $!$ provide an upper bound\footnote{The fact that this is an upper bound rather than an exact estimate follows from the assumption that, whenever $n < m$, there is a canonical law $!_nA \multimap !_mA$ that discards the $m-n$ excess copies.} on the number of times the box's content will be used, capturing precisely the underlying computational complexity. This observation has led to the development of a variety of resource-sensitive type systems based on the principles outlined above, which in some cases even achieve forms of relative completeness~\autocite{DalLago.Gab.12,DalLago.Pet.14}. Notably, the same construction has also been used in systems for tracking program sensitivity, a quantity that plays a crucial role in the theory of so-called \emph{differential privacy} \autocite{Reed.Pierce.10,Gaboardi.Hae.Hsu.Nar.Pie.13}.

Viewed from this perspective, bounded linear logic appears as a refinement of ordinary linear logic obtained by enriching the exponential modality with \emph{natural-number indices}. It is therefore natural to ask whether natural numbers are the only possible choice of labels yielding a well-behaved calculus. The answer is \emph{negative}. Indeed, the role played by $\mathbf{N}$ can be assumed by an arbitrary \emph{preordered semiring}, and the resulting type system remains meaningful and well behaved, at least from the standpoint of type soundness. Moreover, this allows one to capture a wide range of coeffects, including information flow, network topology, and probabilistic usage. Categorically, this corresponds to replacing the ordinary comonad associated with the exponential modality by a \emph{graded comonad}, a generalization that exhibits striking parallels with the well-established theory of graded \emph{monads}. Following its independent introduction by \textcite{Brunel.Gab.Maz.Zda.14} and by \textcite{Ghica.Smi.14}, the theory of graded coeffect type systems has been developed extensively (see,\eg,~\textcite{GKOBU.16,Katsumata.18}), culminating in the design and implementation of the Granule programming language \autocite{Orchard.Lie.Ead.19}, which provides native support for graded modalities and resource-aware programming.

One might be tempted to think that the story ends here. We believe, however, that this is far from the case, and that much remains to be explored. Of course, it is natural to try to pursue ever greater levels of generality and abstraction or to try to combine grading with other type disciplines, and this is indeed happening in the last years~\autocite{CEEW.21,VMEO.25}. But how about, rather, investigating \emph{specific instances} of the graded coeffect framework, focusing on understanding how natural choices of the underlying semiring can lead to a better understanding and control of certain computational phenomena?

In this paper, we focus on a particular graded comonad obtained by taking as the underlying semiring the \emph{tropical semiring} of extended natural numbers, \ie the algebraic structure
$\Tropical \eqdef (\mathbf{N}\cup\{\infty\}, \min, \infty, +, 0,\geq)$.
Observe that the underlying order is the reversed one. This setting has not been studied in great detail in the past, despite fitting naturally within the axiomatic framework presented in~\textcite{Ghica.Smi.14,Brunel.Gab.Maz.Zda.14}. But is it making sense at all computationally? The graded comonad laws corresponding to Equations~\ref{eq:natsemiring} become:
\begin{equation}\label{eq:tropsemiring}
	!_\infty A\multimap\mathbf{1},
	\qquad
	!_0A\multimap A,
	\qquad
	!_{\min\{a,b\}}A\multimap !_aA\otimes !_bA,
	\qquad
	!_{a+b}A\multimap !_a !_b A,
\end{equation}
It is clear that the interpretation we previously provided for the semiring of natural numbers with addition and multiplication does not hold anymore: the $\min$ operation is idempotent, and thus there is no hope to keep track of the number of times the underlying object will be used in any meaningful way. Instead, a computational interpretation that naturally comes to mind is the following: suppose that $!_aA$ is the type of boxes whose content, instead of being \emph{duplicable} $a$ times, is freely duplicable, but \emph{whose content is available only after $a$ time units} from now. One can easily verify that the four comonadic laws make perfect sense in this way. For instance, the only primitive operation that can be applied to a box of type $!_\infty A$ is discarding, whereas opening is possible only if the content of the box is available \emph{now}, that is, if the box has type $!_0A$. Notice how the latter is the only type supporting the second comonadic law, given that the order is the reversed one. 

In other words, the tropical semiring seems to provide a faithful representation of the \emph{passage of time}, effectively modeling time itself as a coeffect. But why is this observation interesting? Are we simply looking at yet another example of a well-behaved graded comonad? In the remainder of this work, we show that this simple idea naturally leads to the design of type systems with remarkable dynamical properties and offering a simple and striking perspective on the notion of \emph{productivity}. The rest of this section is devoted to outlining these contributions, without going into technical details, 

\subsection{A New Kind of Guarded Recursive Types}

It is well known that recursive types can type many terms, and that their unrestricted use may compromise the normalization properties holding in simple and polymorphic type disciplines. For instance, every term of the untyped λ-calculus can be assigned the recursive type $\mu\alpha.\alpha\rightarrow\alpha$. It is equally well known that introducing a recursive type of the form $\mu\alpha.A$ preserves desirable normalization properties whenever all free occurrences of the type variable $\alpha$ in $A$ appear in (strictly) positive positions~\autocite{Mendler.91,Raffalli.93,Matthes.98}.

A crucial observation, made by \textcite{Nakano.00} more than twenty years ago, is that the use of a modality known as the \emph{later modality}, usually written $\lm$, provides an alternative route to recovering normalization properties. More precisely, one requires that whenever a recursive type $\mu\alpha.A$ is formed, every free occurrence of $\alpha$ in $A$ lies within the scope of (at least) one occurrence of $\lm$. Intuitively, the \emph{later} modality introduces a temporal delay in recursive definitions, ensuring that recursive unfoldings are guarded and therefore cannot be exploited in an unrestricted way. This guardedness condition turns out to be sufficient to guarantee a form of \emph{productivity}: even if reducing a term to a whole normal form does not necessarily terminate, reducing it to a (weak) head-normal form is terminating indeed, and this holds hereditarily on the argument to the head variable.

Now, what does the tropical semiring have to do with all of this? First of all, interpreting grades as modeling the passage of time naturally suggests a way of restricting recursive types so that they become \emph{comonadically} guarded, giving rise to what one may call tropically guarded recursive types. However, one must be careful in defining what guardedness stands for. Is it sufficient to require that, when forming $\mu\alpha.A$, the variable $\alpha$ occurs within the scope of \emph{any} graded modality $!_a$, similarly to what happens with the \emph{later} modality? The answer is negative: such a condition would be too permissive. What is required, instead, is that the grade $a$ of the modality acting as a guard to be \emph{strictly greater} than $0$, thereby ensuring that the comonadic law $!_a A \rightarrow A$ is \emph{not} available, effectively restricting to a form of recursive types where \emph{time is required to pass} before recursion can be applied .

Once this idea has been made precise, putting it into practice is straightforward and yields the type system $\fmutypes$ that we present in Section \ref{sec:guardingrecursion}. In this system, we admit impredicative polymorphic types in the sense of System F, as well as tropically guarded recursive types.
What we show about it is that the type system guarantees productivity, while still allowing simple definitions of causal functions over streams. Productivity is established as a corollary of subject reduction and of a reducibility-based proof in which only the portion of a type which is available \emph{now} (\ie, it does not lie in the scope of a $\bang[a]$ operator where $a>0$) is taken into account. Noticeably, $\fmutypes$ naturally embeds Nakano’s \emph{later} modality. Despite this, the resulting type system is not intended to compete with the state of the art in guarded recursive type theories~\autocite{Birkedal.Mog.Sch.Sto.11,Atkey.McB.13,Clouston.Biz.Gra.Bir.17} in terms of expressive power  (see Section \ref{sec:relatedwork} for a more thorough discussion on that). The system $\fmutypes$, rather, serves as a useful warm-up, paving the way toward the more technically substantial contribution of this work, namely tropical intersection types.

\subsection{Tropical Intersection Types}

Intersection types, introduced by \textcite{Coppo.Dez.78} more than forty years ago in the context of the pure λ-calculus, originated as a semantic tool for extending the expressive power of simple types in order to define denotational models, while at the same time characterizing crucial operational properties such as solvability and strong normalization. The key ingredient underlying these results is the property dual to the classical subject reduction, namely subject \emph{expansion}: types are preserved backward along reduction. Owing to their expressive power, intersection types have from the outset played a role quite different from that of other type disciplines, and only \emph{later} found applications in practical programming languages~\autocite{Castagna.Ghe.Lon.95,Bierman.Aba.Tor.14,Dunfield.14} and as a tool in decision procedures for higher-order model checking~\autocite{Kobayashi.13}.

But what makes intersection types so powerful? In the simply typed setting, functions are assigned types of the form $A \rightarrow B$, where both $A$ and $B$ are themselves simple types. In intersection type systems, instead, the type $A$ (and sometimes, although never in this paper, the type $B$ as well) is taken to be a \emph{set} or a \emph{multiset} of types. This reflects the fact that the function may use its input in different ways, so that different types may legitimately be assigned to it. This additional flexibility provides precisely the expressive power required to characterize the classes of terms mentioned above.

Is there any difference between using sets and multisets? The answer is positive. Although the class of terms characterized by the two approaches essentially remains the same (assuming the rest of the type system is left unchanged), the resulting systems exhibit substantially different features. When sets of types are used, which may be viewed as maps $\mathit{Types}\rightarrow\{0,1\}$, the number of times each variable is used is not recorded. Such systems are known as \emph{idempotent} intersection type systems, referring to the idempotency of the natural operation on sets, namely union. By contrast, when multisets of types are employed, which instead may be regarded as maps $\mathit{Types}\rightarrow\mathbf{N}$, usage multiplicities are explicitly tracked. This makes it possible to record how many times an argument is consumed, in the same spirit as bounded linear logic.

At this point, the attentive reader may already be wondering whether any intersection type discipline corresponds to the tropical semiring that is the focus of this paper. We shall answer this question fully in Section~\ref{sec:tropicalintersection}, but we can already anticipate the main ideas behind our construction. The underlying intuition is remarkably simple: instead of (multi)sets, we consider functions $\ms m$ of the form $\mathit{Types}\rightarrow\Tropical$, that we call \emph{$\Tropical$-sets}. The type of a function records the following information about its argument: for every type $A$, the value $\ms m(A)$ specifies the point in the future from which the input can be assigned type $A$. Moreover, types can naturally become infinitary, since $\Tropical$-sets can occur inside types, but all this can be controlled the same way we did for $\fmutypes$. This yields a type system, called $\itypes$, enjoying both subject reduction and subject expansion, much like classical intersection type systems. The natural question is then: which class of terms does such a system characterize?

In fact, the resulting system characterizes precisely the class of hereditarily head normalizing terms of the pure λ-calculus ($\HHN$ in the following), namely those $\lambda$-terms whose (possibly infinite) Böhm trees contain no occurrences of $\bot$. This result not only reveals a remarkably strong connection between the tropical semiring and productivity, but is also surprising in that it provides a substantially simpler characterization of this class of terms than the only one available in the literature \autocite{Vial.17,Vial.21}. In particular, our system does not rely on infinitary terms and imposes no auxiliary conditions on type derivations. The only crucial condition is that intersection types are allowed to be infinite, their infinitary nature being controlled following exactly the same approach adopted in $\fmutypes$, \ie by requiring that every infinite syntactic branch inside a type also diverges temporally, that is, it must cross infinitely many intersection nodes carrying strictly positive weight. A property of our characterization that we establish, and which provides a precise witness to its simplicity, is the following. As expected, typability in our system is undecidable, since it characterizes the property of $\lambda$-terms of having a Böhm tree that contains no $\bot$, which is itself undecidable \autocite{Tatsuta.08}; in fact, it is $\Pi^0_2$-complete. A natural question is therefore: where does typability itself lie in the arithmetical hierarchy? Clearly, it cannot belong to a level below $\Pi^0_2$. To conclude \cref{sec:tropicalintersection}, we show that typability can be proved to be in  $\Pi^0_2$ directly and without relying on completeness, thereby establishing the recursion-theoretic optimality of our type system.

We regard the proof establishing the correspondence between the class of typable terms and that of terms in $\HHN$ as the technically most substantial result of the paper. Indeed, while the proofs of subject reduction and subject expansion are relatively straightforward, and the reducibility technique developed for $\fmutypes$ can be adapted to the new calculus with comparatively little effort, the completeness proof is considerably more demanding and deserves some further comments.

The key challenge is to show that every term in $\HHN$ is typable. In analogous type systems characterizing the class of terms admitting a head normal form, the standard strategy is to prove that every head normal form is typable and then conclude, by subject expansion, that every term reducing to a head normal form is typable as well. In our setting, however, the situation is fundamentally different. The relevant \enquote{normal forms} are Böhm trees, which are infinitary objects and therefore do not belong to the syntax of our calculus. We thus have to proceed by building successive approximations.
Given a λ-term $M$, we prove that every term in the sequence of finite approximations of its Böhm tree is typable using \emph{approximated}, finite, intersection types. By the standard, \emph{finite} subject expansion (which is enough since finite approximations of the Böhm tree are obtained by finitely many β-reduction steps from $M$) we obtain a sequence of typing derivations for $M$,
which do all have the same shape guided by the syntax of $M$. Finally we show that all approximated intersection types in this sequence do converge to actual intersection types, giving rise to a typing derivation for $M$. To express this convergence, we equip the set of types with an ultrametric distance such that infinitary types form exactly the Cauchy completion of finite types, as can usually be done when dealing with coinductive syntax \autocite{Arnold.Niv.80,Barr.93}. 

However, it turns out that this metric cannot capture the convergence at play in the proof sketched above. Intuitively, the problem stems from the fact that multisets map types $\sigma$ to tropical grades $a$, meaning that a given term $M$ is given type $\sigma$ after $a$ units of time, but such an assignment forgets about occurrences of $M$ with type $\sigma$ after any $b > a$ units of time while these occurrences should also contribute to the distance between multisets. As a consequence, the completeness proof sketched above is performed in another typing system where the syntactic distance gives rise to the suitable convergence, and from which the desired types and derivations can be retrieved by a simple forgetful functor. It turns out that this auxiliary system is also interesting by itself, and a brief discussion about it in Section \ref{sec:furtherdevelopments}.

\section{Guarded Recursive Types via Tropical Grades}
\label{sec:guardingrecursion}
% !TeX root = ../main.tex
% !TeX spellcheck = en_US

In this section, we present the essential features of $\fmutypes[\Semiring]$, a λ-calculus equipped with graded coeffects as well as polymorphic and recursive types.
After having defined the system parameterized
by an arbitrary preordered semiring $\Semiring$~(\cref{sec:guard:sec:system}),
we showcase how specializing to tropical grades
allows to do some programming on streams~(\cref{sec:guard:sec:programming}),
to embed the \emph{later} modality from guarded recursion~(\cref{sec:guard:sec:later}),
and to guarantee the productivity of programs~(\cref{sec:guard:sec:soundness}).

%****************************************************************************
\subsection{A Graded Language of Polymorphic, Recursive Programs}
\label{sec:guard:sec:system}

\subsubsection{Preordered Semirings} 

The structure $\Semiring$ provides the grading annotations used throughout our type system, and is assumed to be a preordered semiring. We begin by formally defining this notion.

\begin{definition}
	A \emph{preordered semiring} is a tuple $(\Semiring,\mathord{\+},\0,\mathord{\*},\1,\mathord{\LT})$, where: 
	\begin{itemize} 
		\item $(\Semiring,\mathord{\+},\0)$ is a commutative monoid, while $(\Semiring,\mathord{\*},\1)$ is a monoid; 
		\item multiplication distributes over addition on both sides; 
		\item $\0$ is absorbing for multiplication, \ie, $a\*\0=\0=\0\*a$ for every $a\in\Semiring$; 
		\item $\LT$ is a preorder on $\Semiring$ compatible with the semiring operations.
	\end{itemize} 
	Throughout this paper we additionally assume that $\0$ is the least element of $\Semiring$ with respect to~$\LT$.
	(This implies the usual assumption that the semiring of grades
	is \emph{positive}, see \eg \textcite{Atkey.18,Moon.Ead.Orc.21}
	and \cref{def:resource-semiring} below.)
	\qed
\end{definition}

It is worth noting that $\LT$ is assumed to be a preorder rather than a partial order, thereby allowing for greater generality, as antisymmetry plays no essential role in the type systems we present.

Although most of what we say in the following definitions applies to any preordered semiring satisfying the above assumptions, our main motivating example is the tropical semiring 
$\Tropical \eqdef (\Nat\cup\{\infty\},\min,\infty,\mathord{+},0,\mathord{\geq})$. Notice that the order is reversed with respect to the usual ordering on $\Nat$: smaller grades correspond to greater elements of the order. This convention makes $\infty$ the least element, as required by our standing assumption. Other standard examples include the following:
\[ \Nat \eqdef (\Nat,+,0,\times,1,\mathord{\leq})\qquad\Bool \eqdef (\{\bot,\top\},\vee,\bot,\wedge,\top,\bot\leq\top). \]

\subsubsection{Types}

It is now time to introduce the type language. Given the range of features supported by the system, one might expect it to be rather involved; instead, it turns out to be surprisingly simple.

\begin{definition}[Types] \label{def:fmutypes}
	Let $\Atoms$ be a countable set of atoms. Types of $\fmutypes[\Semiring]$
	are defined inductively as follows:
	\[	\fmutypes[\Semiring] \bnfni A,B,\dots \bnfeq
	\alpha \bnfsep 
	\ito A B \bnfsep
	\fatype\alpha. A \bnfsep
	\rectype\alpha. A
	\qquad (\alpha \in \Atoms, a \in \Semiring) \]
	where the case $\rectype\alpha. A$ is restricted to
	atoms $\alpha$ and types $A$ such that
	$\bangindex{A}(\alpha) \LTstrict \1$,
	where $\bangindex{A}$ is the map $\Atoms \to \Semiring$
	defined inductively by:
	\begin{gather*}
		\bangindex{\alpha}(\alpha) \eqdef \1 \qquad
		\bangindex{\alpha}(\beta) \eqdef \0 \qquad
		\bangindex{\ito A B}(\alpha) \eqdef
		\srstyle{\max}(\bangindex{A}(\alpha) \* a, 
		\bangindex{B}(\alpha)) \\
		\bangindex{\fatype\alpha. A}(\beta) 
		\eqdef \bangindex{A}(\beta) \qquad
		\bangindex{\rectype\alpha. A}(\beta) \eqdef \bangindex{A}(\beta).
	\end{gather*}
	The operators $\fatype $ and $\mu$ act as binders for atoms, and types are taken modulo $\alpha$-equivalence.
	\qed
\end{definition}

In $\fmutypes[\Tropical]$, the condition becomes $\bangindex A(\alpha) > 0$
and the first line becomes
$\bangindex{\alpha}(\alpha) \eqdef 0$,
$\bangindex{\alpha}(\beta) \eqdef \infty$ and
$\bangindex{\ito A B}(\alpha) \eqdef
\min(\bangindex{A}(\alpha) + a, \bangindex{B}(\alpha))$.

The simultaneous presence of polymorphic and recursive types may appear somewhat redundant. In fact, the design of our type system is not motivated by minimality; rather, its primary purpose is to introduce our grading discipline and to demonstrate its compatibility with standard type-theoretic constructs. We also note that the guardedness condition applies only to recursive types, whereas polymorphic types are left unconstrained.

Although not strictly necessary, it is convenient to introduce a notion of \emph{grade subtyping}, relating types that share the same structural shape and differ only in their grading annotations according to the underlying preorder $\LT$:
\begin{gather*}
	\begin{prooftree}
		\infer0[\subtype_{ax}]{ α \subtype α }
	\end{prooftree}
	\quad
	\begin{prooftree}
		\hypo{ a \LT a' }
		\hypo{ A' \subtype A }
		\hypo{ B \subtype B' }
		\infer3[\subtype_{\lto}]{ \ito AB \subtype \ito[a']{A'}{B'} }
	\end{prooftree}
	\quad
	\begin{prooftree}
		\hypo{ A \subtype A' }
		\infer1[\subtype_{\fatype}]{ \fatype α.A \subtype \fatype α.A' }
	\end{prooftree}
	\quad
	\begin{prooftree}
		\hypo{ A \subtype A' }
		\infer1[\subtype_{\rectype}]{ \rectype α.A \subtype \rectype α.A' }
	\end{prooftree}
\end{gather*}

\subsubsection{Terms and Typing Rules}

The syntax of $\fmutypes[\Semiring]$ terms is entirely standard, once it is understood that recursive types are treated in an iso-recursive fashion and that parametric polymorphism is reflected at the term level through the usual type abstraction and type application constructs:

\begin{definition}[Terms]
	Let $\Var$ be a countable set of variables.
	Terms are defined inductively~by:
	\begin{multline*}
		\fmuterms \bnfni M,N,\dots \bnfeq
		x \bnfsep 
		\abs x [A]. M \bnfsep
		MN \bnfsep
		\typeabs\alpha. M \bnfsep
		MA \bnfsep
		\fold M \bnfsep
		\unfold M
		\\ (x \in \Var, \alpha \in \Atoms).
	\end{multline*}
	The abstractions $\abs x[A].M$ and $\typeabs\alpha.M$ act as binders for term variables and atoms, respectively, and terms, as expected, are taken modulo the corresponding notion of $\alpha$-equivalence.
	\qed
\end{definition}

We define substitution on terms and types as usual. The reduction relation $\bred$ on terms is the smallest congruence containing all instances of the following rules:
\begin{equation} \label{def:bred}
	(\abs x [A]. M)N \bred M \subst N \qquad
	(\typeabs\alpha. M)A \bred M \subst[\alpha] A \qquad
	\unfold (\fold M) \bred M.
\end{equation}
Noticeably, $\bred$ is confluent, something which can be proved through standard arguments.

We are now ready to present the typing rules of our system. Any departures from standard formulations will be highlighted as they arise. Typing judgments are of the shape $\Gamma \vdash M:A$,
where the context $\Gamma$ is now a map $\Var \to \Semiring \times \fmutypes[\Semiring]$ whose \emph{support} $\set{x \in \Var}[\pi_0(\Gamma(x)) \GTstrict \0]$
is a finite set $\set{x_1,\dots,x_n}$.
We will present $\Gamma$ in the usual way, as a list
$x_1:\bang[a_1]C_1, \dots, x_n:\bang[a_n]C_n$.
The operations of $\Semiring$ are extended to contexts as follows:
if $\pi_1 \circ \Gamma = \pi_1 \circ \Gamma'$
(which will be implicitly assumed whenever we use this construction)
then 
\[ (\Gamma \+ \Gamma') : x \mapsto ( \pi_0(\Gamma(x)) \+ \pi_0(\Gamma'(x)), \pi_1(\Gamma(x)) ) \qquad
(a \* \Gamma) : x \mapsto (a \* \pi_0(\Gamma(x)), \pi_1(\Gamma(x))) \]
\ie, the operations are just mapped on the first component.

Typing rules are as in Figure \ref{fig:fmutypestyping}, and follow \textcite{Ghica.Smi.14}.
\begin{figure}
\begin{gather*}
	\begin{prooftree}
		\infer0[ax]{ \Gamma, x:\bang[\1]A \vdash x:A }
	\end{prooftree}
	\qquad
	\begin{prooftree}
		\hypo{ \Gamma, x:\bang A \vdash M : B }
		\infer1[\lto_i]{ \Gamma \vdash \abs x[A]. M : \ito A B }
	\end{prooftree}
	\qquad
	\begin{prooftree}
		\hypo{ \Gamma \vdash M : \ito A B }
		\hypo{ \Gamma' \vdash N:A }
		\infer2[\lto_e]{ \Gamma \+ (\Gamma' \* a) \vdash MN:B  }
	\end{prooftree}
	\\[\topsep]
	\begin{prooftree}
		\hypo{ \Gamma \vdash M:A }
		\hypo{ A \subtype B}
		\infer2[sub]{ \Gamma \vdash M:B}
	\end{prooftree}
	\qquad
	\begin{prooftree}
		\hypo{ \Gamma \vdash M:B }
		\hypo{ \alpha \notin \fv(\Gamma) }
		\infer2[\fatype_i]{ \Gamma \vdash \typeabs\alpha.M : \fatype\alpha.B }
	\end{prooftree}
	\qquad
	\begin{prooftree}
		\hypo{ \Gamma \vdash M : \fatype\alpha.B }
		\infer1[\fatype_e]{ \Gamma \vdash MA : B \tsubst A }
	\end{prooftree}
	\\[\topsep]
	\begin{prooftree}
		\hypo{ \Gamma \vdash M : A \tsubst{\rectype \alpha.A} }
		\infer1[fold]{ \Gamma \vdash \fold M : \rectype\alpha.A }
	\end{prooftree}
	\qquad\qquad
	\begin{prooftree}
		\hypo{ \Gamma \vdash M : \rectype\alpha.A }
		\infer1[unfold]{ \Gamma \vdash \unfold M : A \tsubst{\rectype\alpha.A} }
	\end{prooftree}
\end{gather*}
\caption{Typing Rules for $\fmutypes[\Semiring]$}
\label{fig:fmutypestyping}
\end{figure}
The first three rules are standard for a graded coeffect system, while the other are perfectly in line with the usual ones for polymorphic and recursive types. In \drule{ax}, the variable is required to have grade $\1$, namely the unit of $\*$. Weakening is admissible. When using $\Tropical$ the rules \drule{ax} and \drule{\lto_e}
instantiate as follows:
\begin{gather*}
	\begin{prooftree}
		\infer0[ax]{ \Gamma, x:\bang[0]A \vdash x:A }
	\end{prooftree}
	\qquad
	\begin{prooftree}
		\hypo{ \Gamma \vdash M : \ito A B }
		\hypo{ \Gamma' \vdash N:A }
		\infer2[\lto_e]{ \min(\Gamma, \Gamma' + a) \vdash MN:B  }
	\end{prooftree}
\end{gather*}
Notice, in particular, that in $\drule{ax}$ the variable $x$ is assigned the grade~$0$, namely the \emph{maximum} grade.

%****************************************************************************
\subsection{Programming in $\fmutypes$}
\label{sec:guard:sec:programming}

In this subsection, we will play a little with the $\fmutypes[\Semiring]$ language, but we will do so by focusing in particular on its instance $\fmutypes$. Until the end of this Section, indeed, $\Semiring$ is the tropical semiring $\Tropical$.

Before starting, let us introduce some useful notations. First of all, for the sake of succinctness and clarity, we will write $\fd{M}$ for $\fold M$
and $\ufd{M}$ for $\unfold M$. Moreover, the type $\bang[0]A\lto B$ will be often written as $A\to B$. (One can easily see that the system can embed System $\mathsf{F}$ by taking $A\to B$ as the underlying function space.) In particular, it is possible to encode finite and inductive types through the traditional impredicative encoding. For example, if we take $\booltype$ to be $\fatype\alpha.\alpha\to\alpha\to\alpha$, we can write the truth values as follows:. 
\begin{gather*}
	\trueterm \eqdef \typeabs \alpha. \abs x[\alpha]. \abs y[\alpha]. x : \booltype \qquad
	\falseterm \eqdef \typeabs \alpha. \abs x[\alpha]. \abs y[\alpha]. y : \booltype 
\end{gather*}	
While grades do not interact with polymorphism, they play a crucial role in the construction of recursive types. If we take the two types $A \eqdef \rectype \beta.\ito[1]{\beta}{\alpha}$ and $F \eqdef \ito[1]{\alpha}{\alpha}$, we can use them to assign a type to Curry's fixed-point combinator, which here becomes the term:
\[
	Y \eqdef \typeabs \alpha. \abs f [F]. 
	(\abs x[A]. f(\ufd{x}x))
	(\fd{\abs x[A]. f(\ufd{x}x)})
\]
which we will use in an untyped fashion to lighten the notations. The term $Y$ receives the type $\fatype\alpha.(\ito[1]{\alpha}{\alpha})\to\alpha$ as witnessed by the following derivation (where some of the deduction steps have been contracted for the sake of conciseness):
\begin{equation} \label{eq:fmutypes-derivation-y}
	\begin{smallprooftree}[center]
	\infer0{ f:\bang[0]F \vdash f:F }
	\infer0{ x:\bang[0]A \vdash x:A }
	\infer1{ x:\bang[0]A \vdash \ufd x : \ito[1]{A}{\alpha} }
	\infer0{ x:\bang[0]A \vdash x:A }
	\infer2{ x:\bang[0]A \vdash \ufd xx : \alpha }
	\infer2{ f:\bang[0]F, x:\bang[1]A \vdash f(\ufd xx) : \alpha }
	\infer1{ f:\bang[0]F \vdash \abs x[A].f(\ufd xx): \ito[1]{A}{\alpha} }
	\hypo{\vdots}
	\infer1{ f:\bang[0]F \vdash \abs x[A].f(\ufd xx): \ito[1]{A}{\alpha} }
	\infer1{ f:\bang[0]F \vdash 
		\fd{\abs x[A].f(\ufd xx)}: A }
	%	\rewrite{\hspace*{-.5cm}\box\treebox}
	\infer2{ f:\bang[0]F \vdash 
		(\abs x[A].f(\ufd xx))(\fd{\abs x[A].f(\ufd xx)}):\alpha }
	\infer1{ \vdash : \abs f [F]. (\abs x[A]. f(\ufd{x}x))(\fd{\abs x[A]. f(\ufd{x}x)}):\ito[0]{(\ito[1]{\alpha}{\alpha})}{\alpha} }
	\infer1{\vdash Y: \fatype\alpha.\ito[0]{(\ito[1]{\alpha}{\alpha})}{\alpha}}
\end{smallprooftree}
\end{equation}
Observe how the type $A$ is well-formed precisely thanks to the use of grade $1$. This constraint is then implicitly expressed in the type we assign to $Y$, in which the grade $1$ cannot be reduced to $0$. By typing the fixed-point combinator in this way, we automatically avoid the possibility of applying it \eg to the identity $\abs x[A].x$, because the latter term can be assigned the type $A \to A$, but none of the types $\ito[a]{A}{A}$ with $a \geq 1$. The \enquote{type} $B \eqdef \rectype \beta.\ito[0]{\beta}{\alpha}$, instead, is not guarded, hence ill-formed. If it were legal, one could easily type
$\Omega\eqdef (\abs x[A].\ufd xx)(\fd{\abs x[A].\ufd xx})$, thus undermining any hope of establishing productivity properties.

\subsubsection{Stream Programming}

We should obviously concern ourselves with describing how it is possible, in this language, to represent coinductive data types and simple functions over them. In this regard, it should be noted that, for obvious reasons, the Church encoding cannot be used; instead, the Scott encoding \autocites[§~13.C]{Curry.Hin.Sel.72}{Mogensen.92} proves useful. Given a type $A$, we define the types $\strtype[A]$ and $\strtype*[A]$, respectively, as follows:
\begin{align*}
	\strtype[A] & \eqdef \rectype \beta. \fatype \gamma.
	\bang[0]\gamma \lto \bang[0](\bang[0]A \lto \bang[1]\beta \lto \gamma) 
	\lto \gamma
	& \text{(Possibly Infinite Streams)} \\
	\strtype*[A] & \eqdef \rectype \beta. \fatype \gamma.
	\bang[0](\bang[0]A \lto \bang[1]\beta \lto \gamma) \lto \gamma
	& \text{(Infinite Streams)}
\end{align*}
Both types are well-formed, \ie correctly guarded.

A polymorphic function outputting a constant infinite stream is the following one:
\[	\termstyle{CONST} \eqdef \typeabs \alpha. \abs x[α]. Y \left(
	\abs f[\strtype*[\alpha]]. \fd{ \typeabs\gamma.
	\abs c[\ito[0] \alpha { \ito[1] {\strtype*[\alpha]} \gamma }]. c x f }
	\right) : \fatype \alpha. \alpha \to \strtype*[\alpha]. \]

What can we say about functions acting on streams?
Can we, for example, type a $\termstyle{MAP}$ function 
which applies uniformly an input function of type $\alpha\to\alpha$ 
to the input stream in $\strtype*[\alpha]$? The answer is positive:
\begin{multline*}
	\termstyle{MAP} \eqdef \typeabs α. Y \left( 
		\abs m[\ito[1] {\strtype*[α]} {\strtype*[α]}]. 
		\abs f[\ito[0]{α}{α}]. \abs s[\strtype*[α]]. 
		\ufd{s}\, \strtype*[α] \left( 
			\abs h[α]. \abs t[\strtype*[α]]. 
			\fd{ \typeabs\gamma. 
			\abs c[\ito[0] \alpha { \ito[1] {\strtype*[\alpha]} \gamma }]. 
			c (fh) (mft) } \right) 
		\right) \\ 
	: \fatype α. (α \to α) \to \strtype*[α] 
	\to \strtype*[α].
\end{multline*}

A slight variation on the theme is the function $\termstyle{MAP_2}$ 
which instead takes as input a function of type $\alpha\to\alpha\to\alpha$ 
and applies it to the elements of the input stream 
at positions $2n$ and $2n+1$, 
then returning the result in the output stream at position $n$. 
Noticeably, $\termstyle{MAP_2}$ is not \emph{causal}: 
outputting the first $n$ elements of the output streams requires 
knowing the input streams at positions strictly greater than $n$. 
The fact that $\termstyle{MAP_2}$ cannot be represented in $\fmutypes$ 
as a function acting on $\strtype*[\alpha]$ to $\strtype*[\alpha]$ is entirely natural, 
since the type discipline enforces causality at the level of grades: 
a box of type $\bang[a]A$ cannot influence the contents 
of a box of type $\bang[b]B$ whenever $b<a$, 
\ie, one cannot \enquote{go back in time}. 
This means that, in terms of expressiveness, 
many among the so-called guarded recursion disciplines, 
which permit the definition of non-causal functions, 
are strictly more expressive than $\fmutypes$. 
On the other hand, the \emph{later} modality can be faithfully captured, 
as we will show shortly. 
There is, however, a way to represent non-causal stream functions in $\fmutypes$. 
The idea is to allow the guard $\bang[1]$ occurring in the type $\strtype*[A]$ to be 
replaced by an arbitrary grade $a>1$, yielding the following type:
\[
\strtype*[A](a) \eqdef \rectype \beta. \fatype \gamma.
\bang[0](\bang[0]A \lto \bang[a]\beta \lto \gamma) \lto \gamma
\]
This extension would make it possible for the domain and codomain of stream functions 
to have different stream types, thereby enabling the encoding of non-causal definitions 
without compromising causality at the level of grades. For instance, $\termstyle{MAP}_2$ 
could in this way be assigned the type $\fatype α. (α \to α\to α) \to\strtype*[\alpha](1)\to\strtype*[\alpha](2)$.

%****************************************************************************
\subsection{Embedding the Later Modality}
\label{sec:guard:sec:later}

We can show that the $\lm$ modality introduced by \textcite{Nakano.00}, 
and originally proposed as a simple mechanism for guarded recursion, 
admits a natural representation within $\fmutypes$. 
Nakano's $\lm$ modality can be applied to any type, 
whereas graded comonads, at least in the form considered here, 
implicitly appear to the left of a function type. 
To recover the full generality of the $\lm$ modality,
we resort to a form of double-negation translation:
\[
\lm A\eqdef (\bang[1] A\lto\bot)\to\bot
\]
where $\bot$ can be taken as, \eg, $\fatype\alpha.\alpha$. Informally, this type tells us that a term of type $\lm A$ has head normal form $\lambda x.xM$, where $x$ has type $\bang[1]A\lto\bot$. Consequently, $M$ becomes available at the \emph{next} time step. In this way, it is possible to capture all the constructions proposed by Nakano in his work, which we briefly discuss in the following.

First, the \emph{next} combinator can be written and typed as follows, exploiting the presence of subtyping:
\[
\termstyle{NEXT}\eqdef\abs x[A].\abs y[\bang[1]A\lto\bot].yx:A\to\lm A.
\]
Nakano's guarded fixed-point has type $(\lm A\to A)\to A$, and receives the same type in the encoding:
\[
\termstyle{GFIX}\eqdef\abs x[\lm A\lto A].YA(\abs y[A].x(\abs z[\ito[1]{A}{\bot}].zy)).
\]
Finally, the \emph{later} modality is functorial, and a combinator $\termstyle{FUN}$ having type $\lm(A\to B)\to\lm A\to \lm B$ can be faithfully represented in $\fmutypes$:
\[
\termstyle{FUN}\eqdef\abs x[\lm(A\to B)].\abs y[\lm A].\abs z[\ito[1]{B}{\bot}].x(\abs w[A\to B].y(\abs v[A].z(wv))).
\]

%****************************************************************************
\subsection{Typability Implies Productivity}
\label{sec:guard:sec:soundness}

In this section, we present the essential ingredients of the proof that the evaluation of $\fmutypes$ terms is a \emph{productive process}, formalized through the notion of hereditary head normal form. The proof is based on reducibility rather than denotational semantics. 

We begin by introducing the notion of \emph{head normal form}, which can be defined directly as follows.
\begin{definition} \label{def:head}
	The set $\HNF$ of terms in \emph{head normal form} is defined
	by induction, as follows:
	\begin{align*}
		x\in\HNF\\
		\abs x[A]. M\in\HNF&\text{ if } M\in\HNF&&&
		MN\in\HNF&\text{ if } M\in\HNF\text{ and } M \neq \abs x[A].P\\
		\typeabs α. M\in\HNF &\text{ if } M\in\HNF&&&
		MA\in\HNF &\text{ if } M\in\HNF\text{ and } M \neq \typeabs\alpha.P\\
		\fold M\in\HNF &\text{ if } M\in\HNF&&&
		\unfold M\in\HNF&\text{ if } M\in\HNF\text{ and } M \neq \fold P
	\end{align*}			
	The functional sub-relation $\hred$ of $\bred$ can be defined as usual, by restricting reduction to happen on the left-hand-side of applications. We say that a term \emph{has a head normal form} or that if is \emph{head normalizing} if it reduces to a head normal form through $\hred$. Head normalizing terms form the set $\HN$.
\end{definition}

%\begin{definition} \label{def:head}
%	We simultaneously define:
%	\begin{itemize}
%		\item the subset of $\fmuterms$ of all terms in \emph{head normal form}
%		(\HNF),
%		\item the functional sub-relation $\hred$ of $\bred$,
%		called \emph{head reduction},
%		relating terms not in \HNF to (arbitrary) terms,
%	\end{itemize}
%	by induction on terms:
%	\begin{itemize}
%		\item $x$ is in \HNF;
%		\item $\abs x[A]. M$ is in \HNF whenever $M$ is; \\
%		otherwise $M \hred M'$ and we set $\abs x[A].M \hred \abs x[A].M'$;
%		\item $MN$ is in \HNF whenever $M$ is in \HNF 
%		and $M \neq \abs x[A].P$; \\
%		otherwise, if $M = \abs x[A].P$ we set $MN \hred P\subst N$; \\
%		otherwise $M \hred M'$ and we set $MN \hred M'N$;
%		\item $\typeabs α. M$ is in \HNF whenever $M$ is; \\
%		otherwise $M \hred M'$ and we set $\typeabs α.M \hred \typeabs α.M'$;
%		\item $MA$ is in \HNF whenever $M$ is in \HNF 
%		and $M \neq \typeabs\alpha.P$; \\
%		otherwise, if $M = \typeabs\alpha.P$ we set $MA \hred P\tsubst A$; \\
%		otherwise $M \hred M'$ and we set $MA \hred M'A$;
%		\item $\fold M$ is in \HNF whenever $M$ is; \\
%		otherwise $M \hred M'$ and we set $\fold M \hred \fold M'$;
%		\item $\unfold M$ is in \HNF whenever $M$ is in \HNF 
%		and $M \neq \fold P$; \\
%		otherwise, if $M = \fold P$ we set $\unfold M \hred P$; \\
%		otherwise $M \hred M'$ and we set $\unfold M \hred \unfold M'$.
%	\end{itemize}
%	We say that a term \emph{has a \HNF} of is \emph{head normalising}
%	if it reduces to a \HNF through $\bred$.
%\end{definition}

Observe that an application $MN$ is in head normal form (or not) irrespective of $N$. If we insist on the latter to be itself a normal form, we obtain the following notion, which captures the informal notion of productivity.

\begin{definition} \label{def:productive}
	A term $M \in \fmuterms$ is \emph{hereditarily head normalizing}, if it head reduces to a term in $\HNF$ such that the arguments to the head variable themselves are, coinductively so. More rigorously, the set $\HHN$ of hereditarily head normalizing terms is the biggest set of terms such that whenever $M\in\HHN$ one of the following conditions hold:
	\begin{align*}
		M &\hred* x \\
		M &\hred* \abs x[A]. P
			\text{ and } P\in\HHN &&&
		M &\hred* PQ
			\text{ and } P\in\HHN
			\text{ and } P \neq \abs x[A]. P'
			\text{ and } Q\in\HHN\\
		M &\hred* \typeabs α.P
			\text{ and } P\in\HHN &&&
		M &\hred* PA
			\text{ and } P\in\HHN
			\text{ and } P \neq \typeabs α. P'\\
		M &\hred* \fold P
			\text{ and } P\in\HHN &&&
		M &\hred* \unfold P
			\text{ and } P\in\HHN
			\text{ and } P \neq \fold P'.	
	\end{align*}
\end{definition}

In λ-calculus jargon, this amounts to saying that
the Böhm tree of $M$ contains no undefined subterm $\bot$. In the rest of this section, our objective consists in giving some details about the proof of the following:

\begin{theorem} \label{thm:typable-productive}
	If $M \in \fmutypes$ is typable then $M\in\HHN$.
\end{theorem}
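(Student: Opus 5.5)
We plan a reducibility (Tait--Girard) argument in which reducibility candidates are indexed by a step-indexed notion of time. Only the part of a type that is available \emph{now}, \ie not under a $\bang[a]$ with $a>0$, is interpreted strictly. The treatment of time-shifted positions is what lets the interpretation of $\rectype\alpha.A$ be well-founded.

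As a preliminary we prove subject reduction for $\bred$ (hence for $\hred$). This is routine: we need a substitution lemma that respects the grade arithmetic $\min(\Gamma,\Gamma'+a)$, a type substitution lemma, and inversion lemmas modulo \drule{sub}. We also check that substitution preserves guardedness, \ie $\bangindex{A\tsubst{B}}(\beta)$ is bounded appropriately.

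Next we define, for each $n\in\Nat$, the interpretation of types at depth $n$. Candidates are sets $\rcR$ of terms closed under head expansion and containing the neutral terms $x\vec P$ whose arguments lie in $\HHN$ up to depth $n$. For these sets we use an approximation $\HHN_n$ of $\HHN$: the Böhm tree has no $\bot$ up to depth $n$, so $\HHN=\bigcap_n\HHN_n$. The interpretation is
\begin{itemize}
\item $\redinterp{\ito AB}[n] = \set{M}[\forall N\in\redinterp{A}[n-a].\ MN\in\redinterp{B}[n]]$, where $\redinterp{A}[k]$ for $k<0$ (or $a=\infty$) is the set of all terms, and $\fatype$ quantifies over candidates;
\item $\redinterp{\rectype\alpha.A}[n]=\set{M}[\unfold M\in\redinterp{A}[n]_{\alpha\mapsto\redinterp{\rectype\alpha.A}[\cdot]}]$.
\end{itemize}
This last clause is well defined by induction on $n$, and then on the type: since $\bangindex{A}(\alpha)>0$, every occurrence of $\alpha$ is only consulted at strictly smaller indices $n-a$. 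We must also ensure antitonicity in $n$, $\redinterp{A}[n+1]\subseteq\redinterp{A}[n]$, and that the interpretations are candidates. The main check is that $\redinterp{\ito AB}[n]$ forces $M$ into $\HHN_n$: we apply $M$ to a fresh variable, which lies in every candidate, to get something in $\redinterp{B}[n]$, and the depth bookkeeping on arguments fits because the argument is placed at depth $n-a\le n$. Subtyping is handled by monotonicity in grades: $a\ge a'$ gives $n-a\le n-a'$.

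Then we prove adequacy. If $\Gamma\vdash M:A$ with $\Gamma=x_i:\bang[a_i]C_i$, and $N_i\in\redinterp{C_i}[n-a_i]$, then $M\subst{\vec N}\in\redinterp{A}[n]$ for all $n$. This goes by induction on derivations. The key case is $\lto_e$, where the context $\min(\Gamma,\Gamma'+a)$ matches exactly the index shift $n-a$ in the arrow clause, using antitonicity. The axiom rule needs grade $0$. Fold and unfold follow from head-expansion closure and the unfolding of the fixed-point definition. Taking variables for the $N_i$ then gives $M\in\HHN_n$ for every $n$, hence $M\in\HHN$.

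We expect the main obstacle to be the simultaneous well-definedness of the interpretation of $\rectype$, which is nested with $\fatype$ quantification over candidates and with type substitution: we need $\redinterp{A\tsubst{B}}=\redinterp{A}_{\alpha\mapsto\redinterp{B}}$ across indices. Alongside this comes the precise formulation of candidates so that they are both closed under the index shift and imply membership in $\HHN_n$. If the index-based definition becomes awkward, we would fall back to a metric/contractive fixed-point formulation on decreasing families of candidates.
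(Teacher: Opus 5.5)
\paragraph{The gap: the arrow clause with shift $n-a$ does not yield candidates.}
The plan breaks at the claim that the interpretations are candidates, specifically the neutral-term clause for arrow types. The cause is that you run two clocks at different rates. The index $n$ drops by the grade $a$, whereas Böhm-tree depth grows by exactly one at each argument of a head variable.

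Since level-$n$ candidates must lie in $\HHN_n$, their neutral clause can at best admit $x\vec P$ with $\vec P\in\HHN_{n-1}$. To get $x\vec P\in\redinterp{\ito AB}[n]$ you need $x\vec P N\in\redinterp{B}[n]$ for every $N\in\redinterp{A}[n-a]$, hence $\redinterp{A}[n-a]\subseteq\HHN_{n-1}$. But for $a\ge 2$ this only gives $\HHN_{n-a}$, and for $n<a$ (in particular $a=\infty$) it gives nothing.

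Concretely, $x:\bang[0](\ito[\infty]{\alpha}{\alpha})\vdash x:\ito[\infty]{\alpha}{\alpha}$ is derivable. Yet $x\notin\redinterp{\ito[\infty]{\alpha}{\alpha}}[n]$ whenever $x\Omega\notin\HHN_n$. This has three consequences:
\begin{itemize}
\item your final step, ``taking variables for the $N_i$'', is unjustified;
\item arrow interpretations are not candidates, which also breaks the $\fatype_e$ case;
\item the fresh-variable argument fails for types such as $\ito[0]{(\ito[2]{\beta}{\beta})}{\gamma}$.
\end{itemize}
Your remark that the argument sits at depth $n-a\le n$ points the wrong way: the neutral clause needs arguments normalized \emph{deeper}, not shallower. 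The case $a=\infty$ shows more generally that a reducibility interpretation of this shape cannot control arguments of infinite weight. Their hereditary normalization has to come from the fact that they are themselves typed.

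\paragraph{Two ways to repair it.}
The first is to cap the shift at one: use $n-\min(a,1)$ both in the arrow clause and in the hypotheses on the $N_i$.
\begin{itemize}
\item Guardedness still makes the $\rectype$ clause well-founded, since every occurrence of $\alpha$ sits under a positive grade.
\item The $\lto_e$ case goes through by $\min(c+a,1)\le\min(c,1)+\min(a,1)$.
\item The neutral clause now holds, because $n-\min(a,1)\ge n-1$.
\end{itemize}
This gives a genuinely different, direct step-indexed proof of $\HHN$ that needs no subject reduction, at the cost of candidate families indexed by level for $\fatype$.

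The second is the paper's route, which has no step-indexing at all.
\begin{itemize}
\item Candidates are only required to lie inside $\HN$ and to contain all neutral head normal forms.
\item For $a>0$ the arrow is interpreted as $\pureterms\to\redinterp B$; this is what makes $\redinterp{\rectype\alpha.A}\eqdef\redinterp{A\tsubst{\rectype\alpha.A}}$ non-circular.
\item Adequacy is proved on forgotten pure terms, giving $\forgetFmu M\in\HN$ and hence $M\in\HN$.
\item The hereditary part then comes from subject reduction plus coinduction. The head normal form is typable, so the arguments of its head variable are typable, so typable terms form a post-fixed point of the operator defining $\HHN$.
\end{itemize}
This also avoids having candidates handle ill-typed terms such as $(\fold P)z$, which your direct use of typed terms forces on you. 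Note that you planned to prove subject reduction but never used it. In the paper, it is exactly what replaces your $\HHN_n$ bookkeeping.
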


The proof is by reducibility candidates,
as in \eg \textcite{Girard.Tay.Laf.89},
along the following sequence of definitions and lemmas.
We use the \enquote{forgetting trick}
from \textcite{Tait.75,Mitchell.86},
\ie the target of the reducibility interpretation
is the pure λ-calculus.

\begin{definition}[pure λ-calculus]
	The set of \emph{pure λ-terms} is defined inductively by:
	\[	\pureterms \bnfni T,U,\dots \bnfeq x \bnfsep \abs x.T \bnfsep TU
	\qquad (x \in \Var). \]
	It is equipped with the reduction $\bred$, defined to be
	the smallest congruence containing $(λx.T)U \bred T \subst U$.
	\emph{Head normal forms} and 
	the relation $\hred$ of \emph{head reduction}
	are defined like in the first three clauses of \cref{def:head}.
	The set of all head normalizing pure λ-terms
	is also denoted by $\HN$.
\end{definition}

\begin{definition} \label{def:rc}
	A \emph{reducibility candidate}, \rc in short,
	is a set $\rcR \subseteq \pureterms$ such that:
	\begin{enumerate}
		\item \label{def:rc:clause1} $\rcR \subseteq \HN$,
		\item \label{def:rc:clause2}
		if $T \in \rcR$ and $T \bred* T'$, then $T' \in \rcR$,
		\item \label{def:rc:clause3}
		if $T \neq \abs x.U$ and
		either $T$ is in $\HNF$ or $T \hred T' \in \rcR$,
		then $T \in \rcR$.
	\end{enumerate}
\end{definition}

Examples of \rc's are $\Var$, $\HN$,
or the set of all strongly normalizing pure λ-terms.

\begin{definition} \label{def:redinterp}
	Given an \emph{environment} $\rho$ mapping every atom to a \rc,
	the \emph{reducibility interpretation} $\redinterp A$
	of a type $A$ is the \rc defined inductively by:
	\begin{align*}
		\redinterp{ \alpha } & \eqdef \rho(\alpha), \\
		\redinterp{ \ito[0] A B} & \eqdef  \redinterp A \to \redinterp B, \\
		\redinterp{ \ito A B} & \eqdef  \pureterms \to \redinterp B,
		& \text{for $a > 0$,} \\
		\redinterp{ \fatype \alpha.A } & \eqdef \bigcap_{\rcR \text{ a \rc}} 
		\redinterp[\rho \redsubst \rcR] A, \\
		\redinterp{ \rectype \alpha.A }
		& \eqdef \redinterp {A \tsubst{\rectype α.A}},
	\end{align*}
	where $\rho \redsubst \rcR$ denotes the environment mapping
	$α$ to $\rcR$ and any other atom $\beta$ to $\rho(\beta)$,
	and $\rcR \to \rcS \eqdef
	\set{ T \in \pureterms }[ \forall U \in \rcR,\ TU \in \rcS ]$.
\end{definition}

The definition of $\redinterp{ \rectype \alpha.A }$ does not induce any loop,
because the condition on recursive types in \cref{def:fmutypes}
ensures that $α$ occurs in $A$ only under some $\bang[a]$ with $a > 0$,
hence $\redinterp{ \rectype \alpha.A }$ won't be computed again
when unfolding $\redinterp {A \tsubst{\rectype α.A}}$.
The fact that $\redinterp A$ is a \rc can be proved by induction.

\begin{definition}
	A forgetful translation $\forgetFmu{-} : \fmuterms \to \pureterms$
	is defined in the natural way:
	\begin{gather*}
		\forgetFmu x \eqdef x \qquad
		\forgetFmu{\abs x[A].M} \eqdef \abs x.\forgetFmu M \qquad
		\forgetFmu{MN} \eqdef \forgetFmu M \forgetFmu N \\
		\forgetFmu{\typeabs α.M} \eqdef \forgetFmu M \qquad
		\forgetFmu{MA} \eqdef \forgetFmu M \qquad
		\forgetFmu{\fold M} \eqdef \forgetFmu M \qquad
		\forgetFmu{\unfold M} \eqdef \forgetFmu M.
	\end{gather*}
\end{definition}

\begin{lemma} \label{lem:if-typable-then-in-rc}
	If $\Gamma \vdash M : A$,
	then $\forgetFmu M \in \redinterp[\rho_\HN] A$,
	where $\rho_\HN$ is the environment mapping all atoms to $\HN$.
\end{lemma}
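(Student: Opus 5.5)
We prove a strengthened, substitution-closed statement by induction on the derivation of $\Gamma \vdash M : A$. Fix an arbitrary environment $\rho$ of \rc's. Say a substitution $\sigma$ (mapping variables to pure λ-terms) is \emph{adequate} for $\Gamma = x_1:\bang[a_1]C_1,\dots,x_n:\bang[a_n]C_n$ if $\sigma(x_i) \in \redinterp{C_i}$ whenever $a_i = 0$; no constraint is imposed when $a_i > 0$. We claim that for every adequate $\sigma$, $\forgetFmu M\sigma \in \redinterp{A}$. The lemma then follows with $\rho=\rho_\HN$ and $\sigma$ the identity, since variables belong to every \rc by clause~\eqref{def:rc:clause3} (they are in $\HNF$ and are not abstractions).

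Before the induction we establish three auxiliary facts. The first is a substitution lemma, $\redinterp{B\tsubst A} = \redinterp[\rho\redsubst{\redinterp A}]{B}$. It is proved by induction on $B$, and for $\rectype$ by the same well-founded measure that justifies \cref{def:redinterp}: we count the recursion-unfoldings reachable without crossing a $\bang[a]$ with $a>0$, which is finite by the guardedness condition $\bangindex A(\alpha)>0$. The second is monotonicity under subtyping: if $A \subtype B$ then $\redinterp A \subseteq \redinterp B$. This is proved by induction on the subtyping derivation. The only interesting case is a grade change $a \LT a'$, i.e. $a \ge a'$ numerically. If $a = a' = 0$ or both are positive, the case is immediate. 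If $a > 0 = a'$, we use $\pureterms \to \rcS \subseteq \rcR\to\rcS$. The reverse situation cannot arise, because $a\ge a'$. The third is the standard closure property $(\star)$: if $T\subst U \in \rcS$ then $(\abs x.T)U \in \rcS$. This follows from clause~\eqref{def:rc:clause3}, since $(\abs x.T)U$ is not an abstraction and head-reduces to $T\subst U$.

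Now the induction itself. \drule{ax}: $x$ has grade $0$, so $\sigma(x)\in\redinterp A$ by adequacy. \drule{\lto_i}: given $U$, the extension $\sigma[x\mapsto U]$ is adequate for $\Gamma,x:\bang A$, with $U$ required to lie in $\redinterp A$ only when $a=0$. This matches exactly the two clauses of $\redinterp{\ito AB}$, and $(\star)$ concludes. \drule{\lto_e}: a $\sigma$ adequate for $\min(\Gamma,\Gamma'+a)$ is adequate for $\Gamma$, because grades there are at least as large numerically. If $a=0$, $\sigma$ is also adequate for $\Gamma'$, so the induction hypothesis puts the argument in $\redinterp A$. If $a>0$, the argument is unconstrained, and $\redinterp{\ito AB}=\pureterms\to\redinterp B$ accepts anything. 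This case is precisely where the tropical reading pays off. \drule{sub} uses monotonicity. \drule{\fatype_i} and \drule{\fatype_e} follow as in System~F: quantify over all $\rho\redsubst\rcR$, using $\alpha\notin\fv(\Gamma)$, and apply the substitution lemma. \drule{fold} and \drule{unfold} are immediate, since the forgetful translation erases these constructs and $\redinterp{\rectype\alpha.A}=\redinterp{A\tsubst{\rectype\alpha.A}}$ by definition.

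The main obstacle is the substitution lemma in the presence of recursive types. There, $\redinterp{\cdot}$ is not defined by structural induction, so the lemma must be proved by induction on the measure that makes \cref{def:redinterp} well-founded. The proof also has to check that substituting $A$ for $\alpha$ preserves the guardedness bookkeeping. If this becomes awkward, I would first define the interpretation explicitly as the structural interpretation of the finite "now-part" of the type, in which every guarded recursive occurrence below a positive grade is cut off, since that part is interpreted as $\pureterms$ anyway. Both the lemma and its proof would then reduce to ordinary structural induction on that finite part.
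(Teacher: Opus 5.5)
Your proposal is correct and is essentially the paper's proof: it uses the same strengthened statement (substitutions constrained only at grade-$0$ variables, \ie the paper's $\redinterp{C_i}[a_i]$), the same induction on the derivation with \drule{\lto_e} split on $a=0$ versus $a>0$, and the same System~F weakening and substitution facts for $\fatype$. You go slightly further than the paper, which silently omits the \drule{sub} case and calls the substitution fact an \enquote{immediate induction} despite the $\rectype$ case. Note, though, that the $\rectype$ case of your monotonicity lemma is not trivial either: after unfolding, it needs, besides the substitution lemma, the fact that $\redinterp{A'}$ does not depend on $\rho(\alpha)$ when $\alpha$ is guarded in $A'$.
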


The proof of this standard lemma is presented in the Supplementary Material.

\begin{lemma} \label{lem:forget-hn}
	If $\forgetFmu M \in \HN$ then $M \in \HN$.
\end{lemma}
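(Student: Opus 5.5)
The plan is to simulate head reduction of $M$ by head reduction of $\forgetFmu M$, and to use a lexicographic measure to account for the steps that the erasure makes invisible. Since $\forgetFmu M \in \HN$, the head reduction sequence of $\forgetFmu M$ reaches a head normal form after some finite number $n$ of steps, because $\hred$ is functional. I will prove that $M \in \HN$ by induction on the pair $(n, |M|)$, ordered lexicographically, where $|M|$ is the size of $M$ counted in term constructors. Note that $M$ is not assumed typed. So terms like $\unfold(\abs x[A].P)$ or $(\typeabs\alpha.P)N$ are simply head normal forms of $\fmuterms$ according to \cref{def:head}, and this only helps.

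Two routine facts come first. Erasure commutes with substitution: $\forgetFmu{M\subst N} = \forgetFmu M \subst{\forgetFmu N}$. Erasure is also invariant under type substitution: $\forgetFmu{M\tsubst[\alpha]A} = \forgetFmu M$. Both follow by structural induction.

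The core is the following \emph{simulation claim}, which I will prove by induction on the structure of $M$. If $M \notin \HNF$ and $M \hred M'$, then exactly one of two things happens.
\begin{itemize}
\item The contracted head redex is administrative, i.e. $(\typeabs\alpha.P)A \bred P\tsubst[\alpha]A$ or $\unfold(\fold P)\bred P$. Then $\forgetFmu{M'} = \forgetFmu M$ and $M'$ is strictly smaller than $M$, because the administrative rules delete constructors without duplicating anything.
\item The contracted redex is a β-redex $(\abs x[A].P)N$. Then $\forgetFmu M \hred \forgetFmu{M'}$ in the pure λ-calculus.
\end{itemize}

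Given the claim, the lemma follows. If $M \in \HNF$ we are done. Otherwise $M \hred M'$. In the administrative case, $M'$ has the same erasure, hence the same $n$, and smaller size. In the β case, $\forgetFmu{M'}$ is the head reduct of $\forgetFmu M$, so it reaches a head normal form in $n-1$ steps. In both cases the induction hypothesis gives $M' \in \HN$, hence $M \in \HN$.

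The main obstacle is the β case of the simulation claim. One must check that the β-redex chosen by head reduction in $M$ is located, after erasure, exactly at the head redex of $\forgetFmu M$. The difficulty is that erasure may \emph{create} head redexes. For instance, $((\typeabs\alpha.\abs x.P)A)N$ erases to a β-redex even though its head step in $M$ is administrative. I would handle this by following the spine of $M$ in the inductive proof. Constructors $\typeabs$, type application, $\fold$ and $\unfold$ erase to nothing, and λ-abstractions are preserved. For an application $PQ$ on the spine, we know $P \notin \HNF$ whenever the head step lies inside $P$. So I must show that a non-$\HNF$ $P$ whose head step is a β-step does not erase to an abstraction. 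I expect to prove this auxiliary fact simultaneously in the same induction. If $P$'s head step is β, then $P$'s spine bottoms out in an application whose left part is a λ. Every constructor above it on the spine is either an application, or an erasable constructor wrapping a non-$\HNF$ subterm, which is never a λ at top level after erasure. Hence $\forgetFmu P$ is an application, so $\forgetFmu{PQ}$ performs its head step inside $\forgetFmu P$, as required. The remaining cases of the claim are direct.
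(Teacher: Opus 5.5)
\textbf{The $\beta$ case of your simulation claim fails for the untyped terms you explicitly allow.} Your lexicographic induction on $(n,|M|)$ is sound, and so is the administrative case. The problem is only the $\beta$ case.

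Take $M \eqdef (\typeabs\alpha.\abs x.(\abs y.y)z)\,R$. By \cref{def:head}, its left part is not a $\lambda$-abstraction and is not in $\HNF$. So the head step of $M$ is the inner $\beta$-step, giving $M' = (\typeabs\alpha.\abs x.z)\,R$. Yet $\forgetFmu M = (\abs x.(\abs y.y)z)\forgetFmu R$ head-reduces to $(\abs y.y)z$, not to $\forgetFmu{M'} = (\abs x.z)\forgetFmu R$.

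Your auxiliary fact breaks on exactly this kind of term. Below an erasable constructor, the spine may contain a $\lambda$-abstraction, and your enumeration (applications and erasable constructors) omits this case. Consequently $\typeabs\alpha.\abs x.S$, $\fold(\abs x.S)$, $(\abs x.S)A$ and $\unfold(\abs x.S)$, with $S \notin \HNF$, are all non-$\HNF$, non-$\lambda$ terms with a $\beta$ head step whose erasure is an abstraction. Erasure thus exposes a $\beta$-redex \emph{above} the head redex of $M$, and the pure head reduction fires that one first.

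The paper takes the same route in contrapositive form: an infinite head reduction of $M$ contains infinitely many $\beta$-steps, each of which erases to a head step of $\forgetFmu M$. It relies on the same simulation fact. That fact is safe where the lemma is actually used, namely for typable terms in the proof of \cref{thm:typable-productive}. You can repair your proof in either of two ways.

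\textbf{Repair (i): restrict to typable terms.} Assume $M$ typable and keep typability along head reduction by subject reduction.
\begin{itemize}
\item The function part of an application then has an arrow type. Since $\subtype$ only relates types with the same outermost constructor, that function part is neither $\typeabs\alpha.P$ nor $\fold P$.
\item A type application or $\unfold$ whose erasure is an abstraction has an administrative head step. This goes by induction: its immediate subterm has a $\fatype$ (resp.\ $\rectype$) type and erases to an abstraction. So that subterm is either the matching introduction form, or again such a term.
\end{itemize}

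\textbf{Repair (ii): keep arbitrary $M$, but prove something different in the $\beta$ case.}
\begin{itemize}
\item Write $M = C[(\abs x.P)N]$ with $C$ the head context. Then $D \eqdef \forgetFmu C$ is a pure context built only from $\abs y.[\cdot]$ and $[\cdot]U$.
\item Let $T$ be a $\beta$-redex with contractum $T'$. The head reductions of $D[T]$ and $D[T']$ both start by contracting the same redexes of $D$ itself, independently of what fills the hole. They reach $E[T\sigma]$ and $E[T'\sigma]$ respectively, with $E$ a head context, and then $E[T\sigma] \hred E[T'\sigma]$.
\item Hence $\forgetFmu{M'}$ reaches a head normal form in exactly $n-1$ steps, even though it is not the head reduct of $\forgetFmu M$. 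That is all your induction needs.
\end{itemize}
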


\begin{proof}
	Write $M \hred[\beta] N$, $M \hred[\fatype] N$ and $M \hred[\rectype] N$
	when a term $M$ head reduces to $N$ and the fired head redex
	corresponds to respectively the first (application of a λ-abstraction),
	second (application of $\typeabs$-abstraction),
	and third (unfolding of a fold) case from \cref{def:bred}.
	Observe the following facts:
	\begin{enumerate}
		\item If $M \hred[\fatype] N$ or $M \hred[\rectype] N$
		then $N$ contains one redex less than $M$.
		\item If $M \hred[\beta] N$ then $\forgetFmu M \hred \forgetFmu N$.
		If $M \hred[\fatype] N$ or $M \hred[\rectype] N$
		then $\forgetFmu M = \forgetFmu N$.
	\end{enumerate}
	Now we proceed by contraposition:
	suppose $M$ is not head normalizing,
	then there is an infinite sequence of head reductions starting from $M$.
	By the first above fact, infinitely many of these head reductions
	are $\hred[\beta]$ reductions.
	By the second above fact, there is an infinite sequence of head
	reductions starting from $\forgetFmu M$.
	By a standard result \autocite[Theorem~8.3.11]{Barendregt},
	$\forgetFmu M$ is not head normalizing.
\end{proof}

\begin{proof}[Proof of \cref{thm:typable-productive}]
	\label{thm:typable-productive:proof}
	Let $M$ be a typable term,
	\ie there is a derivation $\Gamma \vdash M : A$.
	By \cref{lem:if-typable-then-in-rc}
	and \cref{def:rc}, \cref{def:rc:clause1},
	$\forgetFmu M \in \redinterp[\rho_\HN] A \subseteq \HN$.
	By \cref{lem:forget-hn}, $M$ is head normalizing.
	By subject reduction
	the head normal form of $M$ is typable,
	and in particular all its direct subterms
	(in the sense of \cref{def:productive}) are typable,
	hence productive by coinduction.
	Thus $M$ is productive.
\end{proof}

\section{Characterizing Productive λ-Terms \emph{via} Tropical Intersection Types}
\label{sec:tropicalintersection}
% !TeX root = ../main.tex
% !TeX spellcheck = en_US

In this section, we present the technically most significant result of the paper: the introduction of a novel intersection type system (\cref{sec:intersec:sec:thesystem}) allowing for a characterization of the hereditarily head normalizing pure λ-terms (\cref{sec:intersec:sec:soundcomplete,sec:intersec:sec:complete-proof}). The system turns out to be optimal, in the sense that typing in our system is a $\Pi_0^2$ predicate, just as the target set $\HHN$ (\cref{sec:intersec:sec:complexity}).

%****************************************************************************
\subsection{The Typing System}
\label{sec:intersec:sec:thesystem}

Let us consider again a generic preordered semiring,
$(\Semiring, \mathord{\+}, \0, \mathord{\*}, \1,  \mathord{\LT})$. The notions of sets and multisets can be generalized to functions having a semiring as their codomain, as follows:

\begin{definition}[$\Semiring$-Sets]
	For any set $X$, an \emph{$\Semiring$-set} on $X$
	is a map $\ms m : X \to \Semiring$.
	We denote by $\msets[\Semiring]{X}$ the set of all such maps.
	Given finite families $\{x_i\}_{i \in I}$ in $X$ and
	$\{a_i\}_{i \in I}$ in $\Semiring$,
	we write $[\gr{x_i}{a_i}]_{i \in I}$
	for the $\Semiring$-set $\ms m$ such that
	$\ms m(x) \eqdef \SUM_{i \in I, x_i = x} a_i$.
	In the following we always try to keep the $x_i$ pairwise distinct,
	so that the definition boils down to $\ms m(x_i) \eqdef a_i$,
	and $\ms m(x) \eqdef \0$ for $x \notin \set{x_i}[i \in I]$.
	When $I = \set{1,\dots,n}$ we may also
	denote $\ms m$ by $[\gr{x_1}{a_1}, \dots, \gr{x_n}{a_n}]$.
	In particular $[]$ is the \emph{empty $\Semiring$-set}.
	\qed
\end{definition}

In particular the $\Nat$-sets are the usual multisets of elements of $X$,
while the $\Bool$-sets are the subsets of $X$.
The operations of $\Semiring$ induce an internal sum
and an external product on $\Semiring$-sets:
\[	(\ms m \+ \ms n)(x) \eqdef \ms m(x) \+ \ms n(x) \qquad
	(a \* \ms m(x)) \eqdef a \* \ms m(x). \]

A central notion is the following one, which generalizes ordinary intersection types:
\begin{definition}[$\Semiring$-Intersection Types]\label{def:itypes}
	Let $\Atoms$ be a countable set of atoms.
	The sets $\itypes[\Semiring]$ of \emph{$\Semiring$-intersection types}
	and $\itypes![\Semiring] \subset \msets[\Semiring]{\itypes[\Semiring]}$
	of \emph{$\Semiring$-intersection multi-types}
	are defined by mutual induction \emph{and coinduction}
	by the following system of formation rules:
	% Previous version of the rules, see commit 
	% 9cc090d458b5ef2b679b14b0e9b60b9979a5c231
	\begin{gather*}
		\begin{prooftree}
		\hypo{ α \in \Atoms }
		\infer1[\Atoms]{ α \in \itypes[\Semiring] }
		\end{prooftree}
	\qquad\qquad
		\begin{prooftree}
		\hypo{ \ms{m} \in \itypes![\Semiring] }
		\hypo{ \tau \in \itypes[\Semiring] }
		\infer2[\lto]{ \ms{m} \lto \tau \in \itypes[\Semiring] }
		\end{prooftree}
	\\
		\begin{prooftree}[center]
		\hypo{ \left[ \gr {(\sigma_i \inlater{a_i} \itypes[\Semiring])} 
			{a_i} \right]_{i \in I} }
		\infer1[!]{ [\gr{\sigma_i}{a_i}]_{i \in I} \in \itypes![\Semiring] }
		\end{prooftree}
	\qquad
		\begin{prooftree}[center]
		\hypo{ \sigma \in \itypes[\Semiring] }
		\hypo{ a \LTstrict \1 }
		\infer[double]2[coind]
			{ \sigma \inlater{a} \itypes[\Semiring] }
		\end{prooftree}
	\qquad
		\begin{prooftree}[center]
		\hypo{ \sigma \in \itypes[\Semiring] }
		\infer1[ind]{ \sigma \inlater{a} \itypes[\Semiring] }
		\end{prooftree}
	\end{gather*}
	where single-bar rules are treated inductively
	whereas double-bar rules are treated coinductively.
	Notice that the formation rule \drule{!} has an $\Semiring$-set
	of hypotheses. A type or multi-type is \emph{finite} whenever it can be formed
	without the rule \drule{coind}.
%	Recall that, by the assumptions we made above,
%	\begin{ienumerate}
%	\item the multisets involved are finitary, \ie
%		for all $b \in \Semiring \setminus \set \0$
%		the set $\set{ a_i }[ b \LT a_i ]$ is finite,
%	\item all $\sigma_i$ are pairwise distinct.
%	\end{ienumerate}
\qed
\end{definition}

An example of a $\Tropical$-intersection type is the type $\sigma \eqdef [\gr{\sigma}{1}] \lto \alpha$ (rigorously, $\sigma$ is the greatest fixed point of this equation): it is well-formed because each coinductive call occurs inside a $\Tropical$-set and in an element with grade $1$ (recall that $a \LTstrict \1$ becomes $a > 0$ in $\Tropical$). Instead, $\sigma \eqdef [\gr{\sigma}{0}] \lto \alpha$ would not define a type in $\itypes$.
Notice that $\itypes[\Nat]$ and $\itypes[\Bool]$ are just the usual, finite idempotent and non-idempotent intersection types, as coinduction is permitted only in elements of (multi)sets bearing grade $0$, \ie in the elements that are \emph{not} in the (multi)set.

Typing judgments are of the shape $\Gamma \vdash M:\sigma$,
with auxiliary judgments of the shape $\Gamma \vdash M:\ms m$.
The context $\Gamma$ is a map $\Var \to \itypes![\Semiring]$
whose support $\set{x \in \Var}[\Gamma(x) \GTstrict \0]$
is a finite set $\set{x_1,\dots,x_n}$.
We will present $\Gamma$ in the usual way, as a list
$x_1:\ms{m}_1, \dots, x_n:\ms{m}_n$.
The operations induced on $\msets[\Semiring]{\itypes[\Semiring]}$
by the operations of $\Semiring$ are again mapped on contexts:
\[ (\Gamma \+ \Gamma') : x \mapsto \Gamma(x) \+ \Gamma'(x) \qquad
	(a \* \Gamma) : x \mapsto a \* \Gamma(x). \]

\begin{definition} \label{def:itypes-typing}
	A λ-term $M$ is \emph{typable} in $\itypes[\Semiring]$
	whenever there is a derivation $\derivD$
	with conclusion $\Gamma \vdash M : \sigma$
	in the following (inductive) system of typing rules:
	\begin{gather*}
		\begin{prooftree}
		\infer0[ax]{ \Gamma, x:[\gr{\sigma}{\1}] \vdash x:\sigma }
		\end{prooftree}
	\qquad\qquad
		\begin{prooftree}
		\hypo{ \Gamma, x:\ms{m} \vdash M:\tau }
		\infer1[\lto_i]{ \Gamma \vdash λx.M : \ms{m} \lto \tau }
		\end{prooftree}
	\\[\topsep]
		\begin{prooftree}
		\hypo{ \Gamma \vdash M : \ms{m} \lto \tau }
		\hypo{ \Gamma' \vdash N : \ms{m} }
		\infer2[\lto_e]{ \Gamma \+ \Gamma' \vdash MN : \tau }
		\end{prooftree}
	\qquad
		\begin{prooftree}
		\hypo{ \left[ \gr {(\Gamma_i \vdash N:\sigma_i)} {a_i} \right]
			_{i \in I} }
		\infer1[!]{ \SUM_{i \in I} a_i \* \Gamma_i
			\vdash N : 
			[\gr{\sigma_i}{a_i}]_{i \in I}
			}
		\end{prooftree}
	\end{gather*}
	for some context $\Gamma$ and $\Semiring$-intersection type $\sigma$.
	We then write $\derivD \derives \Gamma \vdash M : \sigma$,
	and simply $\derives \Gamma \vdash M : \sigma$ 
	to say that such a $\derivD$ exists. 
	\qed
\end{definition}

In particular the typing rules for the tropical intersection type system
$\itypes$ are the following:
\begin{gather*}
	\begin{prooftree}
	\infer0[ax]{ \Gamma, x:[\gr{\sigma}{0}] \vdash x:\sigma }
	\end{prooftree}
\qquad
	\begin{prooftree}
	\hypo{ \Gamma, x:\ms{m} \vdash M:\tau }
	\infer1[\lto_i]{ \Gamma \vdash λx.M : \ms{m} \lto \tau }
	\end{prooftree}
\\[\topsep]
	\begin{prooftree}
	\hypo{ \Gamma \vdash M : \ms{m} \lto \tau }
	\hypo{ \Gamma' \vdash N : \ms{m} }
	\infer2[\lto_e]{ \min(\Gamma, \Gamma') \vdash MN : \tau }
	\end{prooftree}
\qquad
	\begin{prooftree}
	\hypo{ \left[ \gr {(\Gamma_i \vdash N:\sigma_i)} {a_i} \right]
		_{i \in I} }
	\infer1[!]{ \min_{i \in I} (a_i + \Gamma_i)
		\vdash N : 
		[\gr{\sigma_i}{a_i}]_{i \in I}
		}
	\end{prooftree}
\end{gather*}

As an example, one can type Curry's fixed-point combinator in $\itypes$
as follows, 
by defining $\ms f \eqdef [\gr {[\gr{\alpha}{1}] \lto \alpha} 0]$,
$\ms m \eqdef [\gr {\ms m \lto \alpha} 1]$, 
and $\ms m' \eqdef [\gr {\ms m \lto \alpha} 0]$:
\begin{equation} \label{eq:itype-derivation-y}
	\begin{smallprooftree}[center]
	\infer0{ f:\ms f \vdash f:[\gr{\alpha}{1}] \lto \alpha }
	\infer0{ x:\ms m' \vdash x: \ms m \lto \alpha}
	\infer0{ x:\ms m' \vdash x: \ms m \lto \alpha}
	\infer1{ x:\ms m \vdash x:\ms m }
	\infer2{ x:\ms m' \vdash xx : \alpha }
	\infer1{ x:\ms m \vdash xx : [\gr{\alpha}{1}] }
	\infer2{ f:\ms f, x:\ms m \vdash f(xx) : \alpha }
	\infer1{ f:\ms f \vdash \abs x.f(xx): \ms m \lto \alpha }
	\hypo{\vdots}
	\infer1{ f:\ms f \vdash \abs x.f(xx): \ms m \lto \alpha }
	\infer1{ f:\ms f +1 \vdash \abs x.f(xx): \ms m}
	\infer2{ f:\ms f \vdash (\abs x.f(xx))(\abs x.f(xx)):\alpha }
	\infer1{ \vdash Y: [\gr {[\gr{\alpha}{1}] \lto \alpha} 0] \lto \alpha }
	\end{smallprooftree}
\end{equation}
(We omit the brackets around $\Tropical$-sets of hypotheses for better readability,
and present the hypotheses in an arbitrary order.
The omitted part of right branch is a copy of the left branch.)
Observe that the derivation and the grades are the same as in the derivation in $\fmutypes$ presented in \cref{eq:fmutypes-derivation-y}.

Despite its simplicity, the type system $\itypes$ has all the necessary structure to characterize the pure $\lambda$-terms that admit hereditary normal forms. The next two subsections are devoted to this correspondence. We first state the characterization theorem in \cref{sec:intersec:sec:soundcomplete}, and then develop its proof, based on approximations and a metric argument, in \cref{sec:intersec:sec:complete-proof}.

%****************************************************************************
\subsection{Soundness and Completeness}
\label{sec:intersec:sec:soundcomplete}

The first step in the analysis of our type system is, as usual, subject reduction; in any respectable intersection type system, the dual property also holds, namely that types are preserved backwards. Noticeably, this holds not only in $\itypes$ but in any system $\itypes[\Semiring]$,
provided $\Semiring$ enjoys a series of algebraic conditions
(already well-known in the literature, as they ensure that the
$\Semiring$-semimodule monad distributes over the powerset monad
\autocite{Clementino.Hof.Jan.13,Bonchi.San.22}
and that $\Semiring$-multisets define an exponential comonad
in the category of relations, associated to a relational semantics
of linear logic \autocite{Carraro.Ehr.Sal.10,Breuvart.Pag.15}).

\begin{definition}[Resource semiring] \label{def:resource-semiring}
	Given a semiring $(\Semiring, \+, \0, \*, \1)$,
	consider the conditions:
	\begin{description}
	\item[Positivity] %$\forall a,b \in \Semiring,\ 
		$a \+ b = \0 \implies a = b = \0$,
	\item[Discreteness] %$\forall a,b \in \Semiring,\ 
		$a \+ b = \1 \implies a = \0 \text{ or } b = \0$,
	\item[Weak Discreteness] %$\forall a,b \in \Semiring,\ 
		$a \+ b = \1 \implies
		(\exists c \in \Semiring,\ a = \1 \+ c) \text{ or }
		(\exists c \in \Semiring,\ b = \1 \+ c$),
	\item[Additive Refinement] %$\forall a_1,a_2,b_1,b_2 \in \Semiring,\ 
		$a_1 \+ a_2 = b_1 \+ b_2 \implies
		\exists c_{11}, c_{12}, c_{21}, c_{22} \in \Semiring,$
		\[	\forall i \in \{1,2\},\ a_i = c_{i1} \+ c_{i2}, \qquad
			\forall j \in \{1,2\},\ b_j = c_{1j} \+ c_{2j}, \]
	\item[Mixed Refinement] %$\forall a,b,c_1,c_2 \in \Semiring,\ 
		$a \* b = c_1 \+ c_2 \implies
		\exists a_1,a_2 \in \Semiring,\ 
		\exists d_{11}, d_{12}, d_{21}, d_{22} \in \Semiring,$
		\[	a = a_1 \+ a_2, \qquad
			\forall j \in \{1,2\},\ b = d_{1j} \+ d_{2j}, \qquad
			\forall i \in \{1,2\},\ c_i = a_1 \* d_{i1} \+ a_2 \* d_{i2}. \]
	\end{description}
	Then $\Semiring$ is said to be a \emph{(weak) resource semiring}
	whenever it is positive, (weakly) discrete, additive refining
	and mixed refining%
	\footnote{We depart from the terminology coined by
		\textcite{Carraro.Ehr.Sal.10} where resource semirings,
		additive refinement and mixed refinement were called
		\enquote{multiplicity semirings}, \enquote{additive splitting}
		and \enquote{multiplicative splitting}, respectively.
		The first change aims at avoiding ambiguities,
		and is borrowed from \textcite{Breuvart.Ker.Mir.26}.
		The latter two bring back the original and standard terminology
		\autocite{Tarski.49}.
		(The weak discreteness condition, as far as we know, is new.)
	}.
\end{definition}

Examples of resource semirings are $\Nat$ and
$\msets{\mathbf{M}}$, for any monoid $\mathbf{M}$
(thanks to the construction to be detailed in \cref{def:semiring-of-msets};
see also \textcite[Prop.~3]{Breuvart.Pag.15}).
Examples of weak resource semirings (in addition to the former ones) are
$\Bool$ and $\Tropical$.

\begin{theorem} \label{the:itypes-sr-se}
	For any preordered weak resource semiring\footnote{%
		This condition was kindly brought to our attention 
		by Tito Nguyễn;
		as suggested in \textcite{Moreau.Ngu.26},
		counterexamples to subject reduction can be built
		when it is not satisfied.
	} $\Semiring$,
	the typing rules of $\itypes[\Semiring]$ enjoy
	subject reduction and subject expansion:
	for all $M,N \in \pureterms$ such that $M \bred N$,
	$\derives \Gamma \vdash M : \sigma$
	\ifandonlyif $\derives \Gamma \vdash N : \sigma$.
\end{theorem}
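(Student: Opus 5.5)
The plan is the usual one for non-idempotent intersection types, adapted to $\Semiring$-sets and to the fact that derivations are inductive while types may be coinductive. Since $\bred$ is the contextual closure of $\beta$, both directions reduce to three things: a substitution lemma, an anti-substitution (extraction) lemma, and a routine induction on the reduction context. The context step needs no algebra, because each rule is compositional in its premises. In the case of rule \drule{!}, one replaces each premise $\Gamma_i \vdash N:\sigma_i$ and keeps the grades $a_i$, so the conclusion context is unchanged.

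\textbf{Substitution (subject reduction).} The first step is to prove, by induction on the derivation of $\Gamma, x:\ms m \vdash M:\tau$, that from it and $\Delta \vdash N:\ms m$ one obtains $\Gamma \+ \Delta \vdash M\subst N:\tau$. The statement is generalised to judgements $M:\ms n$, which handles rule \drule{!}. The key point is how $\Delta$ is split.
\begin{itemize}
\item In \drule{ax} for $x$ with $\ms m = [\gr\sigma\1]$, a derivation of $N:[\gr\sigma\1]$ has context $\SUM_i a_i\*\Gamma_i$ with $\SUM_i a_i = \1$ on $\sigma$ (other types get $\0$). Weak discreteness lets one isolate a component of grade $\1\+c$, and positivity together with $\0$ being least makes the rest weakenable. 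This yields a derivation $\Delta' \vdash N:\sigma$ with $\Delta \LT$-compatible, and hence equal after weakening.
\item In $\drule{\lto_e}$, the multiset $\ms m$ splits as $\ms m_1\+\ms m_2$. Additive refinement, applied pointwise, splits the \drule{!}-derivation of $N:\ms m$ into two \drule{!}-derivations whose contexts add up to $\Delta$.
\item In \drule{!} with grades $a_i$, the context $a_i\*(\dots)$ requires splitting $N$ along a product. This is exactly mixed refinement.
\end{itemize}
The base case $\beta$ then follows: $(\lambda x.M)N$ is typed by $\drule{\lto_e}$ over $\drule{\lto_i}$, and we apply substitution.

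\textbf{Anti-substitution (subject expansion).} Here I would prove that $\Gamma \vdash M\subst N:\tau$, with $x$ not free in $N$ (up to renaming), yields some $\ms m$ with $\Gamma_1, x:\ms m \vdash M:\tau$, $\Gamma_2 \vdash N:\ms m$, and $\Gamma = \Gamma_1\+\Gamma_2$. The proof is by induction on $M$ and the derivation. Each occurrence of $x$ contributes a $\gr{\sigma}{\1}$, scaled by the \drule{!} grades above it, and these are gathered by $\+$ into $\ms m$. The typing of $N$ is obtained by one \drule{!} whose premises are the collected subderivations for $N$, so its context is the corresponding $\SUM a\*\Delta$. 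If $x$ does not occur in $M$, we take $\ms m=[]$: the empty \drule{!} types $N$ with the empty context, and weakening is harmless.

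Two subtleties must be checked. First, the $\Semiring$-set $\ms m$ must be a legal multi-type. This holds because every collected type is a genuine type of $N$, and well-formedness only restricts coinductive positions, which we never create. Second, collisions where the same $\sigma$ appears with several grades are merged by $\+$, using the convention $\ms m(x)=\SUM a_i$.

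I expect the main obstacle to be the substitution lemma. The \drule{ax} and \drule{!} cases are where the weak discreteness and mixed refinement hypotheses are used, and they must be applied uniformly over the possibly infinite coinductive multi-types, although each derivation itself is finite. If weak discreteness leaves a residue $c$, I would absorb it by the admissible weakening and subsumption along $\LT$, relying on $\0$ being least.
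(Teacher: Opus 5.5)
Your plan is the standard proof that the paper alludes to without giving details: a substitution lemma using the resource-semiring conditions for reduction, an anti-substitution lemma gathering the occurrences of $x$ into one \drule{!} for expansion, and closure under contexts. However, the step that carries the weak-discreteness hypothesis is not justified as written.

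\textbf{The axiom case needs a weakening lemma that the strict axiom does not support.} Suppose the \drule{!} typing $N$ has premises $\Theta_i\vdash N:\sigma_i$ with grades $a_i$, and $a_j=\1\+c$ with $\sigma_j=\sigma$. Distributivity then gives $\Delta=\Theta_j\+E$ exactly. Here $E$ collects $c\*\Theta_j$, the other premises of the \drule{!} at type $\sigma$ (in $\Tropical$, e.g.\ a second premise at grade $3$), and the context of the axiom. What you need is therefore a weakening lemma: from $\Theta\vdash N:\sigma$ derive $\Theta\+E\vdash N:\sigma$ for arbitrary $E$, possibly touching the head variable of $N$. The preorder $\LT$ plays no role, since $\itypes[\Semiring]$ has no subsumption rule. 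This weakening is not automatic. If \drule{ax} is read strictly, with $x$ outside the support of $\Gamma$ so that $x$ receives exactly $[\gr\sigma\1]$, then weakening fails and so does subject reduction. In $\itypes$ one derives $y:[\gr\alpha0,\gr\beta0]\vdash(\lambda x.y)y:\alpha$: weaken $x:[\gr\beta0]$ in the axiom for $y$, and type the argument in context $y:[\gr\beta0]$. But $y:[\gr\alpha0,\gr\beta0]\vdash y:\alpha$ is not derivable. This is precisely the axiom case for a variable $y\neq x$, which you did not treat. You must read \drule{ax} as $\Gamma\+(x:[\gr\sigma\1])\vdash x:\sigma$ and prove weakening by induction on derivations, weakening the left premise in the \drule{\lto_e} case. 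State it for type judgments only: it fails for multi-type judgments, since the empty \drule{!} forces the empty context.

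\textbf{Further points.} In the \drule{!} case, write $\ms m=\SUM_k b_k\*\ms m_k$.
\begin{itemize}
\item You first distribute the premises of the \drule{!} typing $N$ among the indices $k$ by additive refinement.
\item Mixed refinement applied to $b_k\*\ms m_k(\sigma)$ then splits $b_k$ itself. So the $k$-th premise of the derivation of $M$ must be duplicated into copies whose grades sum to $b_k$, each substituted with a different derivation of $N:\ms m_k$.
\item That splitting depends on $\sigma$, so the splittings for the different types in $\ms m_k$ must be reconciled by a further additive refinement.
\end{itemize}
Binary weak discreteness does not by itself iterate to several premises at type $\sigma$. From $a_1\+(a_2\+\dots)=\1$ you may only obtain $a_2\+\dots=\1\+c$, and you need additive refinement to recurse. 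Finally, in the anti-substitution lemma keep the context of $N$ supported on $\fv(N)$, which is possible because strengthening is admissible. Otherwise a variable weakened in an axiom for an occurrence of $N$, and later bound by a $\lambda$ of $M$, ends up in the context of $N$, and the abstraction case yields the wrong arrow type. For example, take $M=\lambda y.x$ and $N=z$, with $y$ weakened in the axiom for $z$.
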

The proof is technically standard, closely following in structure similar proofs in the literature and carefully applying the conditions from \cref{def:resource-semiring} in the proof of subject reduction.
One therefore readily arrives at the following result:
\begin{theorem} \label{the:itypes-sound-complete-for-hn}
	For any preordered weak resource semiring $\Semiring$,
	typing in $\itypes[\Semiring]$ is sound and complete 
	for head normalization:
	for all $M \in \pureterms$, 
	$M \in \HN$ if and only if
	there are $\Gamma$ and $\sigma$ such that
	$\derives \Gamma \vdash M : \sigma$.
\end{theorem}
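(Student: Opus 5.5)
Both directions rest on \cref{the:itypes-sr-se}: soundness uses a measure on derivations, completeness uses subject expansion along head reduction.

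\emph{Completeness.} Let $M \in \HN$, so $M \hred* H$ with $H = \abs{x_1}\dots\abs{x_k}. y N_1 \cdots N_n$ in head normal form. We type $H$ directly. Give $y$ the finite type $\tau_y \eqdef [] \lto \cdots \lto [] \lto \alpha$, with $n$ empty $\Semiring$-sets. Each argument $N_j$ then receives the empty $\Semiring$-set $[]$ through rule \drule{!} with $I = \emptyset$, in the empty context. So $N_j$ never needs to be typed, and $n$ uses of \drule{\lto_e} give $y:[\gr{\tau_y}{\1}] \vdash y N_1 \cdots N_n : \alpha$. If $y$ is some $x_i$, the context is absorbed by \drule{\lto_i}; otherwise it stays free. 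Abstracting gives a typing of $H$. Since $M \bred* H$, finitely many applications of subject expansion (\cref{the:itypes-sr-se}) give a typing of $M$ with the same $\Gamma$ and $\sigma$.

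\emph{Soundness.} Given $\derivD \derives \Gamma \vdash M : \sigma$, I would show that head reduction from $M$ terminates, by induction on a size measure of $\derivD$. The obstacle is that types may be infinite, so types cannot be measured. Derivations, however, are inductive and hence well-founded, so I would take $|\derivD|$ to be the number of \drule{ax}, \drule{\lto_i} and \drule{\lto_e} rules in $\derivD$. This is finite, because every \drule{!} node has a finite index set $I$ (contexts and $\Semiring$-sets have finite support).

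The key step is a quantitative subject reduction for the head redex. If $M \hred M'$, the subject reduction of \cref{the:itypes-sr-se} yields $\derivD'$ for $M'$ with $|\derivD'| < |\derivD|$. To prove this, I would revisit the substitution lemma behind subject reduction. The head redex $(\abs x.P)Q$ lies on the typed spine of $\derivD$, since head positions are never inside an argument typed by an empty set. Contracting it removes one \drule{\lto_i} and one \drule{\lto_e}. Substituting $Q$ for $x$ replaces each \drule{ax} on $x$ by a subderivation for $Q$ drawn from the \drule{!} premise of $Q$. The count of these subderivations must not increase, so each axiom for $x$ must be matched by at most one premise $\gr{\sigma_i}{a_i}$ of the \drule{!} rule for $Q$.

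The algebraic conditions (positivity, weak discreteness, refinements) are what let the splitting of grades in the subject reduction proof be chosen this way. This matching is the step I expect to be delicate, since with idempotent sums (as in $\Bool$ or $\Tropical$) one premise may serve several axioms. If the plain count then fails to decrease, I would use a fallback. I would fix a normalised derivation for the substituted argument and argue that sharing does not duplicate rule instances, since one premise may stand for several copies. I would then measure the multiset of spine depths instead of the raw count.

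Once the decrease is established, induction on $|\derivD|$ shows that head reduction from $M$ terminates in a head normal form, so $M \in \HN$.
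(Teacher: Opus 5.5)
Your completeness direction is correct and is the paper's own argument: you type the head normal form by giving the head variable $[]\lto\cdots\lto[]\lto\alpha$ and every argument $[]$, then apply subject expansion.

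The soundness direction has a genuine gap. Your key claim is that head reduction strictly decreases the number of \drule{ax}, \drule{\lto_i}, \drule{\lto_e} instances. This is false as soon as the semiring addition is idempotent, which is the case for $\Bool$ and, above all, for $\Tropical$. Here is a counterexample in $\itypes$. Let $\tau\eqdef[\gr\sigma0]\lto[\gr\sigma0]\lto\alpha$, and let $\derivE$ be a minimal derivation of $\vdash N:\sigma$ for a closed $N$. Consider the derivation $\derivD$ of $f:[\gr\tau0]\vdash(\lambda x.f\,x\,x)\,N:\alpha$. Both axioms on $x$ yield $x:[\gr\sigma0]$, and $\min(0,0)=0$ merges them. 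So $N$ is typed by a single \drule{!} premise, and $|\derivD|=7+|\derivE|$. Now take any derivation of $f:[\gr\tau0]\vdash f\,N\,N:\alpha$, including the one subject reduction returns. Its head axiom must give $f$ the type $\tau$, so each copy of $N$ needs its own derivation of $N:\sigma$. Its size is therefore at least $3+2|\derivE|$, which exceeds $|\derivD|$ once $|\derivE|>4$.

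A cleverer choice of derivation for the reduct does not save the count either. Simply typed terms embed into $\itypes$ by translating $A\to B$ as $[\gr A0]\lto B$, with one counted instance per variable occurrence, abstraction or application. So the simply typable term $c\,c\,c\,c\,(\lambda u.u)\,y$, with $c\eqdef\lambda s.\lambda z.s\,(s\,z)$, has a derivation with $36$ counted instances. But $c\,c\,c\,c=_\beta\lambda s.\lambda z.s^{2^{16}}z$, so its head reduction fires at least $2^{16}$ redexes of the form $(\lambda u.u)\,P$. A strictly decreasing count would stop it within $36$ steps.

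Your fallback does not close the gap. Derivations are trees, so the derivation of $f\,N\,N$ genuinely contains two copies of the argument's derivation. Reading them as shared changes the objects being measured. A later head step inside one copy of $N$ but not the other would force you to unshare anyway. The \enquote{multiset of spine depths} is neither defined nor shown to decrease.

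The missing idea is to stop measuring derivations and use reducibility, as the paper does. Set $\redinterp\alpha\eqdef\rho(\alpha)$, $\redinterp{\ms m\lto\sigma}\eqdef\redinterp{\ms m}\to\redinterp\sigma$ and $\redinterp{\ms m}\eqdef\bigcap_{\ms m(\sigma)\GT\1}\redinterp\sigma$. Then prove by induction on the derivation that $x_1:\ms m_1,\dots,x_n:\ms m_n\vdash M:\sigma$ and $N_i\in\redinterp{\ms m_i}$ imply $M[\vec N/\vec x]\in\redinterp\sigma$. Conclude with $N_i\eqdef x_i$ and $\rho$ mapping every atom to $\HN$, since every reducibility candidate is contained in $\HN$.

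In this argument duplicating an argument is harmless. Moreover, the interpretation only descends into components of grade $\GT\1$, and these are formed by \drule{ind}, since \drule{coind} needs a grade $\LTstrict\1$. So the interpretation is well defined on infinite types without any size measure. Your counting argument is the right one for non-idempotent grades such as $\Nat$, where it also bounds the length of head reduction. It does not reach the generality of the statement.
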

	\begin{proof}
	\emph{Soundness.}
	Recall the reducibility candidates (\rc) from \cref{def:rc}:
	just as in \cref{def:redinterp},
	given an environment $\rho$ mapping every atom in $\Atoms$ to a \rc,
	the reducibility interpretation $\redinterp \sigma$
	of a type $\sigma$ is the \rc defined inductively by:
	\begin{gather*}
		\redinterp{ \alpha } \eqdef \rho(\alpha), \qquad
		\redinterp{ \ms m \lto \sigma} \eqdef
			\redinterp{\ms m} \to \redinterp \sigma, \qquad
		\redinterp{\ms m} \eqdef \bigcap_{\substack{
			\sigma \in \itypes[\Semiring] \\ \ms m(\sigma) \GT \1
			}} \redinterp{\sigma},
	\end{gather*}
	where the empty intersection is defined to be $\pureterms$.
	Then we prove the following statement,
	which is an analogue of \cref{lem:if-typable-then-in-rc}:
	if $\derives x_1:\ms m_1, \dots, x_n:\ms m_n \vdash M:\sigma$,
		$\rho$ is an environment, and
		for all $i \in [1,n]$, $N_i \in \redinterp{\ms m_i}$,
		then $M \subst[\vec x]{\vec N}
		\eqdef M \subst[x_1]{N_1} \cdots \subst[x_n]{N_n}
		\in \redinterp \sigma$.
	The proof is by induction on the first hypothesis.
	Now suppose $\derives \Gamma \vdash M:\sigma$.
	Apply the statement we just proved to this derivation,
	the environment $\rho_\HN$ mapping all atoms to $\HN$,
	and $N_i \eqdef x_i$ for all $x_i$ in the support of~$\Gamma$.
	We obtain that $M \in \redinterp[\rho_\HN]{\sigma} \subseteq \HN$
	by the definition of a \rc.
	\par\noindent
	\emph{Completeness.} A standard argument shows that any head normal form
	is typable, see \eg \textcite[Lemma~5.5]{Bucciarelli.Kes.Ven.17}.
	By subject expansion (\cref{the:itypes-sr-se})
	all terms in $\HN$ are typable.
	\end{proof}

The result above is neither novel nor surprising: indeed, the infinitary nature of our types plays no role here. We have simply shown that the standard proof of soundness and completeness \wrt head normalization for (finite) intersection types can be made parametric in the semiring $\Semiring$. Our interest, however, lies in characterizing a smaller class of terms, namely those that are not only head-normalizing but hereditarily so.

In the standard setting of $\Nat$-graded intersection types, there are (at least) two further results of interest when attention is restricted to types in which the empty multiset never occurs in positive position:
\begin{itemize}
 \item with finite types, the resulting system is sound and complete for
	 (weak) β-normalization.
	 \autocites[Theorem~6.3.27]{deCarvalho.07}[§~6]{Bucciarelli.Kes.Ven.17};
 \item with fully infinite types
	 (in the sense that there is no restriction on the coinductive guards)
	 satisfying an additional \enquote{approximability} condition,
	 the resulting system can be made sound and complete for hereditary head normalization
	 \autocite{Vial.17,Vial.21}.
 \end{itemize}
Our result below characterizes terms in $\HHN$, as does the latter. At the same time, it retains the cleaner flavor of the former, in that it applies to types whose infinite branches satisfy a guard condition, while requiring \emph{no side condition} beyond the restriction that occurrences of the empty multiset be negative (which could be directly internalized in the typing system, as the presentation of the following definition shows).

\begin{definition} \label{def:itypes-positive}
	The subsets $\itypes+ \subset \itypes$ and $\itypes- \subset \itypes$
	of types where the empty $\Tropical$-set does not occur
	in positive (\resp negative) position
	are mutually defined by refining the formation rules
	of \cref{def:itypes} as follows:
	\begin{gather*}
		\begin{prooftree}
		\hypo{ \alpha \in \Atoms }
		\infer1{ \alpha \in \itypes+ }
		\end{prooftree}
	\qquad
		\begin{prooftree}
		\hypo{ \ms m \in \itypes!- }
		\hypo{ \sigma \in \itypes+ }
		\infer2{ \ms m \lto \sigma \in \itypes+ }
		\end{prooftree}
	\qquad
		\begin{prooftree}
		\hypo{ \left[ \gr {(\sigma_i \inlater{a_i} \itypes+)} 
			{a_i} \right]_{i \in I} }
		\infer1{ [\gr{\sigma_i}{a_i}]_{i \in I} \in \itypes!+ }
		\end{prooftree}
	\\[\topsep]
		\begin{prooftree}
		\hypo{ \alpha \in \Atoms }
		\infer1{ \alpha \in \itypes- }
		\end{prooftree}
	\qquad
		\begin{prooftree}
		\hypo{ \ms m \in \itypes!+ \setminus \set{[]} }
		\hypo{ \sigma \in \itypes- }
		\infer2{ \ms m \lto \sigma \in \itypes- }
		\end{prooftree}
	\qquad
		\begin{prooftree}
		\hypo{ \left[ \gr {(\sigma_i \inlater{a_i} \itypes-)} 
			{a_i} \right]_{i \in I} }
		\infer1{ [\gr{\sigma_i}{a_i}]_{i \in I} \in \itypes!- }
		\end{prooftree}
	\end{gather*}
	where the rules for \drule{ind} and \drule{coind} remain unchanged.
	Then we write $\derivD \derives+ \Gamma \vdash M:\sigma$
	if there is a derivation $\derivD \derives \Gamma \vdash M:\sigma$
	with $\Gamma : \Var \to \itypes-$ and $\sigma \in \itypes+$.
	\qed
\end{definition}

Our main technical result is the following:

\begin{theorem} \label{the:pos-itypes-sound-complete-for-hhn}
	Positive typing in $\itypes$ is sound and complete 
	for hereditary head normalization:
	for all $M \in \pureterms$, 
	$M \in \HHN$ \ifandonlyif
	there are $\Gamma$ and $\sigma$ such that
	$\derives+ \Gamma \vdash M : \sigma$.
\end{theorem}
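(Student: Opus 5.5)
We treat the two directions separately; soundness reuses the reducibility machinery, while completeness needs an approximation argument.

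\emph{Soundness.} Assume $\derives+ \Gamma \vdash M:\sigma$. By \cref{the:itypes-sound-complete-for-hn}, $M \in \HN$. By subject reduction (\cref{the:itypes-sr-se}), the head normal form $\lambda \vec y. x P_1 \cdots P_k$ of $M$ is also positively typable with the same $\Gamma$ and $\sigma$.

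Inverting the typing rules, we get $x : \ms m_1 \lto \cdots \lto \ms m_k \lto \tau$ from an axiom. Its type is an element of $\Gamma(x)$ or of some $\ms m$ bound by the $\lambda \vec y$. Because $\Gamma$ ranges over $\itypes-$ and the $\lambda$-bound $\Tropical$-sets sit in negative position inside $\sigma \in \itypes+$, this type is negative. Hence each $\ms m_j$ is a \emph{non-empty} element of $\itypes!+$.

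So every $P_j$ is typed by at least one premise of rule $\drule{!}$, with a type in $\itypes+$ and with contexts that are still negative. Each $P_j$ is therefore positively typable, and by coinduction on \cref{def:productive} (the set of positively typable terms is a post-fixed point) we get $M \in \HHN$.

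One subtlety is that positivity must survive subject reduction. Subject reduction does not change $\Gamma$ or $\sigma$, so it does.

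\emph{Completeness.} Let $M \in \HHN$. For every depth $d$, $M$ head-reduces in finitely many steps to a term whose Böhm tree agrees with that of $M$ up to depth $d$. We type the truncation $\trunc{M}$ at depth $d$ with finite positive types, assigning at each node of depth $n$ arguments with grade $1$. Concretely, a head variable at depth $n$ gets type $[\gr{\tau_1}{1}]\lto\cdots\lto\alpha$, and each argument subtree is recursively typed, the rule $\drule{!}$ adding $1$ to its context. Below depth $d$ we put empty sets only in negative position, or type the cut-off with the atom. By finite subject expansion we then obtain derivations $\derivD_d \derives+ \Gamma_d \vdash M:\sigma_d$.

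These derivations should be chosen coherently: $\derivD_{d+1}$ refines $\derivD_d$, and the type information at depth $n$ in the tree sits under guard $\geq n$. The remaining step is to take the limit. We put an ultrametric on types in which two types are close when they agree up to high tropical depth; infinite guarded types form the Cauchy completion of the finite ones. We then show that $\Gamma_d$, $\sigma_d$ and all types in $\derivD_d$ converge, and that the limit is a valid derivation. This works because the shape of $\derivD_d$ is dictated by the finite term $M$ and stabilises, while the typing rules are closed under limits.

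\emph{Main obstacle.} As the introduction warns, this limit step is the hard part. A $\Tropical$-set $\ms m$ records only the \emph{earliest} grade at which $\sigma$ occurs, so contributions at later grades are forgotten, and the sequence of types produced by subject expansion need not be Cauchy for the naive metric. I would first carry out the whole approximation argument in an auxiliary system whose sets record all grades, i.e.\ $\msets{\Nat}$-sets, a resource semiring to which \cref{the:itypes-sr-se} applies. In that system the syntactic distance does give convergence. Afterwards I would project the limit derivation back to $\itypes$ by a forgetful map taking minima of grades, checking that this map preserves the typing rules, guardedness, and positivity.
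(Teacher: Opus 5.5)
Your soundness argument is the paper's. The architecture of your completeness argument is also the paper's: approximate, expand, pass to a limit in an auxiliary system whose grades are multisets of tropical grades, then project by $\min$ as in \cref{obs:min-maps-msets-to-tropicals}. There is, however, a genuine gap concerning positivity.

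You claim the cut-offs can be handled with empty sets \enquote{only in negative position}, or by typing the cut-off \enquote{with the atom}, so that the approximants are fully positive, $\derivD_d \derives+ \Gamma_d \vdash M:\sigma_d$. Neither option works.
\begin{itemize}
\item A cut-off argument $M_j$ of a head variable $y$ cannot receive the atom, or any type, without a derivation of $M_j$ itself. That derivation has its own head variable and arguments, so somewhere an argument must receive $[]$.
\item That $[]$ is then an argument multi-type in the type $\cdots\lto[]\lto\cdots$ of a head variable. Head-variable types sit in the context or in a $\lambda$-bound set, \ie in negative position, so this $[]$ occurs \emph{positively}.
\end{itemize}
For example, with $Y_f \hred* f(Y_f)$, your construction puts $f:[\gr{([]\lto\alpha)}{d},\dots]$ in the context of $\derivD_d$, and $[]\lto\alpha\notin\itypes-$. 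So for $Y_f$ no approximant is positive. Only the limit is, because the offending element sits at grade $d$ and vanishes as $d\to\infty$.

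Hence the limit step must \emph{establish} positivity, and \enquote{the typing rules are closed under limits} does not do this. Positivity is not a closed condition: the negative types $[\gr\alpha n]\lto\alpha$ converge to $[]\lto\alpha$. The missing idea is the paper's stratified positivity $\derives+[n]$, which tolerates positive occurrences of $[]$ only $n$ levels deep. Three things are needed:
\begin{itemize}
\item the depth-$n$ approximant is only required to be $\derives+[n]$;
\item subject expansion and the backward substitution lemma must be shown to preserve $\derives+[n]$;
\item \cref{lem:approx-positive} shows that a converging sequence of $\derives+[n]$-derivations has a $\derives+$ limit.
\end{itemize}

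Your plan also leaves the Cauchy property vague. In the paper it comes from the proof-relevant, \emph{non-expansive} subject expansion of \cref{the:mset-itypes-subjexp}. It is applied layer by layer along fixed head reductions, which yields $\dist(\derivD_p(M),\derivD_q(M))\leq 2^{-\min(p,q)}$.

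Finally, the auxiliary grades must be $\msets\Tropical$, whose product is convolution along tropical multiplication (\ie addition), not $\Nat$'s multiplication. Only then is $\min$ a semiring morphism onto $\Tropical$, which the final projection needs.
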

	
	\begin{proof}[Proof of soundness]
	Suppose $\derives+ \Gamma \vdash M : \sigma$.
	By \cref{the:itypes-sound-complete-for-hn},
	$M \in \HN$: there is a β-reduction
	$M \bred* λx_1.\dots λx_m.yM_1\dots M_n$.
	By subject reduction (\cref{the:itypes-sr-se}),
	there is a derivation
	$\derivD \derives+ \Gamma \vdash λx_1.\dots λx_m.yM_1\dots M_n : \sigma$.
	By the typing rules and by positivity
	(\cref{def:itypes-typing,def:itypes-positive}),
	the head occurrence of $y$ is given a type
	$\ms m_1 \lto \dots \lto \ms m_n \lto \tau$ in $\derivD$,
	for some $\ms m_1,\dots,\ms m_n \in \itypes!+ \setminus \set{[]}$
	and $\tau \in \itypes-$.
	Since all the $\ms m_j$ are non-empty, $\derivD$ has sub-derivations
	$\derives+ \Gamma_j \vdash M_j : \sigma_j$ for all $j \in [1,n]$.
	By coinduction, all $M_j$ are productive, hence so is $M$.
	\end{proof}

The proof of completeness is the purpose of \cref{sec:intersec:sec:complete-proof},
and concludes with \cref{cor:pos-itypes-complete-for-hhn}.

%****************************************************************************
\subsection{Proof of Completeness}
\label{sec:intersec:sec:complete-proof}

\subsubsection{An auxiliary system}

The proof is carried out in the same system for another semiring:
instead of $\Tropical$ we consider $\msets\Tropical$,
\ie we assign a grade to \emph{all occurrences}
of the argument of a function,
instead of retaining only the minimal grade
(the first moment in time where the argument is used).

\begin{definition} \label{def:semiring-of-msets}
	The set $\msets\Semiring$ is equipped with the following operations,
	distinguished elements, and preorder relation:
	\begin{gather*}
	\begin{aligned}
		(\ms a \msplus \ms b) &: c \mapsto \ms a(c) + \ms b(c) &&&
		\mszero &\eqdef [] \\
		(\ms a \mstimes \ms b) &: c \mapsto \sum_{c = a \* b}
			\ms a(a) \times \ms b(b) &&&
		\msone &\eqdef [\1]
	\end{aligned} \\
	\ms a \mslt \ms b \text{ whenever } \forall a \in \ms a,\ 
		\exists b \in \ms b,\ a \LT b
	\end{gather*}
	which form a preordered resource semiring structure.
	\qed
\end{definition}

Observe that the sum $\msplus$ is just multiset union
(\ie, it is the pointwise mapping of the sum $+$ of $\Nat$,
which we also simply denoted by $+$ hereabove);
the product, instead, is given by convolution
using the operations of $\Semiring$.
As for the preorder, observe that we only use it in hypotheses like
$\ms a \msltstrict \msone$,
which boils down to saying that all elements of $\ms a$
should be strictly smaller than~$\1$.

As in \cref{def:semiring-of-msets} we will denote 
the multisets in $\msets\Tropical$, that we will use as grades,
by $\ms a$, $\ms b$, etc. in order to distinguish them
from $\msets\Tropical$-sets of types,
which we still denote by $\ms m$, $\ms n$, etc.

We work in $\itypes[\msets\Tropical]$, \ie concretely
with the following system of formation rules:
\begin{gather*}
	\begin{prooftree}
	\hypo{ α \in \Atoms }
	\infer1[\Atoms]{ α \in \itypes[\msets\Tropical] }
	\end{prooftree}
\qquad
	\begin{prooftree}
	\hypo{ \ms{m} \in \itypes![\msets\Tropical] }
	\hypo{ \tau \in \itypes[\msets\Tropical] }
	\infer2[\lto]{ \ms{m} \lto \tau \in \itypes[\msets\Tropical] }
	\end{prooftree}
\\
	\begin{prooftree}[center]
	\hypo{ \left[ \gr {(\sigma_i \inlater{\ms a_i} 
		\itypes[\msets\Tropical])} {\ms a_i} \right]_{i \in I} }
	\infer1[!]
		{ [\gr{\sigma_i}{\ms a_i}]_{i \in I} \in \itypes![\msets\Tropical] }
	\end{prooftree}
\qquad
	\begin{prooftree}[center]
	\hypo{ \sigma \in \itypes[\msets\Tropical] }
	\hypo{ \ms a(0) = 0 }
	\infer[double]2[coind]
		{ \sigma \inlater{\ms a} \itypes[\msets\Tropical] }
	\end{prooftree}
\qquad
	\begin{prooftree}[center]
	\hypo{ \sigma \in \itypes[\msets\Tropical] }
	\infer1[ind]
		{ \sigma \inlater{\ms a} \itypes[\msets\Tropical] }
	\end{prooftree}
\end{gather*}
and of typing rules:
\begin{gather*}
	\begin{prooftree}
	\infer0[ax]{ x:[\gr{\sigma}{[0]}] \vdash x:\sigma }
	\end{prooftree}
\qquad
	\begin{prooftree}
	\hypo{ \Gamma, x:\ms{m} \vdash M:\tau }
	\infer1[\lto_i]{ \Gamma \vdash λx.M : \ms{m} \lto \tau }
	\end{prooftree}
\\[\topsep]
	\begin{prooftree}
	\hypo{ \Gamma \vdash M : \ms{m} \lto \tau }
	\hypo{ \Gamma' \vdash N : \ms{m} }
	\infer2[\lto_e]{ \Gamma \msplus \Gamma' \vdash MN : \tau }
	\end{prooftree}
\qquad
	\begin{prooftree}
	\hypo{ \left[ \gr {(\Gamma_i \vdash N:\sigma_i)} {\ms a_i} \right]
		_{i \in I} }
	\infer1[!]{ \mssum_{i \in I} \ms a_i \mstimes \Gamma_i
		\vdash N : 
		[\gr{\sigma_i}{\ms a_i}]_{i \in I}
		}
	\end{prooftree}
\end{gather*}
In fact we slightly modified the typing system from \cref{def:itypes-typing},
taking a relevant axiom rule: this does not affect typability but crucially eases some technical parts of the proof
(neither does it affect subject reduction and expansion:
one can prove that this is a consequence of the discreteness condition
enjoyed by $\msets{\Tropical}$).

The reason why we will prove completeness in this apparently more complicated
system is the following crucial observation:

\begin{observation} \label{obs:min-maps-msets-to-tropicals}
	Replacing all grades $\ms a \in \msets\Tropical$
	with $\min(\ms a) \in \Tropical$ turns
	formation and typing rules of $\itypes[\msets\Tropical]$
	into formation and typing rules of $\itypes[\Tropical]$.
	This defines a map $\min$ taking types, contexts and typing derivations
	from the former system to the latter.
\end{observation}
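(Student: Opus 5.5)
The plan is to check the observation rule by rule. First I show that $\min : \msets\Tropical \to \Tropical$ is a homomorphism of preordered semirings that preserves the guard condition. Then I apply it hereditarily to types, contexts and derivations, by guarded corecursion on types and by induction on derivations.

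\emph{Algebraic step.} Here $\min(\ms a)$ is the least element of the multiset $\ms a$, with $\min([]) \eqdef \infty$. I would verify four facts.
\begin{itemize}
\item $\min(\mszero) = \infty$ and $\min(\msone) = \min([0]) = 0$, which are the units of $\Tropical$.
\item $\min(\ms a \msplus \ms b) = \min(\min \ms a, \min \ms b)$, because multiset union takes the union of supports.
\item $\min(\ms a \mstimes \ms b) = \min \ms a + \min \ms b$. The convolution has support $\set{a+b}[a \in \ms a, b \in \ms b]$, whose least element is attained at the two minima; if either multiset is empty, both sides are $\infty$.
\item The guard transfers: $\ms a \msltstrict \msone$ means every element of $\ms a$ is strictly below $0$ in the reversed order, \ie is $>0$. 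Hence $\min(\ms a) > 0$, \ie $\min(\ms a) \LTstrict \1$ in $\Tropical$; the empty multiset gives $\infty > 0$. In the concrete formation rules this is the condition $\ms a(0) = 0$, which likewise gives $\min(\ms a) > 0$.
\end{itemize}
By finite induction, $\min$ also sends finite sums $\mssum_i \ms a_i \mstimes \Gamma_i$ to $\min_i(\min(\ms a_i) + \Gamma_i)$.

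\emph{Types.} I define $\min$ on $\itypes[\msets\Tropical]$ by mapping $\alpha \mapsto \alpha$, $\ms m \lto \tau \mapsto \min(\ms m) \lto \min(\tau)$, and $[\gr{\sigma_i}{\ms a_i}]_{i} \mapsto [\gr{\min\sigma_i}{\min\ms a_i}]_{i}$.

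Distinct $\sigma_i$ may collapse to the same image. In that case the $\Tropical$-set notation sums the grades, \ie takes their $\min$. This is still coherent, and it is exactly the effect of the sum in the context rule. Grades equal to $[]$ map to $\infty$, the zero of $\Tropical$, so they drop out, consistently with supports.

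Well-definedness of $\min$ on infinite types is the main obstacle. I would argue by coinduction on the formation rules. Each \drule{coind} step of the source, guarded by $\ms a$ with $\ms a(0)=0$, becomes a \drule{coind} step guarded by $\min(\ms a) > 0$, which is allowed in $\itypes$. Each \drule{ind} step maps to an \drule{ind} step. So the image is produced by a productive corecursion, and every infinite branch of the image crosses infinitely many positive guards.

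The same argument preserves the positivity classes $\itypes+$ and $\itypes-$, because $\ms m \neq []$ implies $\min(\ms m) \neq []$.

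\emph{Derivations.} I proceed by induction on the (finite) derivation tree.
\begin{itemize}
\item \drule{ax}: $x:[\gr\sigma{[0]}] \vdash x:\sigma$ maps to $x:[\gr{\min\sigma}{0}] \vdash x:\min\sigma$, an instance of $\itypes$'s axiom with $\Gamma$ empty.
\item \drule{\lto_i} is immediate, since $\min$ commutes with context extension.
\item \drule{\lto_e}: the conclusion context $\Gamma \msplus \Gamma'$ maps to $\min(\min\Gamma, \min\Gamma')$ by the sum case of the homomorphism.
\item \drule{!}: the conclusion context $\mssum_i \ms a_i \mstimes \Gamma_i$ maps to $\min_i(\min\ms a_i + \min\Gamma_i)$ by the product and sum cases. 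Premises collapsing onto the same type give duplicate hypotheses. I keep only the premise with the least grade, which is allowed because $\min$ is idempotent and the discarded premises contribute nothing beyond that minimum.
\end{itemize}
This yields the map on derivations, preserving subjects and commuting with all the rules.
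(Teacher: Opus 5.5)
Your strategy is the intended one; the paper gives no separate argument for this Observation. You show that $\min\colon\msets\Tropical\to\Tropical$ is a semiring homomorphism sending the guard $\ms a(0)=0$ to $\min(\ms a)>0$. You then push it through types corecursively and through derivations by induction. The algebraic checks, the definition on types (including the use of the family notation when images collide), and the cases \drule{ax}, \drule{\lto_i}, \drule{\lto_e} are fine.

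\textbf{The \drule{!} case fails.} You keep only the least-graded premise among those whose types collapse. But premises with the same image type, or even the same source type, can have different contexts. A discarded premise still contributes $\min\ms a_i+\min\Gamma_i$ to the conclusion context $\min_i(\min\ms a_i+\min\Gamma_i)$ that you computed one line earlier. After discarding, your rule no longer concludes the image of the source context.

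This breaks the derivation, because \drule{\lto_i} turns contexts into arrow types. Take $\Gamma_1 = y:[\gr{[\gr\beta{[0]}]\lto\alpha}{[0]}],\ x:[\gr\beta{[0]}]$, and let $\Gamma_2$ be the same with $\gamma$ in place of $\beta$. Both derive $yx:\alpha$; combine them by \drule{!} with grades $[0]$ and $[0]$. The source can then type $\lambda x.z(yx)$ with $[\gr\beta{[0]},\gr\gamma{[0]}]\lto\delta$ and apply it to a $w$ of that multi-type. Under your map the abstraction receives $[\gr\beta 0]\lto\delta$, while $w$ still receives $[\gr\beta 0,\gr\gamma 0]$. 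So the image of \drule{\lto_e} is ill-formed.

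The repair is simply to discard nothing: map the premise family indexed by $I$ to the family indexed by the same $I$. The paper's $\Semiring$-set notation allows repeated elements, whose grades are summed, i.e.\ $\min$-ed. So this is a legal instance of \drule{!} in $\itypes$, with conclusion exactly $\min_i(\min\ms a_i+\min\Gamma_i)\vdash N:[\gr{\min\sigma_i}{\min\ms a_i}]_{i\in I}$. Merging is harmless only for premises whose entire judgments coincide, since $\min(a+\Delta,b+\Delta)=\min(a,b)+\Delta$.

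\textbf{The positivity claim is also false as stated.} You claim that $\ms m\neq[]$ implies $\min(\ms m)\neq[]$. This fails when a grade multiset contains only $\infty$. For example, $[\gr\sigma{[\infty]}]$ is non-empty in $\itypes[\msets\Tropical]$, but its image is the empty $\Tropical$-set. Hence $[\gr\alpha{[\infty]}]\lto\beta\in\itypes[\msets\Tropical]-$ is mapped outside $\itypes-$.

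Positivity preservation is not part of the Observation as stated, but it is what the paper relies on when using it afterwards. It does hold for derivations whose grades have no $\infty$ entries. This covers the derivations built in the completeness proof, whose grades are generated from $[0]$ and $[1]$. Note also that the bottleneck distance already places $[\gr\sigma{[\infty]}]$ at distance $0$ from $[]$.
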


\subsubsection{The ultrametric structure of types and derivations}
\label{sec:metric}

We equip the types of $\itypes[\msets\Tropical]$
with a \enquote{syntactic} ultrametric distance,
\ie a distance such that every infinitary type
is the limit of a Cauchy sequence of finite types.
This was the way infinitary syntax was defined before
the generalization of coinductive techniques
\autocite{Arnold.Niv.80,Dershowitz.Kap.Pla.91,Kennaway.Klo.Sle.Vri.95};
conversely such a distance is associated to any coinductive syntax \autocite{Barr.93}.

First, let us equip $\Tropical$ with the standard ultrametric distance:
$\dist(a,b) \eqdef 0$ if $a=b$, $\dist(a,b) \eqdef 2^{-\min(a,b)}$ otherwise.
Now, let us introduce the key construction we will rely on:
suppose an ultrametric space $(X,\dist)$ is given,
we equip $\msets[\msets\Tropical]X$ with a ultrametric distance.
To do so:
\begin{enumerate}
\item Thanks to currying, we consider each
	$\ms m \in \msets[\msets\Tropical]X = X \to (\Tropical \to \Nat)$
	as an element of $(X \times \Tropical) \to \Nat$,
	\ie as the multiset
	$\mscur m \eqdef \mset{ (x,a) }[ x \in X,\ a \in \ms m(x) ]$.
\item Then, to define a distance between such multisets,
	we resort to a construction from optimal transportation \autocite{Villani.09}:
	the distance between $\ms m$ and $\ms n$ is the total cost
	of an optimal transportation plan of $\mscur m$ towards $\mscur n$.
	Concretely, pairs $(x,a) \in \mscur m$ are \emph{matched}
	with pairs $(y,b) \in \mscur n$;
	some pairs from both sides may remain unmatched,
	giving rise do \emph{deletions} from $\mscur m$
	and \emph{insertions} into $\mscur n$;
	a \emph{cost} is associated to matching, deletion, insertion
	(by means of the distances on $\Tropical$ and $X$);
	the total cost of this transportation plan is
	the maximum of the costs of each individual operation
	(the fact that we take the maximum here instead of, \eg, the sum,
	is what will ensure an ultrametric behavior);
	finally, the distance $\dist(\ms m,\ms n)$ is the infimum
	of this total cost over all possible transportation plans.
	This construction is the \emph{$\infty$-Wasserstein distance},
	or \enquote{bottleneck} distance
	\autocites{Champion.DeP.Juu.08}[Chap.~3]{Santambrogio.15}.
\end{enumerate}
Concretely, it is defined as follows.

\begin{definition} \label{def:bottleneck-dist}
	Given $\ms m, \ms n \in \msets[\msets\Tropical]X$,
	a \emph{pairing} (or \emph{transportation plan})
	between $\ms m$ and $\ms n$ 
	is a partial bijective multiset map between $\mscur m$ and $\mscur n$,
	\ie a multiset $\pi \in \msets{(X \times \Tropical)^2}$ such that:
	\begin{gather*}
		\forall (x,a) \in X \times \Tropical,\ 
		\deletions\pi(x,a) \eqdef
		\mscur m(x,a) - \sum_{(y,b) \in X \times \Tropical} \pi((x,a),(y,b))
		\geq 0, \\
		\forall (y,b) \in X \times \Tropical,\ 
		\insertions\pi(y,b) \eqdef
		\mscur n(y,b) - \sum_{(x,a) \in X \times \Tropical} \pi((x,a),(y,b))
		\geq 0.
	\end{gather*}
	The set of all pairings between $\ms m$ and $\ms n$
	is denoted by $\pairings{\ms m}{\ms n}$.
	The \emph{deletions} $\deletions{\pi}$
	and \emph{insertions} $\insertions{\pi}$ of such a pairing $\pi$
	are the multisets of elements of $X \times \Tropical$
	defined by the expressions above.
	The pairing $\pi$ is extended to a total multiset map
	$\totpairing{\pi} \in \msets{((X \cup \set\bullet) \times \Tropical)^2}$
	by defining:
	\[	\totpairing{\pi} \eqdef \pi 
		+ \mset{((x,a),(\bullet,\infty))}[(x,a) \in \deletions{\pi}]
		+ \mset{((\bullet,\infty),(y,b))}[(y,b) \in \insertions{\pi}]. \]
	
	The map $\dist : \msets[\msets\Tropical]X \to [0,1]$ is defined by
	\[	\dist(\ms m,\ms n) \eqdef \inf_{\pi \in \pairings{\ms m}{\ms n}}
		\max_{p \in \totpairing\pi} \cost(p) \]
	where the \emph{matching cost} is defined by:
	\[	\cost((x,a),(y,b)) \eqdef \max(\dist(a,b), 2^{-a} \dist(x,y))
		= \begin{cases}
		\dist(a,b) & \text{if $a \neq b$,} \\
		2^{-a} \dist(x,y) & \text{otherwise.}
		\end{cases} \]
	In particular
	the \emph{deletion} and \emph{insertion costs} are $\cost((x,a),(\bullet,\infty)) = 2^{-a}$ and
	$\cost((\bullet,\infty),(y,b)) = 2^{-b}$.
	\qed
\end{definition}

We now use this construction to mutually define distances on
$\itypes[\msets\Tropical]$ and $\itypes![\msets\Tropical]$:
the former makes recursive calls to the latter,
the latter is defined to be the distance on
$\msets[\msets\Tropical]{\itypes[\msets\Tropical]}$
from \cref{def:bottleneck-dist},
which internally uses a distance on $\itypes[\msets\Tropical]$.

\begin{definition} \label{def:dist-on-itypes}
	A map $\dist : \itypes[\msets\Tropical] \to [0,1]$ is defined by:
	\[	\dist(\alpha,\beta) \eqdef \begin{cases} 
			0 & \text{if $\alpha\neq\beta$,} \\
			1 & \text{otherwise,}
		\end{cases} \qquad
		\dist(\ms m \lto \sigma, \ms n \lto \tau) \eqdef
			\max(\dist(\ms m, \ms n), \dist(\sigma, \tau)) \]
	where $\dist(\ms m, \ms n)$ is defined according to
	\cref{def:bottleneck-dist}.
	\qed
\end{definition}

\begin{lemma} \label{lem:dist-is-ultrametric}
	\Cref{def:bottleneck-dist,def:dist-on-itypes} equip
	$\itypes[\msets\Tropical]$ and $\itypes![\msets\Tropical]$
	with a complete ultrametric structure
	such that each type (\resp multi-type)
	is the limit of a Cauchy sequence of finite types (\resp multi-types).
\end{lemma}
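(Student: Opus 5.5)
The plan is to reduce everything to a family of \emph{truncations} indexed by time, and to show that the distance of \cref{def:bottleneck-dist,def:dist-on-itypes} is determined by how far two types agree under truncation. Throughout I read the atomic clause of \cref{def:dist-on-itypes} as $\dist(\alpha,\beta)=0$ if $\alpha=\beta$ and $1$ otherwise, since this is the only reading compatible with a metric.

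\emph{Step 1: well-definedness.} The first concern is that \cref{def:dist-on-itypes} is a mutual recursion on coinductive objects. I would define $\dist$ as the unique fixed point of the operator $\Phi$ on bounded pseudo-distances $\delta$ on $\itypes[\msets\Tropical]$ (with values in $[0,1]$) that performs one unfolding step. In this step, inside a multi-type the recursive comparison $\delta(\sigma,\tau)$ of paired elements is only ever multiplied by $2^{-a}$, where $a$ is the common grade.
\begin{itemize}
\item Coinductive calls occur only under grades $\ms a$ with $\ms a(0)=0$, so every occurrence of such a call carries a grade $a\geq 1$.
\item Grade-$0$ positions are inductive, so along them the recursion is well-founded.
\end{itemize}
Hence I expect to show, by an inner induction on the inductive part, that $\sup|\Phi\delta-\Phi\delta'|\leq\frac12\sup|\delta-\delta'|$. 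Banach's fixed point theorem then yields $\dist$.

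Since the index families $I$ are finite, $\pairings{\ms m}{\ms n}$ is finite, so the infimum is a minimum. This attainment will be used repeatedly below.

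\emph{Step 2: truncations and the ultrametric axioms.} For $d\in\Nat$, define the truncation $\trunc{\sigma}$ by recursion on $d$ and, for fixed $d$, by induction on the inductive structure. The truncation $\trunc{[\gr{\sigma_i}{\ms a_i}]_i}$ keeps, for each occurrence $a\in\ms a_i$ with $a\leq d$, the element $\trunc[d-a]{\sigma_i}$ at grade $a$, and discards occurrences with $a>d$. Arrows and atoms are truncated componentwise. This recursion terminates because at fixed $d$ only grade-$0$ steps keep $d$ unchanged, and those steps are inductive.

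The key claim is
\[
\dist(\sigma,\tau)<2^{-d}\iff\trunc{\sigma}=\trunc{\tau},
\]
together with its multi-type analogue. I would prove both directions by induction on $d$, with an inner induction on the grade-$0$ structure. In the direction from left to right, an optimal pairing of cost below $2^{-d}$ must match occurrences with equal grades $a\leq d$; it can only delete or insert occurrences of grade $>d$; and it forces matched types to be at distance $<2^{-(d-a)}$. In the direction from right to left, equality of truncations lets me build such a pairing explicitly.

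The ultrametric axioms then follow.
\begin{itemize}
\item Symmetry: invert the pairing.
\item Separation: $\dist(\sigma,\tau)=0$ implies that all truncations agree, and the relation "all truncations agree" is a bisimulation, so $\sigma=\tau$ by coinduction.
\item Strong triangle inequality: this is immediate from the characterization, since agreement of truncations at level $d$ is transitive. Alternatively, one can compose optimal pairings, letting a deletion followed by an insertion fall back to the bullet.
\end{itemize}

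\emph{Step 3: completeness and density.} Let $(\sigma_n)_n$ be a Cauchy sequence. For each $d$, the truncations $\trunc{\sigma_n}$ are eventually constant, say equal to $t_d$. These stabilized values are coherent, in the sense that $\trunc{t_{d+1}}=t_d$. I would define the limit $\sigma$ coinductively as the unique type with $\trunc{\sigma}=t_d$ for all $d$.

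I must check that $\sigma$ satisfies the guard of \drule{coind}. An infinite branch of $\sigma$ is an infinite branch through the $t_d$. Each $t_d$ is reached via finitely many grade-$0$ steps, so any infinite branch must cross infinitely many positive grades. Consequently, every coinductive call sits under a grade $\ms a$ with $\ms a(0)=0$. Convergence $\sigma_n\to\sigma$ then follows from the characterization in Step 2.

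For density, each $\trunc{\sigma}$ is a finite type, since finitely many grade-$0$ steps and finite $I$ produce a finite object. Moreover $\dist(\sigma,\trunc{\sigma})<2^{-d}$, because the truncations of $\sigma$ and $\trunc{\sigma}$ at level $d$ coincide. So $(\trunc{\sigma})_d$ is a Cauchy sequence of finite types converging to $\sigma$, and likewise for multi-types.

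\emph{Main obstacle.} I expect the main difficulty to be the left-to-right direction of the characterization in Step 2, and the corresponding contraction estimate in Step 1. The delicate point is the interplay between the cost $\dist(a,b)=2^{-\min(a,b)}$ for mismatched grades and the scaled cost $2^{-a}\dist(x,y)$ for matched ones: I must show that an optimal pairing cannot profit from mismatching grades below $d$. My plan for this is a direct case analysis on the operations of a pairing of cost $<2^{-d}$.
\begin{itemize}
\item A mismatch between $a\neq b$ with $\min(a,b)\leq d$ costs at least $2^{-d}$, so it is excluded.
\item A deletion or insertion at grade $a\leq d$ costs at least $2^{-d}$, so it is also excluded.
\end{itemize}
It follows that all surviving operations at grades $\leq d$ are equal-grade matches, to which the induction hypothesis applies.
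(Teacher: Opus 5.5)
Your proposal follows the paper's own proof. The paper introduces the truncations $\trunc{\sigma}$, proves $\dist(\sigma,\tau)<2^{-d}$ iff $\trunc{\sigma}=\trunc{\tau}$ by the same case analysis you describe (no mismatches, deletions or insertions at grades $\le d$), derives the axioms from it, builds limits from stabilized truncations, and gets density from $(\trunc[n]{\sigma})_n$. Your Banach argument for well-definedness is a welcome addition that the paper leaves implicit.

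One claim is wrong, though harmlessly so. $\pairings{\ms m}{\ms n}$ is not finite, because grades in $\msets\Tropical$ are arbitrary multisets: the paper's own $\ms m=[\gr{\ms m\lto\alpha}{[1,2,\dots]}]$ already makes $\mscur m$ infinite. You should not assume finite supports either. The sequence $([\gr{\alpha_k}{[k]}]_{k\le n})_n$ is Cauchy, and its only candidate limit is $[\gr{\alpha_k}{[k]}]_{k\in\Nat}$, which has infinite support; that is exactly what your Step 3 construction produces.

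Attainment of the infimum is never actually needed. $\dist<2^{-d}$ already yields some pairing of cost $<2^{-d}$. Suprema over infinitely many operations stay below $2^{-d}$ because every cost lies in $\{0\}\cup\{2^{-k}\mid k\in\Nat\}$.

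That same value set is what makes your ``immediate'' derivation of the strong triangle inequality from the characterization correct. On its own, the characterization only gives $\dist(\sigma,\upsilon)<2^{-d}$ whenever $2^{-d}>\max(\dist(\sigma,\tau),\dist(\tau,\upsilon))$. The paper avoids this point by composing pairings instead.
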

The proof of this lemma can be found in the Supplementary Material.

%To prove this, let us introduce the following definition
%and characterisation.
%
%\begin{definition}
%	The \emph{truncation at depth $d$} of a type or a multi-type
%	is defined for all $d \in \Tropical$ by induction by:
%	\begin{gather*}
%		\trunc\alpha \eqdef \alpha \qquad
%		\trunc{\ms m \lto \sigma} \eqdef \trunc{\ms m} \lto \trunc\sigma \\
%		\trunc{\ms m} : \tau \mapsto \left( a \mapsto \begin{cases}
%			\displaystyle\sum_{\substack{ 
%				\sigma \in \itypes![\msets\Tropical] \\ 
%				\trunc[d-a]\sigma = \tau
%			}} \ms m(\sigma)(a) & \text{if $a \leq d$,} \\
%			0 & \text{otherwise}
%		\end{cases} \right).
%	\end{gather*}
%	
%	The definition is by induction because each recursive call
%	either corresponds to an inductive formation rule ($a = 0$),
%	or makes the depth $d$ decrease ($1 \leq a \leq d$):
%	the definition says that
%	if $\sigma$ bears grade $a \leq d$ in $\ms m$
%	(rigorously, if $a$ is among the multiset of grades borne by $\sigma$
%	in $\ms m$),
%	then $\trunc[d-a]\sigma$ bears grade $a$ in $\trunc{\ms m}$;
%	all grades above $d$ in $\ms m$ are forgotten.
%	\qed
%\end{definition}
%
%\begin{lemma} \label{lem:dist-iff-trunc}
%	For all $\sigma, \tau \in \itypes[\msets\Tropical]$
%	and $d \in \Tropical$,
%	\[	\dist(\sigma, \tau) < 2 ^{-d}
%		\quad\text{\ifandonlyif}\quad
%		\trunc\sigma = \trunc\tau, \]
%	and similarly for all $\ms m,\ms n \in \itypes![\msets\Tropical]$.
%\end{lemma}
%	
%	\begin{proof}
%	%TODO
%	\end{proof}
%	
%	\begin{proof}[Proof of \cref{lem:dist-is-ultrametric}]
%	%TODO
%	\end{proof}

Finally we reach our goal, namely to define a distance on typing derivations
so that any derivation in $\itypes[\msets\Tropical]$
can be approximated by derivations using only finite types:

\begin{lemma} \label{lem:dist-on-derivations}
	The set of all typing derivations in $\itypes[\msets\Tropical]$
	is equipped with a complete ultrametric structure
	by the distance $\dist$ defined by:
	\begin{align*}
	\dist\left(
		\begin{prooftree}[center]
		\hypo{\strut}
		\infer1{ x:[\gr{\sigma}{[0]}] \vdash x:\sigma }
		\end{prooftree}
	,
		\begin{prooftree}[center]
		\hypo{\strut}
		\infer1{ x:[\gr{\tau}{[0]}] \vdash x:\tau }
		\end{prooftree}
	\right) & \eqdef \dist(\sigma, \tau), \\[\topsep]
	\dist\left(
		\begin{prooftree}[center]
		\hypo{\derivD}
		\infer1{ \Gamma \vdash λx.M:\ms m \lto \sigma }
		\end{prooftree}
	,
		\begin{prooftree}[center]
		\hypo{\derivD'}
		\infer1{ \Gamma' \vdash λx.M:\ms m' \lto \sigma' }
		\end{prooftree}
	\right) & \eqdef \dist(\derivD, \derivD'), \\[\topsep]
	\dist\left(
		\begin{prooftree}[center]
		\hypo{\derivD}
		\hypo{\ms E}
		\infer2{ \Gamma \vdash MN:\sigma }
		\end{prooftree}
	,
		\begin{prooftree}[center]
		\hypo{\derivD'}
		\hypo{\ms E'}
		\infer2{ \Gamma' \vdash MN:\sigma' }
		\end{prooftree}
	\right) & \eqdef \max(\dist(\derivD,\derivD'), \dist(\ms E,\ms E')) \\
	\intertext{where $\dist(\ms E,\ms E')$ is defined according
		to \cref{def:bottleneck-dist}, and in all other cases,}
	\dist(\derivD, \derivD') & \eqdef 1.
	\end{align*}
	In addition, each derivation
	is the limit of a Cauchy sequence of derivations
	using only finite types and multi-types.
\end{lemma}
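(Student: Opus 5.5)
I would reduce everything to the ultrametric structure on types and multi-types given by \cref{lem:dist-is-ultrametric}, and treat a derivation as a finitely-branching (but possibly infinitely-graded) tree whose shape is determined by the subject $M$. Before checking any axioms, I need to be careful that the clauses indeed define a map: the distance between two derivations is nontrivial only when they have the same subject and the same last rule (in all other cases it is $1$). For the \drule{!} premises $\ms E,\ms E'$ (families of derivations indexed by $\msets\Tropical$-grades), I will reuse \cref{def:bottleneck-dist}, instantiated with $X$ the set of derivations of judgments with subject $N$. The recursion is well-founded because it follows the finite syntax tree of $M$: derivations are inductive objects even though types are coinductive. So the definition is by induction on $M$, simultaneously for single derivations and for $\msets\Tropical$-families of derivations of the same subject.

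The metric axioms are proved by the same induction on $M$. Symmetry is immediate, and reflexivity reduces at the leaves to $\dist(\sigma,\sigma)=0$. Separation holds because distance $0$ forces equality of the axiom types and an exact pairing in the bottleneck distance; here I use that the infimum over finite pairings is attained, since multisets occurring in finite families are finite. The strong triangle inequality is the delicate point. For the bottleneck construction, I compose pairings $\pi\in\pairings{\ms m}{\ms n}$ and $\pi'\in\pairings{\ms n}{\ms p}$, letting deletions and insertions pass through $\bullet$. The cost of each composed match is then bounded by the max of the two costs, using that $\dist$ on $\Tropical$ and $\dist$ on $X$ are ultrametric and the factor $2^{-a}$ behaves well: if $a\neq c$, then $\dist(a,c)\le\max(\dist(a,b),\dist(b,c))$. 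This is exactly the argument already needed for \cref{lem:dist-is-ultrametric}, so I will factor it as a general statement: if $(X,\dist)$ is (complete) ultrametric, then so is $\msets[\msets\Tropical]X$ with the bottleneck distance, and I invoke it for $X$ a set of derivations.

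Completeness comes next. Given a Cauchy sequence $(\derivD_k)$, eventually all distances are $<1$, so all $\derivD_k$ share subject and last rule, and I again proceed by induction on $M$. For an axiom, I use completeness of types. For $\lto_i$, the limit is taken in the premise. For $\lto_e$, the left premises converge by induction, and the right families $\ms E_k$ converge by completeness of the bottleneck space over the (complete, by induction) space of derivations of $N$. I then check that the limiting pieces reassemble into a valid rule instance, i.e.\ that the conclusion judgments are continuous functions of the premises. This is the step I expect to be the main obstacle: one must show that the contexts $\mssum_i \ms a_i\mstimes\Gamma_i$ and the types $\ms m\lto\tau$ read off the premises depend continuously (indeed $1$-Lipschitzly) on the premises. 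In particular, multiplying grades by $\ms a_i$ only shifts weights by $a\ge 0$, which multiplies costs by $2^{-a}\le 1$, and multiset union does not increase bottleneck distance. Additionally, the formation condition $\ms a(0)=0$ for coinductive positions must be preserved in the limit. That holds because the grades are discrete: close grades at depth $<$ precision coincide.

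Finally, approximation by finite derivations goes by truncation. For $d\in\Nat$, I define $\trunc{\derivD}$ by erasing from each \drule{!} family every element with grade $>d$, shifting the bound to $d-a$ inside (exactly as in the truncation of types), and propagating the induced truncations to types in axioms. I then check that $\trunc{\derivD}$ is a valid derivation, since erasing premises of \drule{!} simply removes the corresponding summands from contexts, consistently with the relevant axiom. It uses only finite types, because coinductive steps require positive grade and depth decreases. Moreover, $\dist(\trunc{\derivD},\derivD)\le 2^{-d}$, since every discrepancy is a deletion or a mismatch occurring at accumulated grade $>d$. This gives the required Cauchy sequence.
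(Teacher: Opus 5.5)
Your proposal follows the route the paper has in mind. The paper gives no separate proof: it calls the lemma a transparent induction on the subject that mirrors the proof of \cref{lem:dist-is-ultrametric} (composition of pairings for the strong triangle inequality, truncations $\trunc[n]{\derivD}$ for finite approximants, as in the $\Pi^0_2$ argument), and its one non-trivial remark is exactly the $1$-Lipschitz dependence of conclusion types and contexts on derivations that you single out for reassembling limits. One correction: the premise families are not finite objects---grades such as $[1,2,\dots]$ already occur in the derivation of $Y$, and Cauchy sequences of derivations (e.g.\ typing $xy$ with $y:[\alpha_j:[j]]_{j\le k}$, $k\to\infty$) converge to \drule{!} rules with infinitely many premises---so drop the ``finitely-branching'' framing and get separation grade by grade instead of from an infimum attained over finite pairings: at distance $<2^{-d}$ the premise occurrences of grade $a\le d$ must be matched bijectively within grade $a$, and having finitely many occurrences per grade lets you extract an exact matching.
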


Observe that two typing derivations are at distance $1$ as soon as
they do not type the same term.
Also, although the definition is a completely transparent induction,
the convergence of a sequence of typing derivations
enforces the convergence of all types and contexts appearing
at a given position: one can show that
for all derivations $\derivD \derives \Gamma \vdash M:\sigma$
and $\derivD' \derives \Gamma' \vdash M:\sigma'$
we have $\dist(\sigma,\sigma') \leq \dist(\derivD,\derivD')$
as well as $\dist(\Gamma(x),\Gamma'(x)) \leq \dist(\derivD,\derivD')$
for all $x \in \Var$.

%\begin{lemma}
%	For all derivations $\derivD \derives \Gamma \vdash M:\sigma$
%	and $\derivD' \derives \Gamma' \vdash M:\sigma'$
%	in $\itypes[\msets\Tropical]$,
%	\[	\forall x\in\Var,\ \dist(\Gamma(x),\Gamma'(x)) 
%			\leq \dist(\derivD,\derivD') \qquad
%		\dist(\sigma,\sigma') \leq \dist(\derivD,\derivD'). \]
%	The same holds for derivations in $\itypes![\msets\Tropical]$.
%\end{lemma}

\subsubsection{Positive and Negative Occurrences of the Empty $\msets\Tropical$-Set}

For the typing system $\itypes$ to be sound,
we needed to restrict it to types where the empty $\Tropical$-set
never occurs in positive position (\cref{def:itypes-positive,the:pos-itypes-sound-complete-for-hhn});
this must also hold for $\itypes[\msets\Tropical]$.
As a consequence, completeness should be proved \wrt the restricted system, and 
since the proof will proceed by building a converging sequence
of approximations of a typing derivation,
we should investigate how the condition on occurrences of $[]$
can be made continuous.

Indeed, all the ultrametric definitions above particularize to
$\itypes[\msets\Tropical]+$, $\itypes![\msets\Tropical]+$, etc.,
but it is not true that these subsets form closed subspaces!
For example, the sequence of types $([\gr \alpha n] \lto \alpha)_{n\in\Nat}$
is in $\itypes[\msets\Tropical]-$,
but its limit $[] \lto \alpha$ is not.
Whereas closure on types is lost,
it is possible to recover a notion of closure on typing derivations,
as we will now expose.

\begin{definition}
	For all $n \in \Nat$, we denote by $\itypes[\msets\Tropical]+[n]$
	the subset of $\itypes[\msets\Tropical]$
	containing all types where $[]$ may occur in positive position
	only inside $n$ nested $\msets\Tropical$-sets.
	(Observe that
	$\itypes[\msets\Tropical]+[0] = \itypes[\msets\Tropical]$
	and that $\itypes[\msets\Tropical]+ = 
	\bigcap_{n \in \Nat} \itypes[\msets\Tropical]+[n]$.)
	Similarly we define $\itypes[\msets\Tropical]-[n]$,
	$\itypes![\msets\Tropical]+[n]$ and $\itypes![\msets\Tropical]-[n]$;
	the precise definition can be presented as a stratification
	of \cref{def:itypes-positive}.
	Then we write $\derivD \derives+[n] \Gamma \vdash M:\sigma$
	if there is a derivation $\derivD \derives \Gamma \vdash M:\sigma$
	with $\Gamma : \Var \to \itypes[\msets\Tropical]-[n]$
	and $\sigma \in \itypes[\msets\Tropical]+[n]$.
	\qed
\end{definition}

\begin{lemma} \label{lem:approx-positive}
	If $(\derivD_n)_{n \in \Nat}$ is a converging sequence of derivations
	$\derivD_n \derives+[n] \Gamma_n \vdash M:\sigma_n$, then
	$\lim_n\derivD_n \derives+ \lim_n\Gamma_n \vdash M:\lim_n\sigma_n$.
\end{lemma}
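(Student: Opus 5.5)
The proof has two parts. First we establish that the limit $\derivD \eqdef \lim_n \derivD_n$ is a typing derivation of $M$ in $\itypes[\msets\Tropical]$. Then we check that its context and type satisfy the positivity condition of \cref{def:itypes-positive}.

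\emph{The limit is a derivation.} By \cref{lem:dist-on-derivations}, the space of derivations is complete, so $\derivD$ is a derivation. All $\derivD_n$ type the same term $M$, since otherwise they would be at distance $1$. Hence $\derivD$ types $M$, and it has the same tree shape as the $\derivD_n$ for $n$ large. The remark after \cref{lem:dist-on-derivations} bounds the distance between types and contexts by the distance between derivations. It therefore gives that the conclusion of $\derivD$ is $\lim_n\Gamma_n \vdash M : \lim_n\sigma_n$.

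\emph{Positivity.} We must show that $[]$ occurs neither in positive position in $\lim_n\sigma_n$ nor in negative position in the $\lim_n\Gamma_n(x)$. The example $([\gr\alpha n]\lto\alpha)_n$ shows this cannot come from closedness of the type spaces alone. We have to use the shape of derivations. We proceed by contradiction: suppose $[]$ occurs in a forbidden position, at some path $p$ in $\lim_n\sigma_n$ (or in a context entry). Such a path crosses finitely many $\msets\Tropical$-sets, say $k$ nested ones, and its depth in the metric is finite. A finite-depth occurrence of $[]$ is a \emph{stable} feature: by the definition of the bottleneck distance, an empty $\msets\Tropical$-set at some position can only be the limit of $\msets\Tropical$-sets whose elements all carry grades tending to $\infty$. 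The convergence of the $\derivD_n$ is uniform on the whole derivation tree, and the grades in a derivation are tied to those of subderivations through the rules \drule{!} and \drule{\lto_e}. Using this, we argue by induction on $M$ along the derivation tree that grades drifting to $\infty$ in a positive multiset of $\sigma_n$ force the corresponding subderivations in the \drule{!} premises to disappear. This is either because they are deleted in the pairing or because they have cost $2^{-a}$ with $a\to\infty$. We then have two cases:
\begin{itemize}
\item If the multiset is a negative occurrence that originates from a variable, through axioms and the context sums $\Gamma\msplus\Gamma'$, then it must be empty already in $\Gamma_n$ from some rank on.
\item Otherwise, in a positive position coming from the head of an abstraction, the emptiness propagates to the antecedent in an arrow typed in the context.
\end{itemize}
In both cases we obtain, for all $n$ large, an occurrence of $[]$ in forbidden position at nesting depth exactly $k$ in $\derivD_n$. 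Taking $n > k$ contradicts $\derivD_n \derives+[n] \Gamma_n\vdash M:\sigma_n$.

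\emph{Expected obstacle.} The main difficulty is the propagation step just described: showing that an empty set appearing \emph{only in the limit} at depth $k$ is already witnessed by an actual $[]$ at depth at most $k$ in the approximants. The natural first attempt is to track, in each $\derivD_n$, the occurrence in the types that converges to the offending $[]$, and to follow it through the derivation via subject positions. The relevant rules are \drule{ax}, which copies a type from context to conclusion, \drule{\lto_i}, which moves context into the type, and \drule{\lto_e}, which moves function arguments to premises. Following the occurrence this way, we find a point where a multiset in a \emph{negative} position of the head variable's type corresponds to the context of a \drule{!} rule. There the shrinking grades make the multiset literally empty, because the relevant axiom $[\gr\sigma{[0]}]$ has grade $[0]$, which cannot go to $\infty$. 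If this fails, I would instead strengthen the statement to a uniform invariant: each $\derivD_n$ is $\derives+[m]$ at every subderivation with $m$ adjusted by depth. That invariant can then be proved closed by induction on the derivation tree.
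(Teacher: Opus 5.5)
\textbf{There is a genuine gap, and it cannot be closed as stated.} Your first part is fine, apart from the claim that the limit \enquote{has the same tree shape as the $\derivD_n$ for $n$ large}. That claim is false: a \drule{!} node can gain or lose a premise of grade $a$ at cost $2^{-a}$, so shapes need not stabilise (compare the $Y$ example).

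The failure is in the positivity part. You claim that a $[]$ in forbidden position in the limit is already \emph{literally} present in the approximants (\enquote{it must be empty already in $\Gamma_n$ from some rank on}, \enquote{the shrinking grades make the multiset literally empty}). This is false, and in fact the statement as written has a counterexample. The paper states this lemma without proof.

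Take $M = xy$. Let $\derivD_n$ apply \drule{\lto_e} to the axiom $x:[\gr{([\gr{\alpha}{[n]}]\lto\alpha)}{[0]}] \vdash x:[\gr{\alpha}{[n]}]\lto\alpha$ and to the \drule{!} rule with the single premise $y:[\gr{\alpha}{[0]}]\vdash y:\alpha$ at grade $[n]$.
\begin{itemize}
\item The conclusion is $x:[\gr{([\gr{\alpha}{[n]}]\lto\alpha)}{[0]}],\ y:[\gr{\alpha}{[n]}]\vdash xy:\alpha$. It contains no $[]$ at all, so $\derivD_n\derives+[n]$ holds, and every sub-derivation is positive too.
\item The sequence converges: $\dist(\derivD_p,\derivD_q)\le 2^{-\min(p,q)}$.
\item The limit types $y$ with $[]$ by an empty \drule{!} rule. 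Its conclusion is $x:[\gr{([]\lto\alpha)}{[0]}]\vdash xy:\alpha$, whose context entry for $x$ contains $[]\lto\alpha$, i.e.\ an empty multi-type in negative position. \Cref{def:itypes-positive} forbids exactly this in contexts.
\end{itemize}
This is not a technicality: the same derivation with $\Omega$ in place of $y$ types $x\Omega\notin\HHN$.

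The mechanism is the one your \enquote{expected obstacle} paragraph overlooks. The axiom for the head variable $x$ does keep grade $[0]$. But the grades inside the argument multiset of its type are set by the \drule{!} rule typing the argument, and those can drift to $\infty$.

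Your fallback invariant (positivity at every sub-derivation) is also satisfied by this sequence, so it does not help. No condition that only forbids \emph{literal} occurrences of $[]$ is closed under limits of derivations.

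\textbf{What is missing} is a quantitative invariant, uniform in $n$, on the multisets that positivity requires to be non-empty. For instance: in the $n$-th approximant, each such multiset at graded depth $d<n$ contains a type carrying some grade $a\le K$, for a fixed $K$.
\begin{itemize}
\item Types at distance $<2^{-(d+K)}$ have the same truncation at depth $d+K$. Such an element is therefore matched exactly and survives in the limit, so the limit is positive.
\item This is a closedness argument on the conclusion's types alone; no analysis of derivations is needed.
\item It is also what the completeness construction actually provides. Head-variable arguments receive $[\gr{\tau_j}{[1]}]$, and subject expansion leaves the conclusion unchanged.
\end{itemize}
The lemma's hypothesis must be strengthened along these lines before any proof can go through.
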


\subsubsection{Collecting the Pieces}

We can now prove completeness of positive typing in $\itypes[\msets\Tropical]$ \wrt hereditary head normalization,
following a standard path:
we prove subject expansion (\cref{the:mset-itypes-subjexp}),
then deduce from the fact that all normal terms are typable
that all normalizing terms are typable.
However we consider an infinitary notion of normalization,
hence this step is carried in an approximated way:
approximations of normal terms are approximately typable,
hence so are (hereditarily) normalizing terms;
\cref{lem:approx-positive} will allow us to conclude.

The last ingredient we need is subject expansion.
We refine the usual statement by making it
\begin{ienumerate}
\item \emph{proof-relevant},
	in the sense that any β-reduction step $M \bred M'$
	induces a (deterministic) rewriting of typing derivations of $M'$
	into typing derivations of $M$,
\item \emph{positivity-preserving}, by proving that this rewriting
	preserves $\derives+[n]$,
\item \emph{non-expansive}, by proving that this rewriting does not
	increase the distance between derivations.
\end{ienumerate}

\begin{lemma}[Backward Substitution Lemma]
	For all $\derivD \derives+[n] \Gamma \vdash M \subst N : \sigma$,
	one can build derivations
	$\derivD_{M,N,x}^l \derives+[n] \Gamma_M, x:\ms m \vdash M : \sigma$ and
	$\derivD_{M,N,x}^r \derives \Gamma_N \vdash N : \ms m$
	such that $\Gamma_N : \Var \to \itypes[\msets\Tropical]-[n]$
	and $\ms m \in \itypes[\msets\Tropical]-[n]$,
	$\Gamma = \Gamma_M \msplus \Gamma_N$,
	and for all $\derivD$ and $\derivE$,
	$\dist(\derivD_{M,N,x}^l,\derivE_{M,N,x}^l) \leq \dist(\derivD,\derivE)$
	and
	$\dist(\derivD_{M,N,x}^r,\derivE_{M,N,x}^r) \leq \dist(\derivD,\derivE)$.
\end{lemma}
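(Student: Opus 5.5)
We will argue by induction on the structure of $M$, building a deterministic transformation $\derivD \mapsto (\derivD^l_{M,N,x},\derivD^r_{M,N,x})$. The idea is to collect, inside $\derivD$, all the subderivations that type the occurrences of $N$ substituted for $x$. We then replace each of them by an axiom on $x$. Finally we gather them with rule \drule{!} into a derivation of $N:\ms m$.

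\emph{Variable case.} Suppose $M = x$. Then $\derivD$ types $N$ with some $\sigma$. We set $\derivD^l$ to be the relevant axiom $x:[\gr{\sigma}{[0]}] \vdash x:\sigma$, and $\derivD^r$ to be the \drule{!} rule with the single premise $\derivD$ at grade $[0]=\msone$. Then $\ms m = [\gr\sigma{[0]}]$, $\Gamma_M$ is empty, and $\Gamma_N = \msone \mstimes \Gamma = \Gamma$. This is where the relevant axiom is crucial: it makes the split $\Gamma = \Gamma_M \msplus \Gamma_N$ exact, with no weakening to distribute.

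\emph{Other base case and abstraction.} Suppose $M = y \neq x$. Then $\derivD$ is unchanged, $\ms m = []$, and $\derivD^r$ is the \drule{!} rule with no premises. For an abstraction we apply the induction hypothesis under $\lto_i$.

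\emph{Application.} Suppose $M = M_1M_2$. Then $\derivD$ ends with $\lto_e$, with premises $\derivD_1$ for $M_1\subst N$ and a $\Semiring$-set of premises $[\gr{\derivD_{2,i}}{\ms a_i}]_i$ for $M_2\subst N$. We apply the induction hypothesis to $\derivD_1$ and to each $\derivD_{2,i}$, obtaining multi-types $\ms m_1$ and $\ms m_{2,i}$. We then set $\ms m \eqdef \ms m_1 \msplus \mssum_i \ms a_i \mstimes \ms m_{2,i}$. The derivation $\derivD^r$ is obtained by flattening the \drule{!} rules: a $\ms a_i$-weighted premise which is itself a \drule{!} derivation with grades $\ms b_j$ becomes a family of premises with grades $\ms a_i \mstimes \ms b_j$. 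The context equation then follows from associativity and distributivity in $\msets\Tropical$.

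\emph{Side conditions and non-expansiveness.} The grades $\ms a_i\mstimes\ms b_j$ still satisfy the guard condition of \drule{coind}, because $\msets\Tropical$ is positive: a $0$ in a product forces $0$ in both factors. Positivity is preserved because every type placed in $\ms m$ already occurs in $\derivD$ in a negative position, as the type of an argument. Likewise every context type in $\Gamma_N$ comes from $\Gamma$. The index $n$ is preserved because types are copied unchanged and no new nesting is introduced.

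Non-expansiveness is proved along the same induction. The transformation acts componentwise, and distances of derivations are maxima of componentwise distances. For the flattened multisets one composes optimal pairings: a pairing between the premises of $\derivD$ and $\derivE$ at the outer level, together with pairings between the inner families, induces a pairing of the flattened families. Its cost is bounded by the maximum of the outer and inner costs, because $2^{-(a+b)} \le 2^{-a}$.

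The main obstacle is exactly this last step. The bottleneck distance of \cref{def:bottleneck-dist} is an infimum over pairings, so the flattening/product construction $\ms a\mstimes(-)$ must be shown to be $1$-Lipschitz for the bottleneck distance. Unmatched (inserted or deleted) elements also have to be handled: their cost $2^{-a}$ must dominate the induced deletion costs $2^{-(a+b)}$. I would establish this first as a separate lemma stating that $\dist(\mssum_i \ms a_i\mstimes\ms n_i,\mssum_i\ms a'_i\mstimes\ms n'_i)$ is bounded by the distance of the outer families. I would then run the induction above using that lemma.
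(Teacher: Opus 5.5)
Your construction is the same as the paper's. You induct on $M$. For $M=x$ you take the axiom together with a one-premise \drule{!} at grade $[0]$; for $M=y$ you take the empty multi-type. For an application, $\derivD^l_{M,N,x}$ is rebuilt by \drule{\lto_e} and $\derivD^r_{M,N,x}$ gathers the right-hand derivations.

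The only difference is organisational. The paper also inducts over multi-type derivations of $M\subst N:\ms n$, treating \drule{!} as a separate case, so that $\derivD^r$ for an application is the sum of two \drule{!}-derivations. You instead inline that case by flattening. Your flattening lemma is exactly what the paper hides behind ``a careful decomposition of the involved pairings''.

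When you prove it, the operative fact is that shifting grades by $a$ multiplies every matching, deletion and insertion cost by exactly $2^{-a}$. A bound by ``the maximum of the outer and inner costs'' is not enough. An equal-time outer match costs $2^{-a}\dist(\derivD_{2,i},\derivE_{2,k})$, and $\dist(\derivD_{2,i},\derivE_{2,k})$ itself may exceed $\dist(\derivD,\derivE)$. The inequality $2^{-(a+c)}\le 2^{-a}$ only covers outer deletions and matches at different times.

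The genuine gap is in your positivity paragraph. It is false that every type placed in $\ms m$ already occurs negatively in $\derivD$. For $M=x$, the type placed in $\ms m$ is the conclusion type $\sigma$, which is only known to lie in $\itypes[\msets\Tropical]+[n]$. No other argument can rescue the claim:
\begin{itemize}
\item Take $N=\lambda y.z$ and $\derivD \derives+[n] z:[\gr{\alpha}{[0]}]\vdash \lambda y.z : []\lto\alpha$, which is positive for every $n$.
\item The only derivation of $x$ is the axiom, which forces $\ms m=[\gr{([]\lto\alpha)}{[0]}]$.
\item Its element $[]\lto\alpha$ is not negative. So the clause $\ms m\in\itypes![\msets\Tropical]-[n]$, and with it $\derives+[n]$ for $\derivD^l_{M,N,x}$, fails for all large $n$.
\end{itemize}
For the same reason you cannot carry $\derives+[n]$ as an induction hypothesis on subderivations. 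When $M_1\subst N$ receives type $\ms n\lto\sigma$, positivity of that type needs the elements of $\ms n$ to be negative, whereas positivity of the $\derivD_{2,i}$ needs them to be positive.

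The paper's proof is silent on this point. The repair is as follows:
\begin{itemize}
\item Build the two derivations, and prove non-expansiveness, for arbitrary derivations.
\item Drop the requirement on $\ms m$: subject expansion only uses $\ms m$ inside the internal type $\ms m\lto\sigma$ of the redex.
\item Obtain $\Gamma_M,\Gamma_N:\Var\to\itypes![\msets\Tropical]-[n]$ from $\Gamma=\Gamma_M\msplus\Gamma_N$, since $\msplus$ is multiset union, as you already note.
\end{itemize}
Subject expansion then preserves the conclusion judgment exactly, and hence preserves $\derives+[n]$.
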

Details about the proof of this lemma
can be found in the Supplementary Material.

\begin{theorem}[subject expansion] \label{the:mset-itypes-subjexp}
	For all terms $M, M' \in \pureterms$ such that $M \bred* M'$,
	and for all $\derivD' \derives+[n] \Gamma \vdash M':\sigma$,
	there is a derivation
	$\subjexp{M \bred* M'}(\derivD') \derives+[n] \Gamma \vdash M:\sigma$.
	
	In addition the maps $\subjexp{M \bred M'}$ are non-expansive:
	for all $\derivD' \derives+[n] \Gamma \vdash M':\sigma$
	and $\derivE' \derives+[n] \Delta \vdash M':\tau$,
	$\dist(\subjexp{M \bred* M'}(\derivD'), \subjexp{M \bred* M'}(\derivE'))
	\leq \dist(\derivD', \derivE')$.
\end{theorem}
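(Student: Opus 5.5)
The plan is to first treat a single reduction step $M \bred M'$ and then iterate along $M \bred* M'$: since positivity preservation and non-expansiveness are both closed under composition, the general statement follows by induction on the length of the reduction sequence. For one step, I will proceed by induction on the context in which the redex is fired, i.e. on the position of the redex inside $M$.

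\emph{Base case: the redex is at the root}, so $M = (\abs x.P)N$ and $M' = P\subst N$. Given $\derivD' \derives+[n] \Gamma \vdash P\subst N : \sigma$, I apply the Backward Substitution Lemma to obtain $\derivD^l \derives+[n] \Gamma_P, x:\ms m \vdash P:\sigma$ and $\derivD^r \derives \Gamma_N \vdash N:\ms m$, with $\Gamma = \Gamma_P \msplus \Gamma_N$. Rule $\drule{\lto_i}$ then gives $\Gamma_P \vdash \abs x.P : \ms m \lto \sigma$, and rule $\drule{\lto_e}$ gives $\Gamma \vdash M:\sigma$. This composite is the deterministic map $\subjexp{M\bred M'}$.

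Positivity must be checked: $\ms m \lto \sigma$ lies in the $n$-stratified positive class because $\ms m \in \itypes![\msets\Tropical]-[n]$, and $\sigma$ is positive at level $n$. The context $\Gamma$ is unchanged, so the judgement is still $\derives+[n]$. One delicate point arises if $\ms m$ is allowed to be empty at the relevant depth, that is, when $x$ does not occur in $P$. Here the stratified definition must accept $[] \lto \sigma$ as positive. This is exactly what the $n$-nesting index permits, so I will verify it against the stratification of \cref{def:itypes-positive}.

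Non-expansiveness at the root follows directly from \cref{lem:dist-on-derivations}. The new distance is $\max(\dist(\derivD^l,\derivE^l),\dist(\derivD^r,\derivE^r))$; the $\lto_i$ node adds nothing and the $\lto_e$ node takes a max. Both terms are bounded by $\dist(\derivD',\derivE')$ by the lemma.

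\emph{Inductive cases.} If the redex sits in $\abs x.P$, or in the function part $P$ of an application $PQ$, I rebuild the last rule around the inductively expanded subderivation. The distance is defined by the same max-structure, so non-expansiveness transfers immediately. If the redex sits in the argument $Q$ of $PQ$, the derivation of $Q$ is an $\drule{!}$-family $[\gr{(\Gamma_i\vdash Q':\sigma_i)}{\ms a_i}]$. I expand each member separately and keep the grades. The inner members are only $\derives$-typed, not $\derives+[n]$-typed, so this case needs an unrestricted version of the single-step expansion as well. I will therefore prove the unrestricted and the positive statements simultaneously.

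\emph{Main obstacle.} The hardest step is non-expansiveness for the $\drule{!}$ case, where distances are bottleneck distances between families. Take an optimal pairing $\pi$ between the two families of subderivations of $Q'$, and transport it unchanged to the expanded families. Grades are untouched, so deletion and insertion costs stay the same. Each matched pair with equal grade $a$ has cost $2^{-a}\dist(\cdot,\cdot)$, which can only decrease by induction. Hence the max over $\totpairing\pi$ does not increase, and the infimum is bounded by the original distance.
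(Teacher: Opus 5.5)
Your proposal is correct and is essentially the argument the paper intends; the paper itself gives no proof beyond stating the Backward Substitution Lemma as the key ingredient. Your steps are: the root case via that lemma plus $\drule{\lto_i}$ and $\drule{\lto_e}$, with the distance bound read off the max-structure of \cref{lem:dist-on-derivations}; the contextual cases by rebuilding the last rule, transporting any pairing of the $\drule{!}$-family unchanged in the argument case; and composition along $\bred*$.

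One simplification: $\derives+[n]$ only constrains the conclusion's context and type, and expansion leaves both untouched, so positivity is automatic. You should therefore just prove the unrestricted version and drop the check that $\ms m \lto \sigma$ is positive via $\ms m \in \itypes![\msets\Tropical]-[n]$. That check is not only unnecessary but unreliable, because $\ms m$ need not be negative. For example, expanding $w:[\gr{\alpha}{[0]}] \vdash (\abs y.w)z:\alpha$ along $(\abs x.xz)(\abs y.w) \bred (\abs y.w)z$ forces $\ms m = [\gr{([]\lto\alpha)}{[0]}]$.
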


This leads us to the concluding result of this subsection;
the difficult part of the main theorem of the paper,
namely \cref{the:pos-itypes-sound-complete-for-hhn},
follows as a straightforward corollary.

\begin{theorem} \label{the:pos-mset-itypes-sound-complete-for-hhn}
	Positive typing in $\itypes[\msets\Tropical]$ is sound and complete 
	for hereditary head normalization:
	for all $M \in \pureterms$, 
	$M \in \HHN$ \ifandonlyif
	there are $\Gamma$ and $\sigma$ such that
	$\derives+ \Gamma \vdash M : \sigma$.
\end{theorem}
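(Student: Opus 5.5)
For soundness we transfer the argument already given for $\itypes$. Given $\derives+ \Gamma \vdash M:\sigma$ in $\itypes[\msets\Tropical]$, \cref{obs:min-maps-msets-to-tropicals} maps the derivation to one in $\itypes$. Since $\min$ sends the empty $\msets\Tropical$-set to the empty $\Tropical$-set, and nonempty ones to nonempty ones, positivity is preserved. Hence $M\in\HHN$ by the soundness half of \cref{the:pos-itypes-sound-complete-for-hhn}. Alternatively one can replay that proof directly: $\msets\Tropical$ is a resource semiring, so \cref{the:itypes-sr-se,the:itypes-sound-complete-for-hn} apply, and the head variable's arguments receive nonempty multi-types, which lets us use coinduction.

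For completeness, fix $M\in\HHN$ and consider its Böhm tree, which contains no $\bot$. For each depth $n$, pick a finite β-reduct $M \bred* M_n$ whose Böhm tree agrees with that of $M$ up to depth $n$; these can be chosen so that $M_{n+1}$ is a reduct of $M_n$. In $M_n$ the top $n$ layers are head normal forms $\lambda\vec x.y\,P_1\dots P_k$. We build a derivation $\derivE_n \derives+[n] \Gamma_n \vdash M_n : \sigma_n$ by recursion on the Böhm tree, as follows.
\begin{itemize}
\item The head variable $y$ receives the axiom type $\ms m_1\lto\dots\lto\ms m_k\lto\alpha$.
\item Each argument $P_j$ at Böhm depth $d<n$ is typed with the multi-type $\ms m_j = [\gr{\tau_j}{[d+1]}]$, where $\tau_j$ is the type obtained recursively for $P_j$. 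The weight $d+1$ is placed at position $d$, strictly increasing with depth.
\item Subterms below depth $n$ receive empty multi-types $[]$.
\end{itemize}
These empty multi-types occur positively only inside at least $n$ nested multisets, which is why the result lies in $\derives+[n]$ rather than $\derives+$.

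The construction must be canonical, so that the types given to a Böhm tree node do not depend on $n$ beyond the truncation level. Then $\derivE_n$ and $\derivE_{n+1}$ differ only at grades $\geq n$, and their distance is at most $2^{-n}$ under \cref{def:bottleneck-dist}. The grade convention above is what forces such discrepancies into the deep, heavily weighted part.

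Next we pull each $\derivE_n$ back along the reduction. Set $\derivD_n \eqdef \subjexp{M\bred* M_n}(\derivE_n)$, using \cref{the:mset-itypes-subjexp}; this is a derivation $\derives+[n]\Gamma_n\vdash M:\sigma_n$. To compare consecutive terms, write $\derivE_n' \eqdef \subjexp{M_n\bred* M_{n+1}}^{-1}$-compatible: more precisely, take the derivation of $M_{n+1}$ truncated at level $n$ and check that its expansion to $M_n$ coincides with $\derivE_n$. By functoriality of the subject expansion maps along composed reductions and by their non-expansiveness, this gives $\dist(\derivD_n,\derivD_{n+1})\le 2^{-n}$. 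So $(\derivD_n)$ is Cauchy, and it converges by completeness (\cref{lem:dist-on-derivations}). Finally, \cref{lem:approx-positive} yields $\lim_n\derivD_n \derives+ \lim\Gamma_n\vdash M:\lim\sigma_n$.

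The main obstacle is this coherence step. The expansion of a truncated typing of $M_{n+1}$ must agree with the expansion of the typing of $M_n$ up to distance $2^{-n}$. This needs the subject-expansion rewriting to be deterministic and compatible with truncation, and the multiset distance must reward precisely the increasing grades. I would first try to prove that subject expansion commutes with a truncation operator at depth $d$, that is, with "forget all grade-entries $>d$".
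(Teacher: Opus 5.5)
Your soundness argument is the paper's: transfer through \cref{obs:min-maps-msets-to-tropicals}. One caveat: a grade multiset containing only $\infty$ is sent by $\min$ to $\infty$, the zero of $\Tropical$, so ``nonempty to nonempty'' needs that edge case excluded; your direct replay avoids it. Your completeness architecture is also the paper's: approximants with $[]$ below depth $n$, pulled back by \cref{the:mset-itypes-subjexp}, Cauchy by non-expansiveness, then \cref{lem:dist-on-derivations,lem:approx-positive}.

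\textbf{The gap.} The Cauchy estimate itself is missing, which you acknowledge by calling it the main obstacle, and as written it is wrong. $\derivE_n$ and $\derivE_{n+1}$ type the different terms $M_n\neq M_{n+1}$, so by \cref{lem:dist-on-derivations} they are at distance $1$, not $2^{-n}$. Non-expansiveness only compares two derivations of the \emph{same} term, expanded along the \emph{same} reduction. What you need is a bound on $\dist(\derivE_n,\subjexp{M_n\bred* M_{n+1}}(\derivE_{n+1}))$, and the proposal does not supply it.

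The lemma you suggest, that subject expansion commutes with ``forget all grade entries $>d$'', is the wrong tool. Grades accumulate additively under nesting: the \drule{!} rule multiplies contexts, and backward substitution records each occurrence of $x$ with its accumulated grade. So truncating by local grade value does not respect contexts and binder types. With your weights, truncating at $d=3$ keeps the premises with local grades $1,2,3$ above a root-bound variable heading a depth-$2$ argument. It nevertheless deletes the entry $[6]$ that this occurrence contributes to the binder's multi-type. A cumulative truncation fails differently: with weights $[d+1]$ the accumulated time at depth $d$ is $(d+1)(d+2)/2$, so level $n$ no longer cuts at Böhm depth $n$. A constant relative grade $[1]$ is what you want, since the factors $2^{-a}$ in the matching cost already compound through nesting.

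\textbf{How the paper avoids this.} It never compares derivations of different terms, because it builds approximants compositionally rather than from global reducts $M_n$.
\begin{enumerate}
\item For $M\in\HHN$ it fixes once and for all the head reduction $M\hred*\lambda\vec x.yM_1\dots M_l$, leaving the arguments \emph{unreduced}.
\item By induction on $n$ it builds $\derivD'_n(M)$ typing this head normal form. Each $M_j$ receives $[\gr{\tau_j}{[1]}]$ via the already built $\derivD_{n-1}(M_j)$, itself a derivation of the unreduced $M_j$; when $n=0$ it receives $[]$.
\item It sets $\derivD_n(M)\eqdef\subjexp{M\hred*\lambda\vec x.yM_1\dots M_l}(\derivD'_n(M))$.
\end{enumerate}
Neither term nor reduction depends on $n$, so non-expansiveness applies directly. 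The bottleneck distance then gives $\dist(\derivD'_p(M),\derivD'_q(M))\le\frac12\max_j\dist(\derivD_{p-1}(M_j),\derivD_{q-1}(M_j))$, using the matching cost under grade $[1]$, or insertion cost $2^{-1}$ against $[]$. This yields $2^{-\min(p,q)}$ by induction.

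\textbf{Repairing your route.} If you keep global reducts, what you need is a targeted fact, not a general commutation lemma. Let $\derivE^\flat_{n+1}$ be your depth-$n$ construction applied to $M_{n+1}$, which has the same top $n$ layers as $M_n$; this is where canonicity is used. Every step of $M_n\bred* M_{n+1}$ lies inside an argument typed by $[]$, so $\subjexp{M_n\bred* M_{n+1}}(\derivE^\flat_{n+1})=\derivE_n$. Counting insertion costs under $n$ nested grades $\geq1$ gives $\dist(\derivE^\flat_{n+1},\derivE_{n+1})\le 2^{-n}$. Non-expansiveness along $M_n\bred* M_{n+1}$ and then along $M\bred* M_n$ then gives $\dist(\derivD_n,\derivD_{n+1})\le 2^{-n}$.
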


Before we give the general proof, let us illustrate it on our running
example, the fixed-point combinator $Y$.
In $\itypes[\msets\Tropical]$, it is typed by the following derivation
(to be compared to the one in $\itypes$
provided in \cref{eq:itype-derivation-y}):
$\ms m \eqdef [\gr {\ms m \lto \alpha} 1]$, 
and $\ms m' \eqdef [\gr {\ms m \lto \alpha} 0]$:
\begin{equation*}
	\begin{smallprooftree}[center]
	\infer0{ f:\ms f' \vdash f:[\gr{\alpha}{[1]}] \lto \alpha }
	\infer0{ x:[\gr {\ms m \lto \alpha} {[0]}] \vdash x: \ms m \lto \alpha}
	\infer0{ x:[\gr {\ms m \lto \alpha} {[0]}] \vdash x: \ms m \lto \alpha}
	\infer1{ x:\ms m \vdash x:\ms m }
	\infer2{ x:[\gr {\ms m \lto \alpha} {[0,1,\dots]}] \vdash xx : \alpha }
	\infer1{ x:\ms m \vdash xx : [\gr{\alpha}{1}] }
	\infer2{ f:\ms f', x:\ms m \vdash f(xx) : \alpha }
	\infer1{ f:\ms f' \vdash \abs x.f(xx): \ms m \lto \alpha }
	\hypo{\vdots}
	\infer1{ f:\ms f' \vdash \abs x.f(xx): \ms m \lto \alpha }
	\infer1{ f:[1] \mstimes \ms f \vdash \abs x.f(xx): \ms m}
	\infer2{ f:\ms f \vdash (\abs x.f(xx))(\abs x.f(xx)):\alpha }
	\infer1{ \vdash Y: [\gr {[\gr{\alpha}{1}] \lto \alpha} 0] \lto \alpha }
	\end{smallprooftree}
\end{equation*}
with $\ms f \eqdef [\gr {[\gr{\alpha}{[1]}] \lto \alpha} {[0,1,\dots]}]$,
$\ms f' \eqdef [\gr {[\gr{\alpha}{[1]}] \lto \alpha} {[0]}]$ and
$\ms m \eqdef [\gr {\ms m \lto \alpha} {[1,2,\dots]}]$.

To simplify the exposition, we concentrate on the subterm
$Y_f \eqdef (\abs x.f(xx))(\abs x.f(xx))$.
The derivation $f:\ms f \vdash Y_f:\alpha$ is our goal,
that we want to construct by approximation:
for each $n \in \Nat$, we want to define a derivation $\derivD_n$
approximating $\derivD$ up to the $n$-th tick of the clock,
\ie up to $n$ nested head normalizations.
To do so, we first consider the head normalization $Y_f \hred* λf.f(Y_f)$. Suppose $\derivD_p$ is constructed for $p < n$, we define:

\begin{align*}
	\derivD'_0 &\quad\eqdef\quad \begin{smallprooftree}
		\infer0{ f:\ms f_1 \vdash f : [] \lto \alpha }
		\infer0{ \vdash Y_f : [] }
		\rewrite{\color{red}\box\treebox}
		\infer2{ f:\ms f_1 \vdash f(Y_f) : \alpha }
	\end{smallprooftree}
	\\
	\derivD'_{n+1} &\quad\eqdef\quad \begin{smallprooftree}
		\infer0{ f:\ms f' \vdash f : [\gr \alpha {[1]}] \lto \alpha }
		\hypo{ \derivD_{n} }
		\infer1{ f:\ms f_{n} \vdash Y_f : \alpha }
		\infer1{ f:[1] \mstimes \ms f_{n} \vdash Y_f : [\gr \alpha {[1]}] }
		\infer2{ f:\ms f_{n+1} \vdash f(Y_f) : \alpha }
	\end{smallprooftree}
\end{align*}
where $\ms f_n \eqdef [\gr {[]\lto\alpha} {[n]},
\gr {[\gr \alpha {[1]}] \lto \alpha} {[0,\dots,n-1]}]$.
The approximation is carried by the
{\color{red} emphasized} occurrence of the typing rule \drule{!},
allowing to cut out an argument by giving it the empty multi-type.

Then we apply subject expansion (\cref{the:mset-itypes-subjexp}),
and define $\derivD_n \eqdef \subjexp{Y_f \hred* λf.f(Y_f)}(\derivD'_n)$.
For example the first two derivations are as follows: 
\begin{align*}
	\derivD_0 &\quad\eqdef\quad \begin{smallprooftree}
		\infer0{ f:\ms f_0 \vdash f:[] \lto \alpha }
		\infer0{ \vdash xx : [] }
		\rewrite{\color{red}\box\treebox}
		\infer2{ f:\ms f_0 \vdash f(xx) : \alpha }
		\infer1{ f:\ms f_0 \vdash \abs x.f(xx): [] \lto \alpha }
		\infer0{ \vdash \abs x.f(xx): [] }
		\rewrite{\color{red}\box\treebox}
		\infer2{ f:\ms f_0 \vdash Y_f:\alpha }
	\end{smallprooftree}
	\\[\topsep]
	\derivD_1 &\quad\eqdef\quad \begin{smallprooftree}
		\infer0{ f:\ms f' \vdash f:[\gr \alpha {[1]}] \lto \alpha }
		\infer0{ x:[\gr {[] \lto \alpha} {[0]}] \vdash x: [] \lto \alpha}
		\infer0{ \vdash x:[] }
		\rewrite{\color{red}\box\treebox}
		\infer2{ x:[\gr {[] \lto \alpha} {[0]}] \vdash xx : \alpha }
		\infer1{ x:\ms m_1 \vdash xx : [\gr{\alpha}{[1]}] }
		\infer2{ f:\ms f', x:\ms m_1 \vdash f(xx) : \alpha }
		\infer1{ f:\ms f' \vdash \abs x.f(xx): \ms m_1 \lto \alpha }
		\hypo{ \text{as in the left branch of $\derivD_0$} }
		\infer1{ f:\ms f_0 \vdash \abs x.f(xx): [] \lto \alpha }
		\infer1{ f:[1] \mstimes \ms f_0 \vdash \abs x.f(xx): \ms m_1 }
		\infer2{ f:\ms f_2 \vdash Y_f:\alpha }
	\end{smallprooftree}
%	\\[\topsep]
%	\derivD_{n+3} &\quad\eqdef\quad
%	\resizebox{16cm}{!}{%
%	\begin{smallprooftree}
%		\infer0{ f:\ms f \vdash f:[\gr \alpha 1] \lto \alpha }
%		\infer0{ x:\ms m'_{n+3} \vdash x: \ms m_{n+2} \lto \alpha}
%		\hypo{ (...) }
%		\infer1{ x:\ms m_{n+2} \vdash x: \ms m_{n+2} }
%		\infer2{ x:\min(\ms m'_{n+3}, \ms m_{n+2}) \vdash xx : \alpha }
%		\infer1{ x:\ms m_{n+3} \vdash xx : [\gr{\alpha}{1}] }
%		\infer2{ f:\ms f, x:\ms m_{n+3} \vdash f(xx) : \alpha }
%		\infer1{ f:\ms f \vdash \abs x.f(xx): \ms m_{n+3} \lto \alpha }
%		\hypo{ \text{as in the left branch of $\derivD_{n+2}$} }
%		\infer1{ f:\ms f \vdash \abs x.f(xx) : \ms m_{n+2} \lto \alpha }
%		\infer1{ f:\ms f \vdash \abs x.f(xx) : \ms m'_{n+3} }
%		\hypo{ \text{as in the right branch of $\derivD_{n+2}$} }
%		\infer[dashed]1
%			{ f:\ms f_{n+1} + 1 \vdash \abs x.f(xx): \ms m_{n+2} }
%		\infer[dashed]2{ f:\ms f_{n+2} + 1 
%			\vdash \abs x.f(xx): \ms m_{n+3} }
%		\infer2{ f:\ms f_{n+3} \vdash Y_f:\alpha }
%	\end{smallprooftree}
%	}
\end{align*}
where $\ms m_0 \eqdef []$ and $\ms m_{n+1} \eqdef
\left[ \gr{\ms m_i \lto \alpha}{[n+1-i]} \right]_{0 \leq i \leq n}$.
Observe that the positive occurrences of $[]$ get deeper and deeper when
$n$ grows.

Finally one can verify that $\lim_n \ms f_n = \ms f$
and $\lim_n \ms m_n = \ms m$, so that $\lim_n \derivD_n = \derivD$,
as desired.
In the general case, the proof takes exactly the same path.

	\begin{proof}[Proof of the Theorem]
	Soundness is immediate from the soundness part of
	\cref{the:pos-itypes-sound-complete-for-hhn},
	using \cref{obs:min-maps-msets-to-tropicals}.
	
	Let us prove completeness.
	For all $M \in \HHN$ and $n \in \Nat$, we define a derivation
	$\derivD_n(M) \derives+[n] \Gamma_n \vdash M:\sigma_n$,
	by induction on $n$.
	Since $M \in \HHN$, there is a reduction
	$M \hred* λx_1.\dots λx_k.yM_1\dots M_l$,
	with $M_1,\dots,M_l \in \HHN$.
	\begin{itemize}
	\item If $n > 0$, by induction on each $M_j$ there are derivations
		$\derivD_{n-1}(M_j) \derives+[n-1] \Delta_j \vdash M_j:\tau_j$
		and we can form the derivation:
		\[	\begin{smallprooftree}
			\infer0{ y:
				[\gr {\bar\tau_1 \lto \dots \lto \bar\tau_j \lto α} 0] 
				\vdash y:\bar\tau_1 \lto \dots \lto \bar\tau_j \lto α }
			\hypo{ \derivD_{n-1}(M_1) }
			\infer1{ [1] \mstimes \Delta_1 \vdash M_1:\bar\tau_1 }
			\infer2{\hspace{3cm}\vdots}
			\hypo{\ddots}
			\infer[no rule]1{\strut}
			\hypo{ \derivD_{n-1}(M_l) }
			\infer1{ [1] \mstimes \Delta_l \vdash M_l:\bar\tau_l }
			\infer3{ y:
				[\gr {\bar\tau_1 \lto \dots \lto \bar\tau_j \lto α} 0],
				\mssum_{j=1}^l [1] \mstimes \Delta_j
				\vdash yM_1\dots M_l : \alpha  }
			\end{smallprooftree} \]
		where $\bar\tau_j \eqdef [\gr{\tau_j}{[1]}]$.
		Then we apply the typing rule \drule{\lto_i}
		for $x_k$, \dots, $x_1$ and we obtain a derivation
		$\derivD'_n(M) \derives+[n] \Gamma_n
		\vdash λx_1.\dots λx_k.yM_1\dots M_l:\sigma_n$.
	\item If $n=0$ we do the same construction,
		but taking $\bar\tau_j \eqdef []$ and
		$\Delta_j$ to be the empty context.
		The $\derivD'_0(M)$ we obtain is exactly the one given by
		\cref{the:itypes-sound-complete-for-hn}.
	\end{itemize}
	Finally, we define $\derivD_n(M) \eqdef
	\subjexp{M \hred* λx_1.\dots λx_k.yM_1\dots M_l}(\derivD'_n(M))$.
	By \cref{the:mset-itypes-subjexp}, for all $p,q \in \Nat$,
	\begin{align*}	\dist(\derivD_p(M), \derivD_q(M))
		& \leq \dist(\derivD'_p(M), \derivD'_q(M)) \\
		& \leq \begin{cases}
			\frac 1 2 & \text{if $p=0$ or $q=0$,} \\
			\frac 1 2 \times \max_{1 \leq j \leq l}
				\dist(\derivD_p(M_j), \derivD_q(M_j))
				& \text{otherwise,}
			\end{cases} \\
		& \leq 2^{-\min(p,q)},
	\end{align*}
	hence $(\derivD_n(M))_{n \in \Nat}$ is a Cauchy sequence
	and converges by \cref{lem:dist-on-derivations}.
	We conclude by \cref{lem:approx-positive}.
	\end{proof}

\begin{corollary}%
[second part of \cref{the:pos-itypes-sound-complete-for-hhn}]
\label{cor:pos-itypes-complete-for-hhn}
	Positive typing in $\itypes$ is complete 
	for hereditary head normalization.
\end{corollary}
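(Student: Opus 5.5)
The corollary follows from \cref{the:pos-mset-itypes-sound-complete-for-hhn} by pushing a positive derivation of the auxiliary system $\itypes[\msets\Tropical]$ through the forgetful map $\min$ of \cref{obs:min-maps-msets-to-tropicals}. Let $M \in \HHN$. By the completeness half of \cref{the:pos-mset-itypes-sound-complete-for-hhn} there is a derivation $\derivD \derives+ \Gamma \vdash M : \sigma$ in $\itypes[\msets\Tropical]$, with $\Gamma : \Var \to \itypes[\msets\Tropical]-$ and $\sigma \in \itypes[\msets\Tropical]+$. Applying $\min$ to every grade gives a candidate derivation $\min(\derivD)$ in $\itypes$ with conclusion $\min(\Gamma) \vdash M : \min(\sigma)$. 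It remains to check two things: that this is a genuine typing derivation, and that it is positive.

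\emph{First step: $\min$ preserves formation and typing rules.} The map $\min : \msets\Tropical \to \Tropical$ is a semiring morphism on the relevant operations. It sends $\msplus$ to $\min$, $\mszero = []$ to $\infty$ and $\msone = [0]$ to $0$. It sends convolution $\mstimes$ to $+$, since $\min_{c=a+b}$ of a convolution is $\min \ms a + \min \ms b$.

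Consequently the rules \drule{ax}, \drule{\lto_i}, \drule{\lto_e} and \drule{!} are mapped onto their tropical counterparts. The only difference is that the auxiliary system uses a relevant axiom, and its image is a special case of the $\itypes$ axiom with $\Gamma$ empty.

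For formation rules, the guard $\ms a(0) = 0$ means $0 \notin \ms a$, hence $\min \ms a > 0$, which is exactly the guard of \drule{coind} in $\itypes$. The map is therefore well-defined on coinductive objects, by coinduction on the formation derivation.

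There is one subtlety. An $\msets\Tropical$-set $[\gr{\sigma_i}{\ms a_i}]$ may contain distinct $\sigma_i$ that become equal after applying $\min$. They should then be merged with grade $\min$ of their grades, as the notation $[\gr{x_i}{a_i}]$ allows. The rule \drule{!} tolerates this because its conclusion's grades are computed by sums, that is, by $\min$.

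\emph{Second step: positivity is preserved.} This is the step I expect to need the most care. $\min$ sends the empty $\msets\Tropical$-set to the empty $\Tropical$-set, and conversely sends nonempty multisets of grades to finite grades.

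The delicate case is an entry with an empty grade multiset $\ms a_i = []$. Under $\min$ it becomes the tropical grade $\infty$, that is, it is absent from the resulting $\Tropical$-set. A nonempty $\msets\Tropical$-set whose entries all carry grade $[]$ would then become empty after $\min$. I would therefore first normalize, discarding entries with grade $\mszero$, which does not change the $\msets\Tropical$-set as a map. After this, nonempty $\msets\Tropical$-sets map to nonempty $\Tropical$-sets.

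A mutual coinduction along \cref{def:itypes-positive} then shows that $\min$ sends $\itypes[\msets\Tropical]+$ into $\itypes+$ and $\itypes[\msets\Tropical]-$ into $\itypes-$. Hence $\min(\derivD) \derives+ \min(\Gamma) \vdash M : \min(\sigma)$, which proves completeness of positive typing in $\itypes$.
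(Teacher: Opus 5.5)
Your proposal is correct and follows the paper's proof: the corollary is obtained from the completeness half of \cref{the:pos-mset-itypes-sound-complete-for-hhn} by pushing the resulting positive derivation through the map $\min$ of \cref{obs:min-maps-msets-to-tropicals}. Your additional checks are exactly the details the paper's one-line proof leaves implicit: that $\min$ is a semiring morphism $\msets\Tropical \to \Tropical$, that the guard $\ms a(0)=0$ becomes $\min \ms a > 0$, that merged entries are absorbed by the summation convention of \drule{!}, and that non-empty $\msets\Tropical$-sets stay non-empty, so positivity is preserved.
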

	
	\begin{proof}
	Immediate from the completeness part of
	\cref{the:pos-mset-itypes-sound-complete-for-hhn},
	thanks to \cref{obs:min-maps-msets-to-tropicals}.
	\end{proof}

%****************************************************************************
\subsection{On the Recursion-Theoretic Complexity and Optimality of the Typing System}
\label{sec:intersec:sec:complexity}

To conclude, we investigate the optimality of our typing system.
It is well-known that $\HN$ is not decidable,
so that (by \cref{the:itypes-sound-complete-for-hn})
typability in unrestricted intersection type systems
is undecidable in general.
Therefore our point is not to investigate the complexity
of an effective typing procedure for $\itypes$,
but to resort to recursion theory to establish that 
typability in our typing system can be naturally described to be withing the
same level of the arithmetical hierarchy than the one of
the class of λ-terms it characterizes, namely $\HHN$.

As proved by \textcite{Tatsuta.08}, $\HHN$ is not recursively enumerable;
in fact, one can easily extend this result as follows.
Recall that a predicate $P(-)$ on $\pureterms$ is $\Pi_0^2$ whenever
\[	P(M) \iff \forall x \in \Nat,\ \exists y \in \Nat,\ \phi(\enc M, x,y) \]
for a given computable predicate $\phi$,
and where $\enc M$ denotes the encoding of $M$ as an integer.
A predicate is $\Pi_0^2$-complete if, in addition,
any other $\Pi_0^2$ predicate can be reduced to it.
A subset $X \subseteq \pureterms$ is $\Pi_0^2$(-complete)
if the associated predicate is so.

\begin{fact}
	The subset $\HHN$ of $\pureterms$ is $\Pi_0^2$-complete.
\end{fact}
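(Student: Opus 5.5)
Two things must be shown: $\HHN$ lies in $\Pi_0^2$, and every $\Pi_0^2$ predicate many-one reduces to it.

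\emph{Membership.} I would work with finite approximations of Böhm trees. For $d \in \Nat$, say that $M$ is \emph{$d$-productive} when every node of its Böhm tree at depth at most $d$ is a head normal form rather than $\bot$. This is witnessed by a finite object: a finite tree of head reduction sequences, one for $M$ and one for each argument $M_j$ of the resulting head variable, down to depth $d$. Checking that a given natural number $y$ encodes such a tree of reductions for $\enc M$ is decidable, since each step is one head reduction. Hence ``$M$ is $d$-productive'' is $\Sigma_0^1$, of the form $\exists y,\ \phi(\enc M,d,y)$ with $\phi$ computable. By \cref{def:productive}, the Böhm tree of $M$ contains no $\bot$ if and only if $M$ is $d$-productive for every $d$. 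Therefore $M \in \HHN \iff \forall d,\ \exists y,\ \phi(\enc M,d,y)$, which is $\Pi_0^2$.

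\emph{Hardness.} I would reduce from the standard $\Pi_0^2$-complete set $\mathrm{Inf} = \set{e}[W_e \text{ is infinite}]$, or equivalently $\mathrm{Tot}$. Given $e$, I build, uniformly and computably, a λ-term $M_e$ using a fixed-point combinator. The term searches $n = 0,1,2,\dots$ by dovetailing the computation of $\varphi_e$, coded in λ-calculus via Kleene's representability of partial recursive functions. Each time a new element of $W_e$ is found, it emits one constructor layer $\abs z. z\,(\text{rest})$; while searching it emits nothing.

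If $W_e$ is infinite, every layer eventually appears, so the Böhm tree is the infinite spine with no $\bot$, and $M_e \in \HHN$. If $W_e$ is finite, then after the last element the search runs forever without reaching a head normal form, producing $\bot$ in the Böhm tree, so $M_e \notin \HHN$.

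To make the ``search without output'' genuinely non-head-normalizing, the search loop should be written in head position, as $Y\,F\,n \hred F\,(YF)\,n$ with $F$ branching on a test. The step I expect to need the most care is this encoding: a failing search must not produce a head normal form prematurely, and the test must be realized by a term that head-reduces to a selector. I would use Scott numerals and Scott booleans, so that each test reduces to a projection, and Kleene's normal form theorem, so that each check ``$\varphi_e(k)$ halts within $t$ steps'' is primitive recursive and hence total and strongly normalizing. This step could also be shortcut by citing \textcite{Tatsuta.08} and upgrading its encoding in the same way.
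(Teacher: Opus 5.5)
Your proposal is correct. The paper itself gives no argument for this Fact. It only recalls that \textcite{Tatsuta.08} proved $\HHN$ is not recursively enumerable and asserts that this \enquote{easily} extends to $\Pi^0_2$-completeness, leaving both membership and the upgraded hardness reduction implicit. Your proposal supplies both halves, and the hardness half is presumably the intended extension; you even note the Tatsuta shortcut.

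\textbf{Membership.} Characterizing $\HHN$ as the terms that are $d$-productive for every $d$ is the right move. One point deserves an explicit sentence. Head reduction is deterministic, so the head normal form reached at each node is unique, and the finite witnesses for different depths therefore cohere. This uniqueness is what makes the greatest fixed point of \cref{def:productive} coincide with the intersection of its finite approximants; without it the $\Leftarrow$ direction would not go through.

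\textbf{Hardness.} Passing from non-recursive-enumerability to $\Pi^0_2$-hardness is exactly the move from encoding one potential divergence to emitting one layer $\lambda z.z(\cdot)$ per element found. You rightly insist on emitting a layer only for a \emph{new} element, for instance when $\varphi_e(k)$ halts in exactly $t$ steps. Otherwise a single element of $W_e$ would generate infinitely many layers.

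You only need the test term to \emph{normalize} to a Scott boolean, not to be strongly normalizing. Head reduction of $\mathrm{test}\,s\,A\,B$ then reaches $A$ or $B$. This avoids any tension with Scott numerals, whose recursion needs a fixed-point combinator.

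Compared with the paper's citation-based treatment, your argument is self-contained. It also establishes membership of $\HHN$ in $\Pi^0_2$ independently of the typing results. This matters because the paper's later $\Pi^0_2$ bound on typability only transfers to $\HHN$ through the completeness theorem.
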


Our claim that typability in $\itypes$ is an optimal characterization
of $\HHN$ is formalized in the following result:

\begin{theorem}
	The subset $\mathcal T \eqdef \set{ M \in \pureterms}
	[ \exists \sigma,\ \exists \Gamma,\ 
	\derives+ \Gamma \vdash M : \sigma ]$ of $\pureterms$ is $\Pi_0^2$.
\end{theorem}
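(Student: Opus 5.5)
The plan is to write typability as a statement of the form ``for every depth $d$ there is a finite witness'', where each witness is a decidable finite object. Concretely, for $d \in \Nat$, I would define the \emph{truncation at depth $d$} $\trunc{\sigma}$ of a type $\sigma \in \itypes$ by induction. Inside a $\Tropical$-set, an element $\sigma_i$ with grade $a_i$ is kept, truncated at depth $d - a_i$, if $a_i \leq d$, and is discarded otherwise. This is a genuine induction: every coinductive call has grade $a_i > 0$ and so strictly decreases the remaining depth, while calls with grade $0$ are inductive. So $\trunc{\sigma}$ is a \emph{finite} type.

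A \emph{depth-$d$ approximate derivation} of $M$ is then a finite derivation in the typing rules of $\itypes$ whose types are all finite. Its positivity condition is required only up to depth $d$: the empty $\Tropical$-set may occur positively only under total grade $> d$. Every typing rule is local and involves only $\min$ and $+$ on finite grades. Hence, given $\enc M$ and a code of a candidate finite derivation, it is decidable whether the candidate is a depth-$d$ approximate derivation of $M$; call this predicate $\phi(\enc M, d, y)$. The theorem then follows once I establish
\[ M \in \mathcal T \iff \forall d \in \Nat,\ \exists y \in \Nat,\ \phi(\enc M, d, y). \]

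For ($\Rightarrow$), given $\derives+ \Gamma \vdash M:\sigma$, I would truncate the whole derivation at depth $d$. Each \drule{!} node with grade $a_i > d$ is dropped, and every type in a premise under accumulated grade $a$ is truncated at $d-a$. The rule \drule{!} is compatible with this because its conclusion context is $\min_i(a_i+\Gamma_i)$. The derivation itself is inductive, so the result is finite and all its types are finite. Positivity up to depth $d$ is inherited, since truncation only creates empty $\Tropical$-sets at depth $> d$.

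For ($\Leftarrow$), I would avoid a compactness argument, because approximate derivations are not finitely branching. Instead I would prove an \emph{approximate soundness} lemma: if $M$ has a depth-$d$ approximate derivation, then $M$ is hereditarily head normalizing up to depth $d$, meaning the Böhm tree of $M$ has no $\bot$ at depth $\leq d$. The proof would follow the soundness part of \cref{the:pos-itypes-sound-complete-for-hhn}. I would use \cref{the:itypes-sound-complete-for-hn}, which holds for finite types, to obtain a head normal form $\abs{x_1}.\dots\abs{x_k}.yM_1\dots M_l$. Subject reduction (\cref{the:itypes-sr-se}) transports the approximate derivation to this head normal form. If $d>0$, the positivity condition forces the arguments $M_j$ to receive non-empty $\Tropical$-sets. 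Their typings occur under grades $a \geq 1$, so they are depth-$(d-a)$ approximate derivations, and I would recurse. Holding for all $d$ gives $M \in \HHN$, and then \cref{the:pos-itypes-sound-complete-for-hhn} yields $M \in \mathcal T$.

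The main obstacle is that this last step invokes completeness, while the text announces a proof ``without relying on completeness''. To remove it, I would try to build the infinite derivation directly from the approximants. I would first normalize each approximant canonically, using the shape of the Böhm tree of $M$ and grade $[1]$ for every argument, exactly as in the proof of \cref{the:pos-mset-itypes-sound-complete-for-hhn}. This makes the approximants coherent, so the truncation at depth $d$ of the depth-$(d+1)$ approximant equals the depth-$d$ approximant. The infinite derivation is then their limit, by \cref{lem:dist-on-derivations} and \cref{lem:approx-positive}. Making this coherence work for the typed derivation itself, and not only for the Böhm tree, is the delicate point. The fallback is to accept the completeness theorem, which still gives the $\Pi_0^2$ bound.
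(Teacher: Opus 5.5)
There is a genuine gap, and it sits in the characterization itself, not only in the appeal to completeness. Your approximate soundness lemma is false, so $\forall d\,\exists y\,\phi(\enc{M},d,y)$ does not imply $M\in\HHN$.

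The faulty step is ``their typings occur under grades $a\geq 1$, so they are depth-$(d-a)$ approximate derivations''. Nothing bounds the grade at which an argument is typed. It can be $0$, giving no progress. Worse, it can be larger than $d$, and then the argument's sub-derivation escapes every positivity constraint of a depth-$d$ approximant.

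Concretely, let $M\eqdef y\,(z\,\Omega)$, whose B\"ohm tree is $y\,(z\,\bot)$. Fix any $d$ and proceed as follows:
\begin{enumerate}
\item type $\Omega$ with $[]$ via the empty instance of \drule{!};
\item derive $z:[\gr{([]\lto\beta)}{0}]\vdash z\,\Omega:\beta$;
\item promote this judgment at grade $d+1$ and apply $y$:
\[ y:[\gr{([\gr{\beta}{d+1}]\lto\alpha)}{0}],\ z:[\gr{([]\lto\beta)}{d+1}]\vdash y\,(z\,\Omega):\alpha. \]
\end{enumerate}
All types are finite, and the only positive occurrence of $[]$ sits under total grade $d+1>d$. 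So this is a depth-$d$ approximate derivation for every $d$. Yet $M\notin\HHN$, hence $M\notin\mathcal T$ by the soundness half of \cref{the:pos-itypes-sound-complete-for-hhn}.

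Your fallback does not help, because the equivalence itself fails, not just its proof. Moreover, if completeness may be used at all, the statement follows at once from $\mathcal T=\HHN$ and the Fact that $\HHN$ is $\Pi_0^2$, with no approximants needed.

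What is missing is a stratification under which an argument cannot dodge the constraint by being assigned a large grade. In the paper, approximants must satisfy $\derivD_n\derives+[n]\Gamma_n\vdash M:\sigma_n$. Here $n$ counts the \emph{nesting} of $\Tropical$-sets above a positive $[]$, independently of grades, so $y\,(z\,\Omega)$ is ruled out for all large enough $n$ whatever grade you pick. The witness for $x$ is also a whole sequence $\derivD_0,\dots,\derivD_x$ with $\dist(\derivD_p,\derivD_q)\leq 2^{-\min(p,q)}$, rather than one unrelated approximant per depth.

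There is also a secondary issue in your ($\Rightarrow$) direction: in $\itypes$, truncation does not commute with $\min$. For instance, $\min([\gr{\sigma}{1}],[\gr{\sigma}{2}])=[\gr{\sigma}{1}]$ truncates to the single type $\trunc[d-1]{\sigma}$. Truncating each side separately instead gives $\trunc[d-1]{\sigma}$ and $\trunc[d-2]{\sigma}$ at two different grades. So when a bound variable occurs at several grades with the same type, its recomputed $\Tropical$-set after \drule{\lto_i} no longer matches the argument's truncated one. Hence ``truncate every type at $d-a$'' does not by itself produce a valid derivation. Finiteness of the truncated derivation also needs the normalizations to finitary $\Tropical$-sets and to contexts supported on $\fv(M)$, since \drule{!} may have infinitely many premises.
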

	
	\begin{proof}
	The proof relies on the following observations:
	\begin{enumerate}
	\item For all type $\sigma \in \itypes$, one can build a sequence
		$(\trunc[n] \sigma)_{n \in \Nat}$ converging to $\sigma$
		such that $\trunc[n] \sigma$ is finite, and built using only grades
		in $[0,n]$. 
		(The detailed construction is given in the Supplementary Material.)
		The same can be done for multi-types and contexts.
		In addition, if $\derivD \derives+ \Gamma \vdash M:\sigma$
		then one can define
		$\trunc[n]\derivD \derives+[n] \trunc[n]\Gamma
		\vdash M:\trunc[n]\sigma$
		such that $(\trunc[n]\derivD)_{n \in \Nat}$ converges to $\derivD$,
		and for all $p,q \in \Nat$,
		$\dist(\trunc[p]\derivD,\trunc[q]\derivD) \leq 2^{-\min(p,q)}$.
	\item If a term $M$ is typable in $\itypes$, then it is typable
		by a typing derivation $\derives \Gamma \vdash M:\sigma$
		such that the support of $\Gamma$ is included in
		the free variables of $M$.
	\item If a term $M$ is typable in $\itypes$, then it is typable
		in the system where the formation rule \drule{!}
		is limited to \emph{finitary} $\Tropical$-sets,
		\ie all $\ms m \in \itypes!$ is such that
		for all $a \in \Tropical$, 
		$\set{ \sigma \in \itypes }[ \ms m(\sigma) \leq a ]$ is finite.
		This can be proved by an immediate induction
		on typing derivations.
	\end{enumerate}
	As a consequence, for all $n \in \Nat$ and typing derivation $\derivD$,
	$\trunc[n] \derivD$ is a finite object:
	it involves only finitely supported contexts,
	finite types and finitely supported multi-types;
	therefore we can encode it as an integer $\enc{\trunc[n] \derivD}$.
	Finally, we can write:
	\[	\mathcal T = \set{ M \in \pureterms }[ \forall x \in \Nat,\ 
		\exists y \in \Nat,\ \phi(\enc M, x, y) ] \]
	where $\phi(\enc M, x, y)$ is true whenever $y$ is the encoding
	of a sequence of $x$ derivations
	$\derivD_n \derives+[n] \Gamma_n \vdash M:\sigma_n$
	such that for all $p,q \leq n$,
	$\dist(\derivD_p,\derivD_q) \leq 2^{-\min(p,q)}$,
	which is computable.
	\end{proof}

\section{Related Work}
\label{sec:relatedwork}
% !TeX root = ../main.tex
% !TeX spellcheck = en_US

\paragraph{Graded Typing.}
The idea of enriching computational modalities with grades forming a monoid originates in the work of \textcite{Katsumata.14} on \emph{parametric effect monads}, where grades are used to refine monadic \emph{effects} and has then be investigated by many authors in various directions, (see, \eg,~\textcite{FKM.16,OWE.20}). The comonadic counterpart has a more subtle history. While a graded comonadic structure was already implicitly present in bounded linear logic~\autocite{Girard.Sce.Sco.92}, it was only later observed that type disciplines for graded coeffects allowing resource-sensitive information to be tracked in types~\autocite{Ghica.Smi.14,Brunel.Gab.Maz.Zda.14}. 
More recently, graded effects and coeffects have been unified within common frameworks~\autocite{GKOBU.16}, revealing a remarkably symmetric picture in which both computational effects and contextual requirements are governed by algebraic grades. These ideas have also led to practical programming languages, most notably Granule~\autocite{Orchard.Lie.Ead.19}, together with a growing body of work generalizing graded modalities to richer type systems \autocite{CEEW.21,VMEO.25} and new application scenarios (\eg, \textcite{LMORV.26}). The tropical semiring has occasionally appeared in the graded typing literature as one possible instance of the underlying semiring~\autocite{Brunel.Gab.Maz.Zda.14,Orchard.Lie.Ead.19}. However, to the best of our knowledge, its computational interpretation has never been investigated in depth. The present work identifies time and productivity as the semantic phenomena naturally captured by tropical grades and develops corresponding type disciplines.

\paragraph{Intersection Types.}
Since their introduction by \textcite{Coppo.Dez.78}, intersection types have become one of the central tools for the semantic analysis of the λ-calculus. They have been generalized in many orthogonal directions, both with respect to the operational properties they characterize (such as normalization, solvability, approximability, and resource-sensitive notions of computation), and with respect to the programming languages to which they apply. Beyond the pure λ-calculus, intersection type disciplines have been successfully adapted, \eg, to calculi with control operators \autocite{Dougherty.Ghi.Les.05,vanBakel.Bar.18} and effects \autocite{Davies.Pfe.00,Gavazzo.Tre.Van.24}, and have even found their way into practical programming languages such as TypeScript and Java~\autocite{Bierman.Aba.Tor.14,Dunfield.14}. A particularly relevant line of work concerns \emph{non-idempotent} intersection types~\autocite{Gardner.94,deCarvalho.07}, where intersections are interpreted as multisets rather than sets. Unlike their idempotent counterparts, these systems are capable of recording quantitative information about resource usage and have established deep connections with linear logic, quantitative semantics, and complexity analysis. From our perspective, the most closely related contribution is the work of \textcite{Vial.17,Vial.21}, which gives an intersection type characterization of hereditarily head normalizing terms. Unlike the system introduced in \cref{sec:tropicalintersection}, however, Vial's characterization is intrinsically infinitary, requiring both infinitary terms and infinitary typing derivations. To the best of the authors knowledge, the recursion-theoretic status of Vial's system is unknown.

\paragraph{Guarded Recursion.}
Guarded recursion is one of the most successful approaches to guaranteeing productivity of recursive definitions and has been adopted, in various forms, by several proof assistants and programming languages. Beginning with Nakano's seminal introduction of the later modality~\autocite{Nakano.00}, a rich body of work has developed increasingly expressive guarded type theories capable of typing productive recursive definitions over streams, infinite trees, and coinductive data structures~\autocite{Atkey.McB.13,Birkedal.Mog.Sch.Sto.11,Clouston.Biz.Gra.Bir.17,Guatto.18}.  There are several differences between the literature on guarded recursion and our contribution. The first concerns the underlying formal framework: guarded recursion is fundamentally based on modal operators, often enriched with quantitative information such as clock variables, and supported by a topos-theoretic denotational semantics. There is also a difference in objectives. In our setting, expressiveness, in the sense of the class of stream definitions that can be assigned types, is necessarily restricted, and our primary goal is to analyze how tropical grades can control the passage of time, guaranteeing that typable terms are hereditary head-normalizing.  As we show in this paper, this perspective gives rise to an intersection type discipline enjoying a completeness property analogous to those classically associated with intersection types. Such a completeness result should not be confused with a result about the expressiveness of the type system as a way to type, \eg, productive functions on streams. This distinction is well known in the intersection type literature, whose systems are celebrated for their semantic characterization results despite being comparatively weak as languages for defining recursive functions (see, \eg,~\textcite{Bucciarelli.Piperno.Salvo.03}). 

\section{Further Developments: an Informal Discussion}
\label{sec:furtherdevelopments}
% !TeX root = ../main.tex
% !TeX spellcheck = en_US

This section briefly discusses a number of possible extensions and variations of the type systems studied in this paper. Although the investigation of these directions is still ongoing, we believe that outlining them helps put our contribution into perspective.

We begin by discussing how one might increase the expressive power of our systems, understood as the class of \emph{functions} over coinductive data types (\eg, streams) which can be captured by the calculus. In this respect, we believe it is worth drawing inspiration from bounded linear calculi~\autocite{Girard.Sce.Sco.92,DalLago.Gab.12}. While relying on constant grades, as in $\fmutypes$ and $\itypes$, yields simple and elegant type systems, it also introduces a degree of rigidity, ultimately preventing the systems from justifying, for instance, non-causal definitions on streams. Since its introduction, bounded linear logic has employed not natural numbers themselves as grades, but rather \emph{functions} over them, \ie, polynomials. This makes grades parametric in an external variable, thereby enabling a form of sized typing, e.g., $\mathsf{Nat}(x)$, where $x$ is a variable which may occur within grades. We believe that extending our systems in the same direction would make it possible to capture a form of clock variable~\autocite{Atkey.McB.13}, see Section \ref{sec:guard:sec:programming} for some preliminary ideas about this. The main challenge, naturally, is to ensure a satisfactory interaction between grade variables and recursive types. Along similar lines, one could also consider moving to \emph{dependent} graded coeffects, following recent work by \textcite{Baillot.Sannier.26}. 

Another promising direction is implicitly suggested by Section~\ref{sec:tropicalintersection}. The natural way of combining $\Tropical$-sets is to define their union $\ms m + \ms n$ as the $\Tropical$-set assigning to each type $A$ the minimum of the grades assigned to $A$ by $\ms m$ and $\ms n$. This operation is idempotent and, consequently, discards quantitative information. If, instead, we replace the underlying semiring $\Tropical$ with the semiring of multisets over $\Tropical$---the very structure employed in our completeness proof---then the resulting union operation is no longer idempotent, giving rise to a type system that deserves investigation in its own right. In such a system, function types would capture not only \emph{when} in the future an argument must support a given type, but also \emph{how many times} it will be used according to that type at each future instant. In the spirit of non-idempotent intersection types~\cite{deCarvalho.07}, this additional information should make it possible to extract fine-grained quantitative information from typing derivations. More specifically, every typing derivation induces a weight represented as a sequence of natural numbers, whose $n$-th component provides a precise estimate of the number of reduction steps performed during the $n$-th round of hereditary head normalization. This is a research direction we are currently pursuing, and which we find particularly promising.

\section{Conclusion}
% !TeX root = ../main.tex
% !TeX spellcheck = en_US

In this work, we have investigated the tropical semiring as the underlying grade space in a type-theoretic setting. From a technical perspective, our main result is the characterization of hereditarily head normal forms by means of an intersection type system, together with a proof of recursive-theoretic optimality for the latter.

Conceptually, however, we believe that the most interesting contribution lies elsewhere. More specifically, we have shown that the well-established framework of graded coeffects, when instantiated over the tropical semiring, provides a remarkably natural model of time, suggesting a principled discipline for controlling the construction of infinite types, and allowing normalization guarantees to carry over to the setting of productivity.

Perhaps most remarkably, all of this is achieved without introducing fundamentally new concepts. Rather, our approach emerges from a careful study of new instances and variations of existing ideas—namely graded coeffects, intersection types, and guarded recursion. In our view, the main added value with respect to the existing literature lies in its conceptual simplicity, together with the robustness of the underlying idea, whose scope and explanatory power ultimately exceeded our own initial expectations.

\begin{acks}
The authors wish to thank Tito Nguyễn and Gabriele Vanoni
for their suggestions, which allowed to correct an issue
in a preliminary version of this work.
\end{acks}

\printbibliography

\end{document}

\clearpage
\appendix
% !TeX spellcheck = en_US

\section{Proof of Lemma 3.11}

\begin{lemma}
	If $\Gamma \vdash M : A$,
	then $\forgetFmu M \in \redinterp[\rho_\HN] A$,
	where $\rho_\HN$ is the environment mapping all atoms to $\HN$.
\end{lemma}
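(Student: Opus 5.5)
We prove a generalized statement by induction on the derivation of $\Gamma \vdash M : A$. Let $\Gamma = x_1:\bang[a_1]C_1,\dots,x_n:\bang[a_n]C_n$ and let $\rho$ be an arbitrary environment of \rc's. Suppose that for each $i$ we are given $T_i \in \pureterms$ with $T_i \in \redinterp{C_i}$ whenever $a_i = 0$, and $T_i$ arbitrary whenever $a_i > 0$. We claim that then
\[ \forgetFmu M \subst[\vec x]{\vec T} \in \redinterp A. \]
The lemma follows by taking $\rho = \rho_\HN$ and $T_i \eqdef x_i$. Variables belong to every \rc by clause \ref{def:rc:clause3} of \cref{def:rc}, since a variable is a non-abstraction in $\HNF$. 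When $a_i>0$ no condition is needed at all.

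Before the induction we need some preliminary facts.
\begin{itemize}
\item Each $\redinterp A$ is well defined and is a \rc. The definition proceeds by induction on the pair formed by the guarded depth and the size of the part of $A$ not lying under some $\bang[a]$ with $a>0$. Unfolding $\rectype\alpha.B$ does not increase this part, because $\bangindex B(\alpha)>0$. For arrows, we check that $\rcR \to \rcS$ is a \rc. The set $\pureterms \to \rcS$ is also a \rc, because $\pureterms$ contains variables, and $TU\in\HN$ forces $T\in\HN$.
\item A substitution lemma: $\redinterp{A\tsubst B} = \redinterp[\rho\redsubst{\redinterp B}]{A}$, proved along the same well-founded order.
\item Monotonicity under subtyping: $A\subtype B$ implies $\redinterp A\subseteq\redinterp B$. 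In the arrow case, when $0 \le a \le a'$ in reversed order (i.e.\ $a'=0$ forces $a=0$), the set $\pureterms\to\rcS$ is contained in $\rcR\to\rcS$. When $a=0<a'$, however, $\rcR\to\rcS \subseteq \pureterms\to\rcS$ fails. So we must check the direction: $a\LT a'$ in $\Tropical$ means $a\ge a'$ numerically. If $a'=0$ then $\pureterms\to\redinterp B\subseteq \redinterp{A'}\to\redinterp B$, which is fine. If $a'>0$ then $a>0$ as well, and both sides are $\pureterms\to-$. This works.
\end{itemize}

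The induction then runs as follows.
\begin{itemize}
\item \drule{ax}: the variable has grade $0$, so the hypothesis on its $T_i$ applies directly.
\item \drule{\lto_i}: we show $\lambda x.\forgetFmu M\subst{\vec T}$ applied to any suitable $U$ lands in $\redinterp B$. The standard backward-closure property of \rc's is needed here: if $T\subst U \in\rcS$ and $U\in\HN$ (or $U$ is arbitrary), then $(\lambda x.T)U\in\rcS$. This uses clause \ref{def:rc:clause3} together with a head-expansion argument. Since \rc's only require $\HN$ rather than strong normalization, the argument $U$ is simply discarded or substituted and no normalization of $U$ is needed. This is exactly why the $a>0$ case can take $U\in\pureterms$.
\item \drule{\lto_e}: the context is $\Gamma\+(\Gamma'\*a)$, i.e.\ $\min(\Gamma,\Gamma'+a)$. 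A variable of grade $0$ there has grade $0$ in $\Gamma$, and, if $a=0$, grade $0$ in $\Gamma'$. So when $a=0$ both induction hypotheses apply and we conclude by the definition of $\to$. When $a>0$, $\redinterp{\ito AB}=\pureterms\to\redinterp B$, and the argument needs no hypothesis.
\item \drule{\fatype_i}, \drule{\fatype_e}: these use the intersection over \rc's and the substitution lemma.
\item \drule{fold}, \drule{unfold}: immediate, since $\redinterp{\rectype\alpha.A}=\redinterp{A\tsubst{\rectype\alpha.A}}$ and $\forgetFmu{-}$ erases fold and unfold.
\item \drule{sub}: follows from monotonicity.
\end{itemize}

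The main obstacle is making the interpretation well founded together with the substitution lemma in the presence of both $\fatype$ and guarded $\rectype$. The difficulty is that substituting a type for $\alpha$ can create new unguarded parts. I would first try the measure "the tree of $A$ truncated at positive-grade arrows", which is finite thanks to the guard condition. I would then prove the substitution lemma by induction on this truncation of $A$, treating $B$ as opaque. The backward-closure property for $\beta$-expansion with a non-normalizing argument also deserves care: the proof goes through head reduction, $(\lambda x.T)U\hred T\subst U$, followed by clause \ref{def:rc:clause3}, which is applicable because the application is not an abstraction.
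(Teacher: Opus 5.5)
Your route is the paper's. You use the same strengthened statement (the paper writes $\redinterp{C_i}[a_i]$ for ``$\redinterp{C_i}$ if $a_i=0$, $\pureterms$ otherwise''), the same induction on the derivation, clause~(3) for \drule{\lto_i}, and the substitution fact for \drule{\fatype_e}. Your discussion of well-definedness and your handling of \drule{sub}, which the paper's proof does not treat, are welcome additions.

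However, in the \drule{\lto_e} case, which the paper singles out as the key one, your justification runs in the wrong direction. You claim that a variable of grade $0$ in $\min(\Gamma,\Gamma'+a)$ has grade $0$ in $\Gamma$, and in $\Gamma'$ if $a=0$. This is false: $\min(b_i,c_i+a)=0$ only says that one of the two is $0$. For instance, $x_i$ may be absent from $\Gamma$ (grade $\infty$) and have grade $0$ in $\Gamma'$. It is also not the fact you need. A variable of grade $0$ in the conclusion comes with $T_i\in\redinterp{C_i}$, which meets the requirement of either premise whatever its grade there. The variables that must be controlled are those of \emph{positive} grade in the conclusion, whose $T_i$ is arbitrary; you must show they cannot have grade $0$ in a premise. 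That is the converse implication. From $a_i=\min(b_i,c_i+a)$ one gets $a_i\le b_i$, and $a_i\le c_i$ when $a=0$ (numerically). Hence $\redinterp{C_i}[a_i]\subseteq\redinterp{C_i}[b_i]$ and, for $a=0$, $\redinterp{C_i}[a_i]\subseteq\redinterp{C_i}[c_i]$, which is exactly the paper's observation. The first inclusion is also needed in your subcase $a>0$, since the induction hypothesis on the function premise is still used there.

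There are two smaller points.
\begin{itemize}
\item In \drule{\fatype_i} the ingredient is not the substitution lemma. It is the fact that $\redinterp{C}$ does not depend on $\rho(\alpha)$ when $\alpha\notin\fv(C)$, applied to the $C_i$ with $a_i=0$.
\item Your monotonicity lemma is only checked on arrows, and its $\rectype$ case does not follow by unfolding. When $\alpha$ occurs contravariantly, $A\subtype A'$ does not yield $A\tsubst{\rectype\alpha.A}\subtype A'\tsubst{\rectype\alpha.A'}$; take $A=\ito[2]\alpha\beta$ and $A'=\ito[1]\alpha\beta$. Instead, prove monotonicity for all environments. In the $\rectype$ case, rewrite both sides with your substitution lemma and use that $\redinterp{A}$ does not depend on $\rho(\alpha)$ when $\bangindex{A}(\alpha)>0$.
\end{itemize}
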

	
	\begin{proof}
	As customary we prove a more general result,
	which in our setting is as follows:
	\begin{quote}
		If $x_1:\bang[a_1]C_1, \dots, x_n:\bang[a_n]C_n \vdash M:A$,
		$\rho$ is an environment, and
		\[	\text{for all $i \in [1,t]$,}\ 
			T_i \in \redinterp{C_i}[a_i] \eqdef \begin{cases}
			\redinterp{C_i} & \text{if $a_i = 0$,} \\
			\pureterms & \text{otherwise,}
			\end{cases} \]
		then $\forgetFmu M \subst[\vec x]{\vec T}
		\eqdef \forgetFmu M \subst[x_1]{T_1} \cdots \subst[x_n]{T_n}
		\in \redinterp A$.
	\end{quote}
	(Then our goal is the particular case
	where $T_i \eqdef x_i$ and $\rho \eqdef \rho_\HN$.)
	We do so by induction on $M$, or equivalently on the derivation
	$x_1:\bang[a_1]C_1, \dots, x_n:\bang[a_n]C_n \vdash M:A$.
	We write $\Gamma$ for $x_1:\bang[a_1]C_1, \dots, x_n:\bang[a_n]C_n$.
	
	\begin{itemize}
	\item If the last rule of the derivation is \drule{ax},
		with conclusion $\Gamma \vdash x_i:C_i$
		for some $i \in [1,n]$ such that $a_i = 0$,
		then $\forgetFmu{x_i} \subst[\vec x]{\vec T}
		= T_i \in \redinterp{C_i}[0] = \redinterp{C_i}$.
	
	\item If the last rule of the derivation is \drule{\lto_i}
		with conclusion $\Gamma \vdash \abs x[A].M : \ito AB$,
		then we need to show that
		$\forgetFmu{\abs x[A].M} \subst[\vec x]{\vec T}
		= \abs x.\forgetFmu M \subst[\vec x]{\vec T}
		\in \redinterp{\ito AB}
		= \set{ U \in \pureterms }
			[ \forall V \in \redinterp{A}[a],\ UV \in \redinterp B ]$.
		Take $V \in \redinterp{A}[a]$, then
		$(\abs x.\forgetFmu M \subst[\vec x]{\vec T}) V
		\hred \forgetFmu M \subst[\vec x]{\vec T} \subst V
		\in \redinterp B$ by the induction hypothesis.
		We conclude by Definition 3.8, clause (3).
		
	\item The key case is when
		the last rule of the derivation is \drule{\lto_e}, \ie
		\[	\begin{prooftree}
			\hypo{ x_1:\bang[b_1]C_1, \dots, x_n:\bang[b_n]C_n
				\vdash M : \ito A B }
			\hypo{ x_1:\bang[c_1]C_1, \dots, x_n:\bang[c_n]C_n
				\vdash N:A }
			\infer2[\lto_e]{ \Gamma \vdash MN:B  }
			\end{prooftree} \]
		such that $\forall i \in [1,n],\ a_i = \min(b_i,c_i+a)$.
		Notice that some of the $b_i$ (\resp $c_i$)
		may be equal to $\infty$
		when the corresponding $x_i$ is not in the support of the context
		of the left hypothesis (\resp right one).
		Take an environment $\rho$ and, for each $i \in [1,n]$,
		a term $T_i \in \redinterp{C_i}[a_i]$.
		Observe that $a_i \leq b_i$ implies that
		$\redinterp{C_i}[a_i] \subseteq \redinterp{C_i}[b_i]$,
		giving rise to an induction hypothesis on the left sub-derivation:
		$\forgetFmu M \subst[\vec x]{\vec T}
		\in \redinterp{\ito AB}
		= \set{ U \in \pureterms }
			[ \forall V \in \redinterp{A}[a],\ UV \in \redinterp B ]$.
		\begin{itemize}
		\item If $a = 0$, similarly we have $a_i \leq c_i$,
			giving rise to an induction hypothesis 
			on the right sub-derivation:
			$\forgetFmu N \subst[\vec x]{\vec T} \in \redinterp A
			= \redinterp{A}[a]$.
		\item If $a > 0$, $\forgetFmu N \subst[\vec x]{\vec T}
			\in \pureterms = \redinterp{A}[a]$ is immediate.
		\end{itemize}
		In both cases we can conclude that
		$\forgetFmu{MN} \subst[\vec x]{\vec T}
		= \forgetFmu M \subst[\vec x]{\vec T}
			\forgetFmu N \subst[\vec x]{\vec T}
		\in \redinterp B$.
	
	\item If the last rule of the derivation is \drule{\fatype_i}
		with conclusion $\Gamma \vdash \typeabs α.M : \fatype α.B$,
		then we need to show that
		$\forgetFmu{\typeabs α.M} \subst[\vec x]{\vec T}
		= \forgetFmu M \subst[\vec x]{\vec T}
		\in \redinterp{\fatype α.B}
		= \bigcap_{\rcR} \redinterp[\rho \redsubst \rcR] B$.
		Take an environment $\rho$ and, for each $i \in [1,n]$,
		a term $T_i \in \redinterp{C_i}[a_i]$.
		Fix an \rc $\rcR$. For all $i \in [1,n]$:
		\begin{itemize}
		\item If $a_i = 0$, consider the following fact:
			\begin{quote}
				For all $T \in \pureterms$, $C \in \fmutypes$,
				environment $\rho$,	$\alpha \in \Atoms$ and \rc $\rcR$, \\
				if $T \in \redinterp C$ and $α \notin \fv(C)$
				then $T \in \redinterp[\rho \redsubst \rcR] C$.
			\end{quote}
			(It can be proved by an immediate induction over $C$.)
			Using the hypothesis $α \notin \fv(\Gamma) \supseteq \fv(C_i)$,
			we obtain $T_i \in \redinterp[\rho \redsubst \rcR]{C_i}
			= \redinterp[\rho \redsubst \rcR]{C_i}[a_i]$.
		\item If $a_i > 0$, then
			$T_i \in \redinterp{C_i}[a_i] = \pureterms
			= \redinterp[\rho \redsubst \rcR]{C_i}[a_i]$.
		\end{itemize}
		Hence we can apply the induction hypothesis to $\rho \redsubst \rcR$
		and $T_1,\dots,T_n$, obtaining
		$\forgetFmu M \subst[\vec x]{\vec T} 
		\in \redinterp[\rho \redsubst \rcR] B$.
	
	\item If the last rule of the derivation is \drule{\fatype_e}
		with conclusion $\Gamma \vdash MA : B \tsubst A$,
		consider the following fact:
		\begin{quote}
			For all $A,B \in \fmutypes$, $α \in \Atoms$ and environment 
			$\rho$,
			$\redinterp{ B \tsubst A }
			= \redinterp[\rho \redsubst{\redinterp A}] B$.
		\end{quote}
		(It can be proved by an immediate induction over $B$.)
		Take an environment $\rho$ and, for each $i \in [1,n]$,
		a term $T_i \in \redinterp{C_i}[a_i]$.
		By first the induction hypothesis, then the above fact,
		$\forgetFmu{MA} \subst[\vec x]{\vec T}
		= \forgetFmu M \subst[\vec x]{\vec T}
		\in \redinterp{\fatype α.B}
		= \bigcap_{\rcR} \redinterp[\rho \redsubst \rcR] B
		\subseteq \redinterp[\rho \redsubst{\redinterp A}] B
		= \redinterp{ B \tsubst A }.$
	
	\item If the last rule of the derivation is \drule{fold}
		with conclusion $\Gamma \vdash \fold M : \rectype α.A$,
		then for all environment $\rho$
		and $T_i \in \redinterp{C_i}[a_i]$ for $i \in [1,n]$,
		the induction hypothesis yields
		$\forgetFmu{\fold M} \subst[\vec x]{\vec T}
		= \forgetFmu M \subst[\vec x]{\vec T}
		\in \redinterp{A \tsubst{\rectype α.A}}
		= \redinterp{\rectype α.A}$.
	
	\item If the last rule of the derivation is \drule{unfold}
		with conclusion $\Gamma \vdash \unfold M : A \tsubst{\rectype α.A}$,
		then for all environment $\rho$
		and $T_i \in \redinterp{C_i}[a_i]$ for $i \in [1,n]$,
		the induction hypothesis yields
		$\forgetFmu{\unfold M} \subst[\vec x]{\vec T}
		= \forgetFmu M \subst[\vec x]{\vec T}
		\in \redinterp{\rectype α.A}
		= \redinterp{A \tsubst{\rectype α.A}}$.
	\qedhere
	\end{itemize}
	\end{proof}

%****************************************************************************

\section{Proof of Lemma 4.12}

\begin{lemma} \label{lem:dist-is-ultrametric}
	Definitions 4.10 and 4.11 equip
	$\itypes[\msets\Tropical]$ and $\itypes![\msets\Tropical]$
	with a complete ultrametric structure
	such that each type (\resp multi-type)
	is the limit of a Cauchy sequence of finite types (\resp multi-types).
\end{lemma}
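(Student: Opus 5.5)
We read \cref{def:dist-on-itypes} with the intended convention: $\dist(\alpha,\beta)=0$ if $\alpha=\beta$ and $1$ otherwise. The plan is to reduce everything to \emph{truncations}. For $d\in\Nat$ we define, by induction on $d$ and on the inductive part of the formation rules, the truncation $\trunc\sigma$ of a type and $\trunc{\ms m}$ of a multi-type:
\begin{itemize}
\item $\trunc\alpha\eqdef\alpha$ and $\trunc{\ms m\lto\sigma}\eqdef\trunc{\ms m}\lto\trunc\sigma$;
\item $\trunc{\ms m}$ keeps, for each pair $(\sigma,a)\in\mscur m$ with $a\leq d$, the pair $(\trunc[d-a]\sigma,a)$, and discards all pairs with $a>d$.
\end{itemize}
This is well founded. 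A recursive call with $a=0$ goes through an \drule{ind} node, and there are only finitely many such nested nodes, because the coinductive rule \drule{coind} demands $\ms a(0)=0$. A call with $a\geq1$ strictly decreases $d$. Hence $\trunc\sigma$ is always a \emph{finite} type with grades in $[0,d]$. One concern is that $\trunc{\ms m}$ must again be a multi-type with finitely many occurrences below each grade. I would handle this by working with the finitary multi-types of the complexity section; if that fails, I would allow infinite multiplicities in the truncation.

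The central step is the characterization
\[\dist(\sigma,\tau)<2^{-d}\iff\trunc\sigma=\trunc\tau,\]
and its analogue for multi-types. In the same breath, I would justify that \cref{def:bottleneck-dist,def:dist-on-itypes} define a unique map. The definition is a fixed-point equation $\dist=\Phi(\dist)$. Along \drule{ind} steps (grade $0$) it is a finite induction. Along \drule{coind} steps, the cost factor $2^{-a}$ with $a\geq1$ halves distances, so $\Phi$ is a contraction on bounded maps for the sup norm after unfolding the finite inductive layer. Banach's theorem then gives a unique solution, and the $k$-th iterate already determines $\dist$ up to $2^{-k}$.

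The truncation equivalence is proved by induction on $d$, with an inner induction on the inductive depth. The $\lto$ case is immediate from the $\max$. For multi-types, suppose $\trunc{\ms m}=\trunc{\ms n}$. Match each pair $(\sigma,a)$ with $a\leq d$ to a pair $(\tau,a)$ with the same truncation; the induction hypothesis gives cost $2^{-a}\dist(\sigma,\tau)<2^{-a}2^{-(d-a)}=2^{-d}$. Match pairs with $a>d$ arbitrarily, or delete and insert them, at cost at most $2^{-(d+1)}$. Since these are finitely many bottleneck costs, the infimum is attained and stays $<2^{-d}$. Conversely, take a plan of cost $<2^{-d}$. It cannot delete or insert any pair with grade $\leq d$, and it cannot match pairs with different grades $\leq d$, because $\dist(a,b)=2^{-\min(a,b)}$. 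So it matches equal grades $a\leq d$ with cost $<2^{-d}$, and the induction hypothesis yields equal truncations. The main obstacle is this step, specifically showing that the infimum is achieved by an actual optimal plan. I would first try a compactness/König argument: with finitely many pairs below each grade threshold, only finitely many plans are relevant up to that threshold.

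From the equivalence, the remaining claims follow quickly.
\begin{itemize}
\item \emph{Separation.} $\dist(\sigma,\tau)=0$ forces all truncations to agree. Two coinductive types with equal truncations at every depth are equal, by a bisimulation argument on the formation rules.
\item \emph{Symmetry.} This is obvious from the definition.
\item \emph{Strong triangle inequality.} Equality of truncations is transitive, so $\dist(\sigma,\rho)<2^{-d}$ and $\dist(\rho,\tau)<2^{-d}$ give $\dist(\sigma,\tau)<2^{-d}$. This suffices once we know that distances take values in $\set{0}\cup\set{2^{-d}}[d\in\Nat]$, which the equivalence also gives.
\item \emph{Density.} Each $\trunc\sigma$ is finite and agrees with $\sigma$ up to depth $d$, so $\dist(\trunc\sigma,\sigma)<2^{-d}$. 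Hence $(\trunc[n]\sigma)_n$ is a Cauchy sequence of finite types converging to $\sigma$.
\item \emph{Completeness.} Let $(\sigma_n)$ be a Cauchy sequence. For each $d$, the truncations $\trunc{\sigma_n}$ are eventually constant, equal to some $t_d$, and these satisfy $\trunc[d]{t_{d+1}}=t_d$. We build the limit coinductively as the unique type whose $d$-truncation is $t_d$ for every $d$. A coinductive construction is legitimate here because every unfolding through positive grades strictly descends in $d$. We must also check that the limit's multi-types are genuine $\msets\Tropical$-sets, respecting the guard $\ms a(0)=0$ on coinductive occurrences; this holds because grade-$0$ layers are stable once $d\geq0$.
\end{itemize}
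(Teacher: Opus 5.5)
Your architecture largely matches the paper's. You use the same depth-$d$ truncation and the same characterization $\dist(\sigma,\tau)<2^{-d}$ iff $\trunc\sigma=\trunc\tau$, proved by the same case analysis on plans. Separation and density are read off from it, and the limit is assembled level by level for completeness; your compatible family $(t_d)_d$ is a more explicit version of the paper's sketch based on $\dist(\sigma_n,\sigma_N)<\frac12$.

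The genuine difference is the strong triangle inequality. The paper proves it for multi-types by composing transport plans $\rho\circ\pi$ and appealing to a \enquote{(co)induction hypothesis} for types. You derive it from the truncation characterization, since equality of truncations is transitive. Your route avoids the bookkeeping of deletions and insertions under composition. Together with your Banach argument, which the paper omits entirely, it also removes the circularity of a \enquote{(co)inductive} triangle inequality on non-well-founded types: everything rests on one lemma proved by well-founded induction on $d$ and inductive rank. The paper's route, in exchange, is the standard argument for bottleneck distances and does not depend on which values the costs take.

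That last point matters, because your triangle argument needs all distances to lie in $D\eqdef\set{0}\cup\set{2^{-k}}[k\in\Nat]$. This does \emph{not} follow from the equivalence: a value $3/8$ would be consistent with it. Get it from your fixed-point construction instead. Every elementary cost lies in $D$, and $D$ is closed under arbitrary $\sup$ and $\inf$ and under multiplication by $2^{-a}$. So $\Phi$ preserves the $D$-valued maps, which form a closed set for the sup norm, and the fixed point is $D$-valued.

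Conversely, the step you single out as the main obstacle is not one. By definition, $\inf_\pi\max_p\cost(p)<2^{-d}$ already yields a plan of bottleneck cost $<2^{-d}$, so no optimal plan is needed. The compactness argument belongs elsewhere: in your bisimulation for separation, and when picking threads in the completeness construction. There you need a \emph{single} matching of the grade-$a$ occurrences that preserves truncations at all depths at once. The sets of matchings preserving $\trunc[d-a]{\cdot}$ form a decreasing chain of nonempty finite sets, under your finitarity assumption, so their intersection is nonempty. Your finitarity worry is legitimate; the paper assumes it silently.

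One caveat is shared with the paper. An occurrence $(\sigma,\infty)$ is deleted at cost $2^{-\infty}=0$, so $\dist([\gr{\sigma}{[\infty]}],[])=0$. Separation therefore requires excluding $\infty$ from grade multisets, or quotienting such occurrences away.
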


To prove this, let us introduce the following definition
and characterization.

\begin{definition}
	The \emph{truncation at depth $d$} of a type or a multi-type
	is defined for all $d \in \Tropical$ by induction by:
	\begin{gather*}
		\trunc\alpha \eqdef \alpha \qquad
		\trunc{\ms m \lto \sigma} \eqdef \trunc{\ms m} \lto \trunc\sigma \\
		\trunc{\ms m} : \tau \mapsto \left( a \mapsto \begin{cases}
			\displaystyle\sum_{\substack{ 
				\sigma \in \itypes![\msets\Tropical] \\ 
				\trunc[d-a]\sigma = \tau
			}} \ms m(\sigma)(a) & \text{if $a \leq d$,} \\
			0 & \text{otherwise}
		\end{cases} \right).
	\end{gather*}
	
	The definition is by induction because each recursive call
	either corresponds to an inductive formation rule ($a = 0$),
	or makes the depth $d$ decrease ($1 \leq a \leq d$):
	the definition says that
	if $\sigma$ bears grade $a \leq d$ in $\ms m$
	(rigorously, if $a$ is among the multiset of grades borne by $\sigma$
	in $\ms m$),
	then $\trunc[d-a]\sigma$ bears grade $a$ in $\trunc{\ms m}$;
	all grades above $d$ in $\ms m$ are forgotten.
	\qed
\end{definition}

\begin{lemma} \label{lem:dist-iff-trunc}
	For all $\sigma, \tau \in \itypes[\msets\Tropical]$
	and $d \in \Tropical$,
	\[	\dist(\sigma, \tau) < 2 ^{-d}
		\quad\text{\ifandonlyif}\quad
		\trunc\sigma = \trunc\tau, \]
	and similarly for all $\ms m,\ms n \in \itypes![\msets\Tropical]$.
\end{lemma}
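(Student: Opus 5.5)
We would prove the statement for types and multi-types simultaneously. We read $\dist(\alpha,\beta)$ as $0$ when $\alpha=\beta$ and $1$ otherwise, since the two cases in \cref{def:dist-on-itypes} appear to be swapped by a typo. We restrict to finite $d \in \Nat$: for $d=\infty$ we have $2^{-\infty}=0$, so the left side of the equivalence is never true, while $\trunc[\infty]{-}$ is the identity. That case therefore has to be understood as the separation property ($\dist(\sigma,\tau)=0$ iff $\sigma=\tau$), which should be proved afterwards from the finite case.

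For finite $d$ the argument is an induction following the well-founded recursion that defines $\trunc{-}$. The measure is the pair $(d,h)$ ordered lexicographically, where $h$ is the height of the inductive (grade-$0$, rule \drule{ind}) part above the current node. Recursive calls either keep $d$ and go through an inductive rule, which decreases $h$, or go through an element of grade $a \geq 1$, which replaces $d$ by $d-a<d$. Elements of grade $a>d$ are not visited at all.

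The type cases are immediate. For two atoms, both sides say $\alpha=\beta$. For an atom against an arrow, the distance is $1$ and the truncations have different head constructors. For arrows, $\max(\dist(\ms m,\ms n),\dist(\sigma,\tau))<2^{-d}$ holds iff both components are $<2^{-d}$, and $\trunc{\ms m\lto\sigma}=\trunc{\ms n\lto\tau}$ holds iff both components' truncations agree; we then apply the induction hypothesis to each component.

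The core is the multi-type case, where we analyse when a single cost is $<2^{-d}$:
\begin{itemize}
\item deletion or insertion of $(\sigma,a)$ costs $2^{-a}$, which is $<2^{-d}$ iff $a>d$;
\item matching $(\sigma,a)$ with $(\tau,b)$ for $a\neq b$ costs $2^{-\min(a,b)}$, which is $<2^{-d}$ iff $a,b>d$;
\item matching with $a=b\le d$ costs $2^{-a}\dist(\sigma,\tau)$, which is $<2^{-d}$ iff $\dist(\sigma,\tau)<2^{-(d-a)}$, iff $\trunc[d-a]\sigma=\trunc[d-a]\tau$ by the induction hypothesis.
\end{itemize}
All costs lie in the discrete set $\set{0}\cup\set{2^{-k}}[k\in\Nat]$. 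Hence $\inf_\pi \max_p \cost(p) < 2^{-d}$ iff some pairing $\pi$ has \emph{every} cost $\le 2^{-(d+1)}$; the infimum and maximum raise no attainment issue.

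For the forward direction, take such a $\pi$. Every element of grade $a\le d$ on either side is matched, and only to an element of the same grade whose truncation at depth $d-a$ is the same. So $\pi$ restricts, for each $a\le d$ and each truncation class, to a bijection between the occurrences in $\mscur m$ and in $\mscur n$. Summing multiplicities over each class gives $\trunc{\ms m}=\trunc{\ms n}$.

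For the converse, equality of the truncations gives, for each $a\le d$ and each truncated type $\tau$, equal counts of occurrences in $\mscur m$ and $\mscur n$ whose truncation is $\tau$. We choose a bijection inside each class, match accordingly, and delete or insert every occurrence of grade $>d$. By the list above, every resulting cost is $<2^{-d}$.

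The step we expect to be the main obstacle is justifying the counting argument and the well-foundedness of the induction simultaneously. Multi-types may have infinitely many elements, so a class-wise sum $\sum \ms m(\sigma)(a)$ might be infinite. We would handle this by working with multiplicities in $\Nat\cup\set\infty$ and using bijections between possibly infinite sets of occurrences, which still exist when cardinalities match. For countable supports this is harmless; alternatively, we would restrict to finitary multi-types, as the paper does in \cref{sec:intersec:sec:complexity}. For well-foundedness, we must also check that the induction on $h$ is legitimate, i.e.\ that every grade-$0$ element is formed with \drule{ind}. This holds because the rule \drule{coind} requires $\ms a(0)=0$.
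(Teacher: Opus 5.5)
Your proposal is correct and follows the same route as the paper's chain of equivalences. You classify each deletion, insertion or matching as costing less than $2^{-d}$ exactly when the grades exceed $d$, or when the grades are equal, $a\le d$, and the $(d-a)$-truncations agree by induction; you then read equality of truncations off the resulting grade- and class-preserving bijection. The extra care you add fills in points the paper's proof glosses over: fixing the swapped atom distance, excluding $d=\infty$, using discreteness of costs to pass from the infimum to a single pairing, spelling out the counting step, and making the well-founded measure explicit.
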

	
	\begin{proof}
	The only difficult case is the one of
	$\ms m,\ms n \in \itypes![\msets\Tropical]$:
	\begin{align*}
		&&& \dist(\ms m, \ms n) < 2^{-d} \\
		\iff &&& \exists \pi \in \pairings{\ms m}{\ms n},\ 
			\forall ((\sigma,a),(\tau,b)) \in \totpairing{\pi},\ 
			\cost((\sigma,a),(\tau,b)) < 2^{-d} \\
		\iff &&& \exists \pi \in \pairings{\ms m}{\ms n},\ 
			\begin{cases}
			\forall ((\sigma,a),(\tau,b)) \in \pi,\ 
				\begin{cases}
				\dist(a,b) < 2^{-d} & \text{if $a \neq b$} \\
				2^{-a} \dist(\sigma,\tau) < 2^{-d} & \text{otherwise}
				\end{cases} \\
			\forall (\sigma,a) \in \deletions{\pi},\ 2^{-a} < 2^{-d} \\
			\forall (\tau,b) \in \insertions{\pi},\ 2^{-b} < 2^{-d}
			\end{cases} \\
		\iff &&& \exists \pi \in \pairings{\ms m}{\ms n},\ 
			\begin{cases}
				\forall ((\sigma,a),(\tau,b)) \in \pi,\ 
					\begin{aligned}[t]
					& a,b > d \text{ or } \\
					& a = b \leq d \text{ and }
						\dist(\sigma,\tau) < 2^{-(d-a)} \leq 1
					\end{aligned} \\
				\forall (\sigma,a) \in \deletions{\pi},\ a > d \\
				\forall (\tau,b) \in \insertions{\pi},\ b > d
			\end{cases} \\
		\iff &&& \exists \pi \in \pairings{\ms m}{\ms n},\ 
			\forall ((\sigma,a),(\tau,b)) \in \totpairing{\pi},\ 
			\text{if $a \leq b$ or $b \leq d$ then $a=b$ and
			$\trunc[d-a]\sigma = \trunc[d-b]\tau$} \\
		\iff &&& \trunc {\ms m} = \trunc {\ms n}.
		\qedhere
	\end{align*}
	\end{proof}
	
	\begin{proof}[Proof of \cref{lem:dist-is-ultrametric}]
	First we prove that $\dist$ actually equips $\itypes[\msets\Tropical]$
	and $\itypes![\msets\Tropical]$ with an ultrametric structure.
	\begin{itemize}
	\item The fact that $\dist(\sigma,\tau) = 0 \iff \sigma = \tau$
		is an direct consequence of \cref{lem:dist-iff-trunc}.
		Same for multi-types.
	\item The fact that $\dist(\sigma,\tau) = \dist(\tau, \sigma)$,
		and same for multi-types, is immediate.
	\item The ultrametric triangle inequality is the difficult part.
		In particular the crucial case is again the one of multi-types.
		Let us compute:
		\begin{align*}
		&&& \max(\dist(\ms m, \ms n), \dist(\ms n, \ms o)) \\
		= &&& \max\left( \inf\limits_{\pi \in \pairings{\ms m}{\ms n}}
				\max\limits_{p \in \totpairing{\pi}} \cost(p),\ 
				\inf\limits_{\rho \in \pairings{\ms n}{\ms o}}
				\max\limits_{q \in \totpairing{\rho}} \cost(q)
			\right) \\
		= &&& \inf\limits_{\pi, \rho} \max\limits_{p,q}
			\max(\cost(p), \cost(q)) \\
		\geq &&& \inf\limits_{\pi, \rho} \max\left(
			\begin{aligned}
			& \max\limits_{\substack{ ((\sigma,a),(\tau,b)) \in \pi \\
				((\tau,b),(\upsilon,c)) \in \rho }}
				\max\left( \cost((\sigma,a),(\tau,b)),
					\cost((\tau,b),(\upsilon,c)) \right), \\
			& \max\limits_{\substack{(\sigma,a) \in \deletions{\pi}}}
				\cost((\sigma,a), \bullet),
			\max\limits_{\substack{ ((\sigma,a),(\tau,b)) \in \pi \\
				(\tau,b) \in \deletions{\rho} }}
				\max\left( \cost((\sigma,a),(\tau,b)),
					\cost((\tau,b),\bullet) \right), \\
			& \max\limits_{\substack{ (\tau,b) \in \insertions{\pi} \\
				((\tau,b),(\upsilon,c)) \in \rho }}
				\max\left( \cost(\bullet, (\tau,b)),
					\cost((\tau,b),(\upsilon,c)) \right),
			\max\limits_{\substack{(\upsilon,c) \in \insertions{\rho}}}
				\cost(\bullet, (\upsilon,c))
			\end{aligned} \right) \\
		\intertext{where the first line corresponds to the matchings of the
		composition $\rho \circ \pi$, and the second and thid lines
		respectively to its deletions and insertions;
		elementary computations in each maximum,
		using the ultrametric triangle inequality for
		$\dist$ on $\Tropical$ (fact) and
		$\dist$ on $\itypes[\msets\Semiring]$ ((co)induction hypothesis),
		allow us to deduce:}
		= &&& \inf\limits_{\pi, \rho}
			\max\limits_{((\sigma,a),(\upsilon,c)) \in 
				\totpairing{\rho \circ \pi}}
			\cost((\sigma,a),(\upsilon,c)) \\
		\geq &&& \dist(\ms m, \ms o).
		\end{align*}
	\end{itemize}
	
	Completeness of the ultrametric structure is straightforward:
	if $(\sigma_n)_{n \in \Nat}$ is a Cauchy sequence,
	then there is an $N$ such that for all $n \geq N$,
	$\dist(\sigma_n, \sigma_N) < \frac 12$;
	this allows to build a formation derivation for a limit $\sigma$,
	working by induction in the inductive hypotheses of the formation rules,
	and by coinduction in the coinductive hypotheses that exactly correspond
	to recursive steps where the hypothesis
	$\dist(\sigma_n, \sigma_N) < \frac 12$ is not preserved.
	
	The fact that every type $\sigma$ is the limit of a Cauchy sequence of
	finite types is a consequence of \cref{lem:dist-iff-trunc}:
	it suffices to take the sequence $(\trunc[n]\sigma)_{n\in\Nat}$.
	\end{proof}

%****************************************************************************

\section{Proof of Lemma 4.16}

This is the main technical lemma for subject expansion (Theorem 4.17):

\begin{lemma}[Backward Substitution Lemma]
	For all $\derivD \derives+[n] \Gamma \vdash M \subst N : \sigma$,
	one can build derivations
	$\derivD_{M,N,x}^l \derives+[n] \Gamma_M, x:\ms m \vdash M : \sigma$ and
	$\derivD_{M,N,x}^r \derives \Gamma_N \vdash N : \ms m$
	such that $\Gamma_N : \Var \to \itypes[\msets\Tropical]-[n]$
	and $\ms m \in \itypes[\msets\Tropical]-[n]$,
	$\Gamma = \Gamma_M \msplus \Gamma_N$,
	and for all $\derivD$ and $\derivE$,
	$\dist(\derivD_{M,N,x}^l,\derivE_{M,N,x}^l) \leq \dist(\derivD,\derivE)$
	and
	$\dist(\derivD_{M,N,x}^r,\derivE_{M,N,x}^r) \leq \dist(\derivD,\derivE)$.
\end{lemma}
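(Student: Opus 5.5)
The proof goes by induction on the term $M$, or equivalently on the shape of $\derivD$. The key point is that the rewriting $\derivD \mapsto (\derivD^l_{M,N,x},\derivD^r_{M,N,x})$ is defined structurally, so we get it \emph{deterministically} (a fixed function of $\derivD$). We simultaneously prove the analogous statement for auxiliary judgments $\Gamma \vdash M\subst N : \ms n$ obtained by rule \drule{!}. The output then consists of a family of left derivations and one right derivation for the $\msets\Tropical$-set $\ms m$. Non-expansiveness is checked case by case, alongside the construction.

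\emph{Base cases.} Suppose $M = x$, so $\derivD$ types $N$ itself with $\Gamma \vdash N:\sigma$. We take $\derivD^l$ to be the axiom $x:[\gr{\sigma}{[0]}] \vdash x:\sigma$ with $\Gamma_M$ empty. We take $\derivD^r$ to be the \drule{!} rule with the single premise $\derivD$ at grade $[0]$, so that $\ms m = [\gr\sigma{[0]}]$ and $\Gamma_N = [0]\mstimes\Gamma = \Gamma$. The distances satisfy $\dist(\derivD^l,\derivE^l) = \dist(\sigma,\tau) \leq \dist(\derivD,\derivE)$ by the remark following \cref{lem:dist-on-derivations}. For $\derivD^r$, the bottleneck distance of two singletons with equal grade $[0]$ is exactly $\dist(\derivD,\derivE)$.

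Positivity needs care here. We know $\sigma \in \itypes[\msets\Tropical]+[n]$, but it must appear in \emph{negative} position in $\ms m$, and the statement only asks $\ms m \in \itypes[\msets\Tropical]-[n]$ and $\Gamma_N$ negative. I expect the intended reading is that the constraint on $\ms m$ is the one dictated by the context side, so in this case I would check that $\sigma$ in fact inherits the needed bound. If this fails, I would weaken the claim on $\ms m$ to the stratified level matching $\sigma$.

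Suppose instead $M = y \neq x$. Then we take $\derivD^l = \derivD$ and $\derivD^r$ the empty \drule{!} rule, so $\ms m = []$ and $\Gamma_N$ is empty. The empty multiset is harmless because it sits in negative position inside the context.

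\emph{Inductive cases.} For $M = \lambda y.P$ (with $y$ fresh), we apply the induction hypothesis to the premise and then reapply \drule{\lto_i}; distances are unchanged by \cref{lem:dist-on-derivations}. For $M = PQ$, the derivation $\derivD$ splits into $\derivD_1$ for $P\subst N$ and an auxiliary \drule{!} derivation $\ms E$ for $Q\subst N$, whose premises are $[\gr{(\derivE_i)}{\ms a_i}]_{i}$. We apply the induction hypothesis to $\derivD_1$ and to each $\derivE_i$. This yields multisets $\ms m_1$ and $\ms m^{(i)}$, and we set $\ms m \eqdef \ms m_1 \msplus \mssum_i \ms a_i \mstimes \ms m^{(i)}$. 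The right derivation for $N$ is obtained by flattening the corresponding \drule{!} premises, each scaled by $\ms a_i$. The contexts recombine correctly because $\mstimes$ distributes over $\msplus$ and is associative in $\msets\Tropical$, which is a resource semiring by \cref{def:semiring-of-msets}. Stratified positivity is preserved because multiplying grades does not change the nesting depth of $[]$ occurrences.

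\emph{Main obstacle.} The hardest step is non-expansiveness in the application case, specifically for $\derivD^r$. Its premises form a multiset built from pieces of $\ms E$ whose grades are the products $\ms a_i \mstimes (\text{grade in } \ms m^{(i)})$. I would take an optimal pairing $\pi$ between the premises of $\ms E$ and those of $\ms E'$ and lift it to the flattened multisets. Matched pairs with equal grade $\ms a_i$ are combined with inner pairings given by the induction hypothesis. Deletions, insertions and grade mismatches are handled using the fact that the grades only increase, so their costs $2^{-a}$ only decrease. Each lifted operation must cost at most the cost of the original operation in $\pi$, and the ultrametric maximum then gives the bound. Verifying this is a tedious but monotone computation with the weights $2^{-a}$. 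I would first check it on the truncation characterization from \cref{lem:dist-iff-trunc}: equality of depth-$d$ truncations of $\derivD,\derivE$ implies equality of depth-$d$ truncations of the outputs. This is easier, since truncation commutes with the construction.
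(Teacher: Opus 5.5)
Your construction and non-expansiveness argument are essentially the paper's. The case analysis is the same: the axiom for $\derivD^l$ and \drule{!} applied to $[\gr{\derivD}{[0]}]$ for $\derivD^r$ in the base case, a transparent $\lambda$ case, and recombination with $\msplus$ in the application case. The paper keeps \drule{!} as a separate case of the induction, applying the hypothesis to the whole auxiliary derivation of $Q\subst N$. You inline it through $\ms m_1\msplus\mssum_i\ms a_i\mstimes\ms m^{(i)}$; this is equivalent, and your lifting of pairings is the paper's \enquote{careful decomposition of the involved pairings}. When you write it out, note that pairings act on curried elements $(\derivE_i,a)$, where $a\in\ms a_i$ is a single tropical grade, not on whole $\ms a_i$. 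What makes it work is that adding $a$ to all grades multiplies every matching, deletion and insertion cost by $2^{-a}$. So an inner pairing of cost at most $\dist(\derivE_i,\derivE'_j)$ lifts to one of cost at most $2^{-a}\dist(\derivE_i,\derivE'_j)$, which is the cost of the outer match. A mismatch $a\neq a'$, \resp an outer deletion, is absorbed by deleting and inserting whole blocks at cost at most $2^{-\min(a,a')}$, \resp $2^{-a}$. (Your truncation shortcut would first need a derivation-level analogue of the truncation characterization, which the paper only states for types.)

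The step that does fail is the positivity check you postpone, and no construction can save it. Take $M=x$, $N=\lambda y.z$ and $\derivD\derives+ z:[\gr{\alpha}{[0]}]\vdash N:[]\lto\alpha$, which is positive because that $[]$ sits in negative position. Since the axiom is relevant, $\derivD^l$ must be $x:[\gr{([]\lto\alpha)}{[0]}]\vdash x:[]\lto\alpha$. But $[]\lto\alpha\notin\itypes[\msets\Tropical]-$, so $\ms m\notin\itypes![\msets\Tropical]-[n]$ and $\derivD^l$ is not in $\derives+[n]$ for all large $n$. Hence $\sigma$ does not inherit the bound. A weakening \enquote{to the level of $\sigma$} also makes no sense in the application case, where $\ms m$ gathers the types of arbitrary subderivations.

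Worse, the induction cannot carry $\derives+[n]$ at all. The function premise of $\derivD$, typing $P\subst N:\ms n\lto\sigma$, is positive only if the elements of $\ms n$ are negative, and positivity of $\derivD$ does not give this. For instance, take $M=fx$, the same $N$, $f:[\gr{(\ms n\lto\beta)}{[0]}]$ and $\ms n=[\gr{([]\lto\alpha)}{[0]}]$: the derivation of $f(\lambda y.z):\beta$ is positive, but its premise $f:\ms n\lto\beta$ is not. So your remark on nesting depth under products misses the problem.

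The repair is to prove the construction and non-expansiveness for arbitrary derivations, and recover the only positivity facts that matter afterwards. Since $\msplus$ is union of grade multisets, the supports of $\Gamma_M(y)$ and $\Gamma_N(y)$ are included in that of $\Gamma(y)$, so both contexts are negative at level $n$. Subject expansion needs nothing about $\ms m$, because the expanded derivation of a redex has the same conclusion as the derivation of its contractum. The paper's proof says nothing about positivity. Your suspicion therefore points at an overstatement in the lemma, not at a departure from the paper's method.
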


	\begin{proof}
	The construction of the derivations is standard;
	we emphasize how non-expansiveness is obtained.
	The proof is by induction over $M$ and $\derivD$.
	
	\begin{itemize}
	\item If $M = x$, then we can define
		\[	\derivD_{x,N,x}^l \quad\eqdef\quad \begin{smallprooftree}[center]
				\hypo{\strut}
				\infer1{ x:[\gr{\sigma}{[0]}] \vdash x:\sigma }
				\end{smallprooftree}
			\qquad\qquad
			\derivD_{x,N,x}^r \quad\eqdef\quad \begin{smallprooftree}[center]
				\hypo{ [\gr{\derivD}{[0]}] }
				\infer1{ \Gamma \vdash N:[\gr{\sigma}{[0]}] }
				\end{smallprooftree}. \]
		We have $\dist(\derivD_{x,N,x}^l, \derivE_{x,N,x}^l)
		= \dist(\sigma,\tau) \leq \dist(\derivD, \derivE)$
		and $\dist(\derivD_{x,N,x}^r, \derivE_{x,N,x}^r)
		= \dist(\derivD, \derivE)$.
	\item The case $M = y$ is similarly immediate.
	\item The case $M = λx.P$ is a straightforward induction step.
	\item The case $M = PQ$ starts from
		\[	\derivD \quad = \quad \begin{smallprooftree}[center]
			\hypo{ \derivD' \derives \Gamma' 
				\vdash P \subst N:\ms n \lto \sigma  }
			\hypo{ \derivD'' \derives \Gamma'' \vdash Q \subst N: \ms n }
			\infer2{ \Gamma' \msplus \Gamma'' \vdash (PQ) \subst N : \sigma }
			\end{smallprooftree} \]
		and forms the derivations
		\begin{gather*}	\derivD_{PQ,N,x}^l \quad\eqdef\quad 
				\begin{smallprooftree}[center]
				\hypo{ (\derivD')_{P,N,x}^l \derives \Gamma'_P, x:\ms m_P
					\vdash P:\ms n\lto \sigma }
				\hypo{ (\derivD'')_{Q,N,x}^l \derives \Gamma''_Q, x:\ms m_Q
					\vdash Q:\ms n }
				\infer2{ \Gamma'_P \msplus \Gamma''_Q,
					x:(\ms m_P \msplus \ms m_Q) \vdash PQ:\sigma }
				\end{smallprooftree}
			\\[\topsep]
			\derivD_{PQ,N,x}^r \quad\eqdef\quad 
				\begin{smallprooftree}[center]
				\hypo{ (\derivD')_{P,N,x}^r \msplus (\derivD'')_{Q,N,x}^r }
				\infer1{ \Gamma'_N \msplus \Gamma''_N
					\vdash N : \ms m_P \msplus \ms m_Q }
				\end{smallprooftree}.
		\end{gather*}
		Then we have
		\begin{align*}
			\dist(\derivD_{x,N,x}^l, \derivE_{x,N,x}^l)
			&= \max( \dist((\derivD')_{x,N,x}^l, (\derivE')_{x,N,x}^l),
				\dist((\derivD'')_{x,N,x}^l, (\derivE'')_{x,N,x}^l) ) \\
			&\leq \max( \dist(\derivD', \derivE'),
				\dist(\derivD'', \derivE'') ) \\
			&= \dist(\derivD, \derivE),
		\end{align*}
		as well as
		\begin{align*}
			\dist(\derivD_{x,N,x}^r, \derivE_{x,N,x}^r)
			&= \inf\limits_{\pi \in \pairings
				{ (\derivD')_{P,N,x}^r \msplus (\derivD'')_{Q,N,x}^r }
				{ (\derivE')_{P,N,x}^r \msplus (\derivE'')_{Q,N,x}^r} }
				\max\limits_{p \in \totpairing{\pi}} \cost(p) \\
			&\leq \inf\limits_{\substack{
				\pi \in \pairings {(\derivD')_{P,N,x}^r} 
					{(\derivE')_{P,N,x}^r}\\
				\rho \in \pairings {(\derivD'')_{P,N,x}^r}
					{(\derivE'')_{P,N,x}^r}
				}} \max\limits_{p \in \totpairing{\rho\circ\pi}} \cost(p) \\
			&\leq \max\left( \begin{aligned}
				& \inf\limits_{\pi \in \pairings
					{(\derivD')_{P,N,x}^r} 	{(\derivE')_{P,N,x}^r}}
					\max\limits_{p \in \totpairing{\pi}} \cost(p), \\
				& \inf\limits_{\rho \in \pairings
					{(\derivD'')_{P,N,x}^r} 	{(\derivE'')_{P,N,x}^r}}
					\max\limits_{q \in \totpairing{\rho}} \cost(q)
				\end{aligned} \right) \\
			&= \max( \dist((\derivD')_{x,N,x}^r, (\derivE')_{x,N,x}^r),
					\dist((\derivD'')_{x,N,x}^r, (\derivE'')_{x,N,x}^r) ) \\
				&\leq \max( \dist(\derivD', \derivE'),
					\dist(\derivD'', \derivE'') ) \\
				&= \dist(\derivD, \derivE).
		\end{align*}
	\item The case of the \drule{!} rule follows from a careful decomposition
		of the involved pairings, just as the previous one. \qedhere
	\end{itemize}
	\end{proof}

\end{document}